\newcommand{\ourcomplexity}{$\Oh(\log\frac{n}{k}(\sqrt{k\log k} + \log^3{n}))$\xspace}
\newtheorem{thm}{Theorem}[section]
\newtheorem{fact}[thm]{Fact}
\newtheorem{conjecture}[thm]{Conjecture}
\newtheorem{cor}[thm]{Corollary}
\newtheorem{lemma}[thm]{Lemma}
\newtheorem{observation}[thm]{Observation}
\newtheorem{problem}[thm]{Problem}
\newtheorem{definition}[thm]{Definition}
\newcommand{\Oh}{\mathcal{O}}
\newcommand{\HD}{\operatorname{HD}}
\newcommand{\HS}{\operatorname{Ham}}
\newcommand{\Hs}[3]{\ensuremath{\operatorname{Ham}_{#1,#2}[#3]\xspace}} 
\newcommand{\MM}{\operatorname{MI}}
\newcommand{\NNN}{\mathbf{N}}
\newcommand{\CCC}{\mathbf{C}}
\newcommand{\KKK}{\mathbf{K}}
\newcommand{\Per}{\operatorname{Per}}
\newcommand{\rle}{\operatorname{rle}}
\newcommand{\PP}{\mathcal{P}}
\newcommand{\sm}{\setminus}
\newcommand{\sub}{\subseteq}
\newcommand{\C}{\mathcal{C}}
\newcommand{\floor}[1]{\left\lfloor #1 \right\rfloor}
\newcommand{\ceil}[1]{\left\lceil #1 \right\rceil}
\newcommand{\Fp}{\mathbb{F}_p}
\newcommand{\sk}{\operatorname{sk}}
\crefname{cor}{Corollary}{Corollaries}
\crefname{fact}{Fact}{Facts}
\newcommand*\patchAmsMathEnvironmentForLineno[1]{%
  \expandafter\let\csname old#1\expandafter\endcsname\csname #1\endcsname
  \expandafter\let\csname oldend#1\expandafter\endcsname\csname end#1\endcsname
  \renewenvironment{#1}%
     {\linenomath\csname old#1\endcsname}%
     {\csname oldend#1\endcsname\endlinenomath}}%
\newcommand*\patchBothAmsMathEnvironmentsForLineno[1]{%
  \patchAmsMathEnvironmentForLineno{#1}%
  \patchAmsMathEnvironmentForLineno{#1*}}%
\DeclareMathOperator*{\pl}{polylog}
\title{The streaming $k$-mismatch problem}
\author[1]{Rapha\"el Clifford}
\author[2]{Tomasz Kociumaka}
\author[3]{Ely Porat}
\affil[1]{Department of Computer Science, University of Bristol, United Kingdom}
\affil[ ]{\url{raphael.clifford@bristol.ac.uk}}
\affil[2]{Institute of Informatics, University of Warsaw, Poland}
\affil[ ]{\url{kociumaka@mimuw.edu.pl}}
\affil[3]{Department of Computer Science, Bar-Ilan University, Israel}
\affil[ ]{\url{porately@cs.biu.ac.il}}
\date{}
\begin{document}
\maketitle
\thispagestyle{empty}
\begin{abstract}
We consider the streaming complexity of a fundamental task in approximate pattern matching: the $k$-mismatch problem. It asks to compute Hamming distances between a pattern of length $n$ and all length-$n$ substrings of a text for which the Hamming distance does not exceed a given threshold $k$.  In our problem formulation, we report not only the Hamming distance but also, on demand, the full \emph{mismatch information}, that is the list of mismatched pairs of symbols and their indices.  The twin challenges of streaming pattern matching derive from the need both to achieve small working space and also to guarantee that every arriving input symbol is processed quickly.  

We present a streaming algorithm for the $k$-mismatch problem which uses $\Oh(k\log{n}\log\frac{n}{k})$ bits of space
and spends \ourcomplexity time on each symbol of the input stream, which consists of the pattern followed by the text.
The running time almost matches the classic offline solution~\cite{ALP:2004} and the space usage is within a logarithmic factor of optimal. 
 Our new algorithm therefore effectively resolves and also extends an open problem first posed in FOCS'09~\cite{Porat:09}. En route to this solution, we also give a deterministic $\Oh( k (\log \frac{n}{k} + \log |\Sigma|) )$-bit encoding of all the alignments with Hamming distance at most $k$ of a length-$n$ pattern within a text of length $\Oh(n)$. This secondary result provides an optimal solution to a natural communication complexity problem which may be of independent interest.
\end{abstract}
\setcounter{page}{0}
\newpage
\section{Introduction}

Combinatorial pattern matching has formed a cornerstone of both the theory and practice of algorithm design over a number of decades. Despite this long history, there has been a recent resurgence of interest in the complexity of the most basic problems in the field. This has been partly been fuelled by the discovery of multiple lower bounds conditioned on the hardness of a small set of well-known problems with naive solutions notoriously resistant to any significant improvement~\cite{AW:2014,AWW:2014,Bringmann:2014, BI:SETH:2015, ABW:2015,Bringmann:FOCS:2015,CGLS:2018}.  Pattern matching has also proved to be a rich ground for exploring the time and space complexity of streaming algorithms~\cite{Porat:09,CS:2010,EJS:2010,JPS:2013,BG:2014,CFPSS:2015,CS:2016,CFPSS:2016,EGSZ:2017} and it is this line of research that we follow.

We consider the most basic similarity measure between strings of different lengths: that of computing all Hamming distances between a pattern and equal-length substrings of a longer text.  
In the streaming $k$-mismatch problem, the input strings arrive one symbol at a time and the task is to output the Hamming distance between the pattern and the latest length-$n$ suffix of the text provided that it does not exceed a threshold $k$ specified in advance.

The problem of computing the exact Hamming distances between a pattern and every length-$n$ substring of a text of length $\Oh(n)$ has been studied in the standard offline model for over 30 years. In 1987,  $\Oh(n\sqrt{n\log{n}})$-time solutions were first developed~\cite{DBLP:journals/siamcomp/Abrahamson87,Kosaraju:1987}.  Motivated by the need to find close matches quickly, from there the focus moved to the bounded $k$-mismatch version of the problem. For many years, the fastest solution ran in $\Oh(nk)$ time~\cite{LV:1986a}.  It was not until 2000, when a breakthrough result gave $\Oh(n\sqrt{k\log{k}})$ time~\cite{ALP:2004}. Much later, an $\Oh(k^2 \log k + n\pl n)$-time solution was developed~\cite{CFPSS:2016},
and a recent manuscript~\cite{GU:2017} improves this further to $\Oh((n+k\sqrt{n})\pl n)$ time.

Considered as an online or streaming problem with one text symbol arriving at a time, $k$-mismatch admits a linear-space solution running in $\Oh(\sqrt{k}\log{k} + \log{n})$ worst-case time per arriving symbol, as shown in 2010~\cite{CS:2010}.  The  paper of Porat and Porat in FOCS'09 gave an $\Oh(k^3\pl n)$-space and $\Oh(k^2 \pl n)$-time streaming solution, showing for the first time that the $k$-mismatch problem could be solved in sublinear space for particular ranges of $k$~\cite{Porat:09}.   In SODA'16, this was subsequently improved to $\Oh(k^2\pl n)$ space and $\Oh(\sqrt{k}\log{k}+\pl  n)$ time per arriving  symbol~\cite{CFPSS:2016}.  As we will describe below, our solution tackles a harder version of the $k$-mismatch problem while having nearly optimal space complexity $\Oh(k \pl n)$.

The twin challenges of streaming pattern matching stem from the need to optimise both working space and a guarantee on the running time for every arriving symbol of the text.  
An important feature of the streaming model is that we must account for all the space used and cannot, for example, store a copy of the pattern. 

One can derive a space lower bound for any streaming problem by looking at a related one-way communication complexity problem. The randomised one-way communication complexity of determining if the Hamming distance between two strings is greater than $k$ is known to be $\Omega(k)$ bits with an upper bound of $\Oh(k\log{k})$ bits~\cite{HSZZ:06}.  In our problem formulation, however, we report not only the Hamming distance but also the full \emph{mismatch information}---the list of mismatched pairs of symbols and their indices.  In this situation, one can derive a slightly higher space lower bound of $\Omega(k(\log{\frac{n}{k}} + \log{|\Sigma|}))$ bits\footnote{This follows directly from  the observation that for a single alignment with Hamming distance $k$, there are $\binom{n}{k}$  possible sets of  mismatch  indices and each of the $k$ mismatched symbols requires $\Omega(\log |\Sigma|)$  bits to be represented, where $\Sigma$ denotes  the input alphabet.  From this, we derive the same lower bound for the space required by any streaming $k$-mismatch algorithm. We assume throughout that $|\Sigma|$ is bounded by a polynomial in  $n$.}.  This formulation has been tackled before as a streaming pattern matching problem: an $\Oh(k^2 \log^{10}n /\log{\log{n}})$ space and $\Oh(k\log^{8}{n}/\log{\log{n}})$ time solution was given~\cite{RS:2017}.  Prior to the work we present here, the simple lower bound for a single output combined with the upper bounds presented above used to represent the limits of our understanding of the complexity of this basic problem.   

In this paper, we almost completely resolve both the time and space complexity of the streaming $k$-mismatch pattern matching problem.  Our solution is also the first small-space streaming pattern matching algorithm for anything other than exact matching which requires no offline and potentially expensive preprocessing of the pattern. 
That is, it assumes that the pattern precedes the text in the input stream and processes it in a truly streaming fashion.

\begin{problem}\label{prob:streaming}
	Consider a pattern of length $n$ and a longer text which arrive in a stream one symbol at a time.  The streaming $k$-mismatch problem asks after each arriving symbol 
	of the text whether the current suffix of the text has Hamming distance at most $k$ with the pattern and if so,  it also asks to return the corresponding mismatch  information. 
\end{problem}

Our main result is an algorithm for the streaming $k$-mismatch problem which almost matches the running time of the classic offline algorithm while using nearly optimal working space.  Unlike the fastest previous solutions for streaming $k$-mismatch (see e.g.\@~\cite{Porat:09,CFPSS:2016}), the algorithm we describe also allows us to report the full set of mismatched symbols  (and their indices) at each $k$-mismatch alignment. This gives a remarkable resolution to the complexity of the  streaming $k$-mismatch problem first posed by Porat and Porat in FOCS'09~\cite{Porat:09}.


\begin{restatable}{theorem}{thmstreaming}\label{thm:streaming}
	There exists a streaming $k$-mismatch algorithm which uses $\Oh(k\log{n}\log\frac{n}{k})$ bits of space and takes \ourcomplexity  time per arriving symbol.  The algorithm is randomised and its answers are correct with high probability, that is it errs with probability inverse polynomial~in~$n$. For each reported occurrence, the mismatch information can be reported on demand in $\Oh(k)$ time.
\end{restatable}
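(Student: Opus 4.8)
The plan is to build on the level-based framework for streaming pattern matching (as in \cite{Porat:09,CFPSS:2016}), combining it with a structural dichotomy between \emph{sparse} and \emph{approximately periodic} pattern prefixes, a decodable $k$-mismatch sketch used both to verify candidate alignments and to report mismatches, and a de-amortization of the classic offline algorithm of \cite{ALP:2004}. It suffices to handle a text of length $\Oh(n)$ and restart the algorithm on longer streams. All randomness sits in the choice of fingerprint/sketch parameters, and the high-probability guarantee will follow by a union bound over the $\Oh(n)$ comparisons performed.

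First I would set up the hierarchy. Fix pattern prefixes $P^{(0)},\dots,P^{(r)}=P$ with $|P^{(0)}|=\Theta(k)$ and $|P^{(\ell)}|=\Theta(2^\ell k)$, so $r=\Oh(\log\frac nk)$. We store $P^{(0)}$ explicitly and, for each $\ell$, only an $\Oh(k)$-word $k$-mismatch sketch of $P^{(\ell)}$ and of each segment $P[|P^{(\ell-1)}|+1..|P^{(\ell)}|]$, all built online as the pattern arrives; this is $\Oh(k\log\frac nk)$ words in total. The algorithm maintains, per level, a compact description of the ``alive'' text positions at which $P^{(\ell)}$ occurs with at most $k$ mismatches. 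A position enters at level $0$ --- detected by running the offline algorithm of \cite{ALP:2004} on text blocks of length $\Theta(k)$ against $P^{(0)}$, at amortized cost $\Oh(\sqrt{k\log k})$ per symbol --- is promoted from level $\ell$ to level $\ell+1$ once $|P^{(\ell+1)}|$ further symbols have arrived and its extension is confirmed to keep the mismatch count at most $k$, and is discarded as soon as the count exceeds $k$; the promotions out of level $r$ are exactly the reported occurrences of $P$.

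The crux is bounding the number of alive positions and the per-level promotion cost. Here I would invoke the approximate-periodicity dichotomy: if a length-$m$ string has more than $\Theta(k)$ $k$-mismatch occurrences inside a window of length $2m$, then it lies within Hamming distance $\Oh(k)$ of a substring of $Q^\infty$ for a primitive $Q$ with $|Q|=\Oh(m/k)$, and all but $\Oh(k)$ of those occurrences form an arithmetic progression with difference $|Q|$. In the sparse case there are only $\Oh(k)$ alive positions per level per window ($\Oh(k\log n)$ bits); in the periodic case one stores the progression, the $\Oh(k)$ exceptions, and a single shared description of the mismatch sets --- precisely the regime captured by the $\Oh(k(\log\frac nk+\log|\Sigma|))$-bit encoding promised in the abstract --- so over the $r$ levels the working space is $\Oh(k\log n\log\frac nk)$ bits. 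For the promotions, rather than keeping one sketch per candidate I would maintain a single running text sketch that is additive over concatenation and ``subtractive'' over prefixes (a Karp--Rabin fingerprint over $\Fp$ augmented with $\Oh(k)$ power sums, maintained at $\Oh(\log^3 n)$ amortized cost per symbol via a hierarchical scheme), derive each candidate's extension-window syndrome from it together with the precomputed pattern-segment sketch, and either test the mismatch count by Reed--Solomon-type syndrome decoding or, for windows with many candidates, batch them into a single ALP-style sweep, keeping the amortized cost at $\Oh(\sqrt{k\log k}+\log^3 n)$ per symbol per level. The decoding recovers all mismatch positions and symbols in $\Oh(k)$ time once a syndrome is formed, which gives the ``on demand in $\Oh(k)$ time'' clause. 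A standard de-amortization --- buffering the text and spreading each block's ALP computation and each level's bookkeeping over the next $\Theta(|P^{(\ell)}|)$ arrivals --- converts the amortized bounds into the stated worst-case per-symbol bound, with the extra $\log\frac nk$ factor being the number of levels.

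I expect the main obstacle to be the interaction between streaming the pattern --- so the sketches, the approximate period $Q$, and the exception lists at every level must be discovered online with no preprocessing, and in particular the long, incompressible prefixes can never be stored outright --- and maintaining simultaneously, across all levels and inside the $\Oh(k\log n\log\frac nk)$-bit budget, a representation of the alive positions that supports worst-case sublinear updates, compresses to an arithmetic progression in the periodic regime yet still allows individual members to be promoted, and never over-counts. The delicate point is that when $P$ itself is approximately periodic, many levels can be in the periodic regime at once; one must show that the period structure propagates coherently up the hierarchy, that a position compressed away at level $\ell$ can be faithfully revived and re-checked at level $\ell+1$, and that all the compressed representations at all levels still fit. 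Establishing this propagation, and verifying that every de-amortized time account closes, is where most of the technical effort will go.
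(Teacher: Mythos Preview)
Your high-level architecture---geometrically growing prefixes, a decodable $k$-mismatch sketch for verification, and an approximate-periodicity dichotomy---is right, and largely matches the paper. The paper, however, sidesteps the obstacle you flag at the end by placing the dichotomy \emph{once}, at the base of the hierarchy, rather than at every level: it takes $P_0$ to be the \emph{longest} prefix of $P$ with a $(2k{+}1)$-period at most $k$ (computed streaming via \cref{lem:perpref}). If $|P_0|>n-2k$, the whole pattern is handled by the periodic-case algorithm of \cref{thm:small-period}; otherwise $P_0$ is a proper prefix, and since appending one symbol destroys every $(2k{+}1)$-period $\le k$, $P_0$ (and every longer $P_\ell$) has no $2k$-period $\le k$, so by \cref{fct:per} their $k$-mismatch occurrences are more than $k$ positions apart. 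Level~$0$ runs \cref{thm:small-period} on $P_0$; every higher level is automatically in the sparse regime, and your worry about the periodic structure propagating coherently through many levels never materialises.

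The genuine gap is in your space accounting at the higher levels. To promote an alive position $j$ from level $\ell$ to $\ell{+}1$ once $T[j{+}|P_{\ell+1}|{-}1]$ arrives, a single rolling text sketch is not enough: you also need $\sk_k(T[0,\ldots,j{-}1])$ (equivalently, the mismatch list at $j$), and that is $\Theta(k\log n)$ bits \emph{per} alive position. Whether the window holds $\Theta(k)$ positions (your sparse case) or $\Theta(|P_\ell|/k)$ positions (the paper's setting), naive storage blows the budget. The paper's central technical contribution is a component that fixes exactly this: the \emph{buffer} of \cref{thm:buffer}, built on the encoding of \cref{lem:key} together with some sketch algebra, stores the entire stream of $k$-mismatch information of $P_\ell$---positions, mismatch lists, \emph{and} the prefix sketches---across a window of length $\Theta(|P_\ell|)$ in only $\Oh(k\log n)$ bits and releases each entry after a prescribed delay. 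You invoke the $\Oh(k(\log\tfrac{n}{k}+\log|\Sigma|))$-bit encoding only for your periodic case, but in the paper it is precisely what makes the \emph{non-periodic} levels fit the budget (overlapping $k$-mismatch occurrences induce $2k$-periods regardless, which is why \cref{lem:key} applies there). Without an analogue of \cref{thm:buffer}, the $\Oh(k\log n\log\frac{n}{k})$-bit bound does not close.
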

While processing the pattern, our streaming algorithm works under the same restrictions on space consumption and per-symbol running time as when processing the text.
This is in contrast to previous work on streaming approximate pattern matching which has typically included a potentially time- and space-inefficient offline preprocessing stage.  

In order to achieve our time and space improvements, we develop a number of new ideas and techniques which we believe may have applications more broadly. The first is a randomised $\Oh(k\log{n})$-bit sketch which allows us not only to detect if two strings of the same length have Hamming distance at most $k$ but if they do, also to report the related mismatch information.  The sketch we give has a number of desirable algorithmic properties, including the ability to be efficiently maintained subject to concatenation and prefix removal.

Armed with such a rolling sketch, one approach to the $k$-mismatch streaming problem could simply be to maintain the sketch of the length-$n$ suffix of the text and to compare it to the sketch of the whole pattern.  Although this takes $\Oh(k\pl n)$ time per arriving symbol, it would also require $\Oh(n\log |\Sigma|)$ bits of space to retrieve the leftmost symbol that has to be removed from the sliding window at each new alignment.  
This is the central obstacle in streaming pattern matching which has to be overcome.


Following the model of previous work on streaming exact pattern matching (see~\cite{Porat:09,BG:2014}), we introduce a family of $\Oh(\log n)$ prefixes $P_\ell$ of the pattern with exponentially increasing lengths. We organise the algorithm into several \emph{levels}, with the $\ell$th level responsible for finding the $k$-mismatch occurrences of $P_\ell$.
The task of the next level is therefore to check,
 after $|P_{\ell+1}|-|P_{\ell}|$ subsequent symbols are read, which of these occurrences can be extended to $k$-mismatch occurrences of $P_{\ell+1}$.
The key challenge is that in the meantime the $k$-mismatch occurrences of $P_{\ell}$ have to be stored in a space-efficient way.
In the exact setting, their starting positions form an arithmetic progression,
but the presence of mismatches leads to a highly irregular structure.

The task of storing $k$-mismatch occurrences of a pattern in a space-efficient representation
can be expressed in terms of a natural communication problem which is of independent interest.
 
\begin{problem}\label{prob:one}
Alice has a pattern $P$ of length $n$ and a text $T$ of length $\Oh(n)$. She sends one message to Bob, who holds neither the pattern nor text. 
Bob must output all the alignments of the pattern and the text with at most $k$ mismatches, as well as the applicable mismatch information. 
\end{problem}

The solution we give for this problem is both deterministic and asymptotically optimal.

\begin{restatable}{theorem}{thmone}\label{thm:prob-one}
There  exists a deterministic one-way communication complexity protocol for Problem~\ref{prob:one} that sends $\Oh(k(\log{\frac{n}{k}}+ \log |\Sigma|))$ bits,
where $\Sigma$ denotes the input alphabet.
\end{restatable}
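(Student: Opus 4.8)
The plan is to reduce Problem~\ref{prob:one} to a pure compression task and then to exploit the combinatorial structure of $k$-mismatch occurrences. First, Alice partitions $T$ into $\Oh(1)$ overlapping windows of length at most $\tfrac32 n$; every $k$-mismatch alignment lies inside some window, so it suffices to encode, for one window $W$, the set $\Occ$ of starting positions of the $k$-mismatch occurrences of $P$ in $W$ together with the mismatch information of each, at a cost of $\Oh(k(\log\tfrac nk+\log|\Sigma|))$ bits; the $\Oh(1)$ windows then contribute only a constant factor. If $|\Occ|\le 1$ there is essentially nothing to do: a single alignment has at most $k$ mismatches, its set of mismatch positions is one of at most $\binom{n}{\le k}$ subsets of $[0,n)$, and the $\le k$ mismatched symbols cost $\Oh(k\log|\Sigma|)$ bits, so the whole thing fits the budget; this case already matches the lower bound from the footnote.

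So assume $|\Occ|\ge2$, write $d:=\max\Occ-\min\Occ\le\tfrac n2$ for the spread, and let $V$ be the factor of $T$ spanning all the occurrences, so $|V|=n+d\le\tfrac32 n$. Two occurrences at distance $\delta\le d$ overlap in $n-\delta\ge\tfrac n2$ positions, so the triangle inequality for $\HD$, routed through the shared factor of $T$, shows that $\delta$ is an \emph{approximate period} of $P$ with at most $2k$ mismatches. Feeding all these approximate periods into the approximate periodicity (Fine--Wilf-type) lemma should yield a single value $g$ --- for concreteness, one aims for $g$ to be the gcd of all pairwise distances between occurrences --- that is an approximate period of $P$ with only $\Oh(k)$ \emph{breaks}, a break of a string $S$ being a position $t$ with $S[t]\ne S[t+g]$; moreover every element of $\Occ$ is then congruent to $\min\Occ$ modulo $g$, which forces $|\Occ|\le d/g+1$ and, in particular, $|\Occ|=\Oh(k)$ whenever $g>\tfrac n{2k}$. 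A similar overlap argument bounds the number of breaks of $V$ by $\Oh(k)$ as well. I expect this structural step to be the main obstacle and the place where essentially all the combinatorial work sits: the delicate point is to pin down $g$ with only $\Oh(k)$ breaks rather than the $\Omega(k\log n)$ that a careless iterated application of Fine--Wilf would incur, and to control the occurrences that might fall off the progression.

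Granting this, the encoding is as follows. Replacing $g$ by the shortest period of $P[0..g)$ if necessary, we may assume the period-$g$ \emph{skeleton} $Q$ of $P$ is primitive; let $Q_V$ be the analogous skeleton of $V$. Call a residue class $r\in[0,g)$ \emph{disagreeing} if $Q$ and $Q_V$ differ on it. A disagreeing class produces roughly $n/g-i\ge n/(2g)$ mismatches at the occurrence of offset $ig$ (since $i\le d/g\le n/(2g)$), so, because every occurrence has at most $k$ mismatches, there are only $\Oh(kg/n)$ disagreeing classes --- none at all once $g$ is small, and $\Oh(k)$ in the worst case. Alice transmits: the value $g$, the position $\min\Occ$, and $d$ ($\Oh(\log n)$ bits); the $\Oh(k)$ break positions of $P$ and of $V$, each with the two symbols it involves (the position sets, being subsets of ranges of length $\Oh(n)$, cost $\Oh(k\log\tfrac nk)$ bits and the symbols $\Oh(k\log|\Sigma|)$ bits); the set of disagreeing classes as a subset of $[0,g)$, costing $\Oh(\tfrac{kg}{n}\log\tfrac nk)=\Oh(k\log\tfrac nk)$ bits because $g\le\tfrac n2$; and, for each \emph{special} class --- one carrying a break of $P$, a break of $V$, or a disagreement --- the two skeleton symbols $Q[r]$ and $Q_V[r]$ ($\Oh(k\log|\Sigma|)$ bits). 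The total is $\Oh(k(\log\tfrac nk+\log|\Sigma|))$ bits, as required; note that no absolute symbol information is sent on the (possibly very long) non-special part of the skeleton.

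Finally, Bob decodes. From the break lists and the skeleton symbols at special classes he reconstructs $P$ and $V$ exactly on every special residue class, while on every other class he knows that $P$ and $V$ are constant and equal there. Hence, for each offset $j$ that is a multiple of $g$, he can compute $\HD(P,V[j..j+n))$: non-special classes contribute $0$, special classes he reads off directly, and each non-special disagreeing class contributes exactly one mismatch per aligned position; since no occurrence has a non-multiple-of-$g$ offset, this recovers $\Occ$ and, for every $j\in\Occ$, the complete list of mismatch positions with their symbol pairs --- even though that list may contain far more than $\Oh(k)$ entries in total, it is derivable from the $\Oh(k)$-sized message. Reassembling the $\Oh(1)$ windows yields the protocol, which is manifestly deterministic.
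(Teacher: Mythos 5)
Your overall skeleton (a constant number of overlapping windows, the trivial single-occurrence case, the observation that distances between overlapping occurrences are $2k$-approximate periods of $P$, reduction to the gcd $g$ of those distances, and Bob only needing to test offsets congruent to $\min\Occ$ modulo $g$) matches the paper. But the structural step you yourself flag as ``the main obstacle'' is not merely delicate --- the claim you need is false. The gcd of a family of $\Oh(k)$-periods of $P$ need not be an approximate period with $\Oh(k)$ breaks. Counterexample: let $m=n/k$ be odd and set $X[i]=(i\bmod m)\bmod 2$. Then $p_1=m$ is an exact period, and $p_2=m+2$ is a $2k$-period: a mismatch $X[i]\neq X[i+m+2]$ occurs only when $i\bmod m\in\{m-2,m-1\}$, since only then does adding $2$ wrap around the odd modulus and flip the parity. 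Both periods are at most $n/4$ for $k\ge 5$, yet $g=\gcd(p_1,p_2)=1$ and $X$ has $\Theta(n)$ positions with $X[i]\neq X[i+1]$. Consequently ``the $\Oh(k)$ break positions of $P$ and of $V$'' that your message transmits do not exist as an $\Oh(k)$-sized object, and the skeleton/disagreeing-class bookkeeping built on top (each residue class modulo $g$ constant away from breaks) collapses with it.

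The paper circumvents exactly this obstacle in \cref{lem:key}: rather than measuring breaks with respect to $g$, it fixes a chain $d_0=0,d_1,\dots,d_s=g$ of successive gcds, and for each level stores the \emph{majority} symbols of the classes modulo $d_{\ell-1}$ contained in a class modulo $d_\ell$, run-length encoded \emph{in the cyclic order induced by stepping by $p_\ell$}, not by $d_\ell$ --- in the counterexample above this is the $+2$ order on residues, in which the evens and odds form two arcs with only two boundaries. A discharging argument (\cref{lem:maj}) then bounds by $\Oh(k)$ the total number of adjacent mismatches over all majority strings together with the total content of the non-uniform classes modulo $d_s$, which is what makes the $\Oh(k(\log\frac{n}{k}+\log|\Sigma|))$-bit budget attainable; decoding walks up this laminar family instead of comparing $P$ and $V$ class by class against a single skeleton. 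To repair your proof you would have to replace the ``breaks w.r.t.\ $g$'' object by a structure of this kind; as written, the encoding step has no valid realisation within the claimed bit budget.
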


One striking property of this result is that the communication complexity upper bound matches the lower bound we gave earlier for two strings of exactly the same length.  
In other words, we require no more space to report the mismatch information at all $k$-mismatch alignments than we do to report it at only one such alignment. 

As the main conceptual step in our solution to \cref{prob:one}, we introduce modified versions of the pattern and text which are highly compressible but still contain sufficient information to solve the $k$-mismatch problem. More specifically, we place sentinel symbols at some positions of the text and the pattern,
making sure that no new $k$-mismatch occurrences are introduced and that the mismatch information for the existing occurrences does not change.
We then develop a key data structure (specified in \cref{lem:key}) which lets us store the proxy pattern in a space-efficient way
with $\Oh(\log n)$-time random access to any symbol.
On the other hand, the relevant part of the text is covered by a constant number of $k$-mismatch occurrences of the pattern,
so the modified text can be retrieved based on the proxy pattern and the mismatch information for these $\Oh(1)$ occurrences. 
Bob may use this data to find all the $k$-mismatch occurrences of the pattern in the text along with their mismatch information. 

We go on to show that both the encoding and decoding steps can be implemented quickly and in small space. 
The key tool behind the efficient decoding is a new fast and space-efficient algorithm for the $k$-mismatch problem in a setting with read-only random access to the input strings.
This is much easier compared to the streaming model as the aforementioned idea of maintaining the sketch of a sliding window now requires just $\Oh(k\log n)$ bits. This simple method would, however, be too slow for our purposes, and so in \cref{thm:ram} we give a more efficient solution to the $k$-mismatch problem in the read-only model. Then in \cref{thm:buffer} we use this to give a solution for \cref{prob:one} which is not only time- and space-efficient but which also lets us store the $k$-mismatch occurrences of prefixes of the pattern
and retrieve them later on when they are going to be necessary. In our main algorithm, we apply it to store the $k$-mismatch occurrences of $P_{\ell}$ until we can try extending them to $k$-mismatch occurrences of $P_{\ell+1}$.

In order to guarantee that we can always process every symbol of the text in $\Oh(\sqrt{k} \pl n)$ time rather than in $\Oh(k\pl n)$ time, we develop a different procedure for matching strings with small approximate periods.  The difficulty with such strings is that their $k$-mismatch occurrences may occur very frequently.
Given in \cref{thm:small-period}, our solution is based on a novel adaptation of Abrahamson's algorithm~\cite{DBLP:journals/siamcomp/Abrahamson87} designed for space-efficient convolution of sparse vectors.
We apply it at the lowest level of our streaming algorithm so that at the higher levels we can guarantee that the $k$-mismatch occurrences of $P_\ell$
start at least $k$ positions apart. 

Finally, the use of sketches incurs a delay of $\Oh(k \pl n)$ time when a $k$-mismatch occurrence is verified.
To achieve better worst-case time per symbol, we set the penultimate prefix $P_{L-1}$ to be of length $n-2k$ and we make the last level naively check
which $k$-mismatch occurrences of $P_{L-1}$ extend to $k$-mismatch occurrences of $P=P_L$. This trick
lets us start verification already while reading the trailing $2k$ symbols of the candidate length-$n$ substring of the text.

In summary, our main contribution is given by \cref{thm:streaming}, but in order to achieve this, we have developed a number of new tools and techniques. These include a new sliding window sketch, a new small approximate period algorithm in \cref{thm:small-period},  the communication protocol of \cref{thm:prob-one}, a new read-only pattern matching algorithm in \cref{thm:ram} and, most importantly, our key technical innovation given by \cref{thm:buffer}.
This last result demonstrates that despite few structural properties,  overlapping $k$-mismatch occurrences admit a very space-efficient representation
with a convenient algorithmic interface.

\section{A Rolling $k$-mismatch Sketch}\label{sec:fingerprint}
In this section, we give an overview of our new rolling sketch which will not only allow us to determine if two strings have Hamming distance at most $k$ but if they do, it will also give us all the mismatch information.  
Our approach extends the deterministic sketch developed in~\cite{CEPR:2009} for the offline $k$-mismatch with wildcards problem and combines it with the classic Karp--Rabin fingerprints for exact matching~\cite{DBLP:journals/ibmrd/KarpR87}.

We fix an upper bound $n$ on the length of the compared strings and a prime number $p > n^c$ for sufficiently large exponent $c$ (the parameter $c$ can be used to control error probability). 
We will assume throughout that all the input symbols can be treated as elements of $\Fp$ by simply reading the bit representation of the symbols.
If the symbols come from a larger alphabet, then we would need to hash them into $\Fp$, which will introduce a small extra probability of error. 

Note that the earlier sketch of~\cite{CEPR:2009} is based on fields with characteristic two.
However, in order to make our sketch able to roll forwards, we need to perform computations in a field with large characteristic (larger than $n$).  
The downside of this change is that we have to use a randomised polynomial factorisation algorithm to find the indices of the mismatches.

Let us start by recalling the Karp--Rabin fingerprints~\cite{DBLP:journals/ibmrd/KarpR87} and defining our new sketch.
\begin{fact}[Karp--Rabin fingerprints]
	For $r\in \Fp$ chosen uniformly at random, the Karp--Rabin fingerprints $\psi_r$, defined as 
	$\psi_r(S)=\sum_{i=0}^{\ell-1} S[i]r^{i}$ for $S\in \Fp^\ell$, 
	satisfy the following property:	if $U,V\in \Fp^\ell$ are not equal, then $\psi_r(U)=\psi_r(V)$ holds with probability at most $\frac{\ell}{p}$.
\end{fact}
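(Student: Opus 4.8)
The plan is to reduce the statement to the elementary fact that a nonzero univariate polynomial of degree $d$ over a field has at most $d$ roots. Concretely, I would introduce the difference polynomial
\[
Q(x) \;=\; \psi_x(U) - \psi_x(V) \;=\; \sum_{i=0}^{\ell-1} \bigl(U[i]-V[i]\bigr)\,x^{i} \;\in\; \Fp[x].
\]
The key observation is that $\psi_r(U)=\psi_r(V)$ holds exactly when $r$ is a root of $Q$, so the probability in question equals $\frac{1}{p}\cdot|\{r\in\Fp : Q(r)=0\}|$ since $r$ is drawn uniformly from $\Fp$.

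Next I would argue that $Q$ is a nonzero polynomial. Since $U\neq V$, there is at least one index $i$ with $U[i]\neq V[i]$, i.e.\ with $U[i]-V[i]\neq 0$ in $\Fp$; hence $Q$ has a nonzero coefficient and is not the zero polynomial. Its degree is at most $\ell-1$. Because $p$ is prime, $\Fp$ is a field, and a nonzero polynomial of degree at most $\ell-1$ over a field has at most $\ell-1$ distinct roots. Therefore $|\{r\in\Fp : Q(r)=0\}|\le \ell-1$, and the probability that $\psi_r(U)=\psi_r(V)$ is at most $\frac{\ell-1}{p}\le\frac{\ell}{p}$, as claimed.

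There is essentially no hard step here: the only thing to be careful about is that the stated bound is over all of $\Fp$ (so the sample space has size exactly $p$) and that the polynomial genuinely has degree $<\ell$, which is why the coefficients being indexed by $i\in\{0,\dots,\ell-1\}$ matters. The result is just the one-variable case of the Schwartz--Zippel lemma, invoked here to justify the standard guarantee that distinct strings collide under a random Karp--Rabin fingerprint with probability at most $\frac{\ell}{p}$, which for $p>n^{c}$ and $\ell\le n$ is inverse-polynomial in $n$.
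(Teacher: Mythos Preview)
Your argument is correct and is exactly the standard proof: the difference $Q(x)=\sum_{i=0}^{\ell-1}(U[i]-V[i])x^{i}$ is a nonzero polynomial of degree at most $\ell-1$ over the field $\Fp$, so it has at most $\ell-1<\ell$ roots, and the collision probability is bounded accordingly. The paper does not supply its own proof of this fact; it is stated as a known result attributed to Karp and Rabin, so there is nothing to compare against beyond noting that your argument is the usual one-line Schwartz--Zippel justification.
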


\begin{definition}[$k$-mismatch sketch]
	For a fixed prime number $p$ and for $r\in \Fp$ chosen uniformly at random,
	the sketch $\sk_k(S)$ of a string $S\in \Fp^*$ is defined as:
		\[\sk_k(S) = (\phi_0(S),\ldots,\phi_{2k}(S), \phi'_0(S),\ldots,\phi'_k(S), \psi_r(S)),\]
	where $\phi_j(S)=\sum_{i=0}^{\ell-1}S[i]i^j$ and $\phi'_j(S)=\sum_{i=0}^{\ell-1}S[i]^2 i^j$ for $j\ge 0$.
\end{definition}

Observe that the sketch is a sequence of $3k+3$ elements of $\Fp$,
so it takes $\Oh(k\log p)=\Oh(k\log n)$ bits.
The main goal of the sketches is to check whether two given strings are at Hamming distance $k$ or less,
and, if so, to retrieve the mismatches. We define the mismatch information between two strings $X$ and $Y$ as $\MM(X,Y) = \{(i, X[i], Y[i]) : X[i] \ne Y[i]\}$.

\begin{restatable}{lemma}{corsketches}\label{cor:sketches}
	Given the sketches $\sk_k(S)$ and $\sk_k(T)$ of two strings of the same length $\ell \le n$, in $\Oh(k\log^3 n)$ time we can decide (with high probability)
	whether $\HD(S,T) \le k$. If so, the mismatch information $\MM(S,T)$ is reported. The algorithm uses $\Oh(k\log n)$ bits of space.
\end{restatable}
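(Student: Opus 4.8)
The plan is to exploit the algebraic structure of the sketch. Let $D = \MM(S,T)$ be the (unknown) mismatch information, say with $|D| = m$ mismatches at positions $i_1 < \dots < i_m$; write $d_t = S[i_t] - T[i_t] \ne 0$ for the ``mismatch values.'' From $\sk_k(S)$ and $\sk_k(T)$ we can form, coordinate-wise, the differences $\Phi_j := \phi_j(S) - \phi_j(T) = \sum_{t=1}^m d_t\, i_t^{\,j}$ for $j = 0,\dots,2k$. These are exactly the first $2k+1$ power sums of the $m$-element weighted set $\{(i_t, d_t)\}$, which is precisely the data needed to recover the $i_t$ and $d_t$ whenever $m \le k$, via the standard Prony / syndrome-decoding machinery. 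Concretely: define the ``error-locator'' polynomial $\Lambda(x) = \prod_{t=1}^m (x - i_t)$ of degree $m$. Its coefficients satisfy a Hankel linear system built from $\Phi_0,\dots,\Phi_{2m-1}$; since $m \le k$, all the needed power sums are available. So the first step is to determine $m$ (e.g.\ as the rank of the Hankel matrix $[\Phi_{a+b}]_{0\le a,b\le k}$, computable by Gaussian elimination over $\Fp$ in $\Oh(k^3)$ field operations, hence $\Oh(k^3 \log n)$ bit operations — I will actually want the sharper $\Oh(k\log^2 n)$-ish bound, see below), solve the Hankel system for the coefficients of $\Lambda$, and then find the roots of $\Lambda$ in $\Fp$ using a randomised polynomial factorisation algorithm (this is the one randomised ingredient flagged in the text, and it runs in $\Oh(\operatorname{poly}(k,\log p))$ time, comfortably within $\Oh(k\log^3 n)$). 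The roots are the mismatch positions $i_1,\dots,i_m$; once these are known, the $d_t$ are obtained by solving a Vandermonde system in the $i_t$ from $\Phi_0,\dots,\Phi_{m-1}$, or equivalently by transposed Vandermonde evaluation. Finally, from $\Phi'_j := \phi'_j(S) - \phi'_j(T) = \sum_t (S[i_t]^2 - T[i_t]^2) i_t^{\,j}$ for $j = 0,\dots,k$, together with the already-recovered positions $i_t$, we solve one more Vandermonde system to get $e_t := S[i_t]^2 - T[i_t]^2$. Now for each $t$ we know $d_t = S[i_t] - T[i_t]$ and $e_t = S[i_t]^2 - T[i_t]^2 = (S[i_t]+T[i_t])\,d_t$; since $d_t \ne 0$ we recover $S[i_t]+T[i_t] = e_t/d_t$, and hence both $S[i_t]$ and $T[i_t]$ individually. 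This yields the full $\MM(S,T)$.

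To make this actually a decision procedure, I need a validity check, because if in fact $\HD(S,T) > k$ the algebra above will still produce *some* candidate output, which must be rejected. Two safeguards are used: (i) the recovered $\Lambda$ must split into exactly $m$ distinct roots in $\Fp$ with $m \le k$ — if root-finding fails to produce $m$ distinct elements of $\Fp$, or if the Hankel rank already exceeds $k$, we report $\HD(S,T) > k$; (ii) the Karp--Rabin coordinate $\psi_r$ is used as an independent integrity check: from the candidate mismatch set we can compute what $\psi_r(S) - \psi_r(T)$ *should* be, namely $\sum_t (S[i_t]-T[i_t])\, r^{i_t} = \sum_t d_t r^{i_t}$, and compare it to the actual $\psi_r(S)-\psi_r(T)$ read off the sketches. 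If they disagree, output $\HD(S,T) > k$. The point of (ii) is that whenever $\HD(S,T) > k$ but the power-sum decoding happens to return a spurious set of at most $k$ ``mismatches,'' that spurious set cannot simultaneously be consistent with the true $\psi_r$ except with probability $\le \ell/p \le n/p = n^{-\Omega(1)}$ over the random $r$; this is the only place randomness beyond the factorisation routine enters the correctness argument, and a union bound over the (polynomially many) invocations in the overall algorithm keeps the global error probability inverse-polynomial. I should also note a subtlety to handle cleanly: the power sums use $i^j$ with $i$ ranging over $\{0,\dots,\ell-1\}$, so the index $i=0$ contributes $0^0 = 1$ for $j=0$ and $0$ for $j\ge 1$; one must treat a potential mismatch at position $0$ separately (or shift indices by one so all ``locators'' are nonzero and distinct), which is a routine bookkeeping adjustment.

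For the complexity bounds: the sketch difference is formed in $\Oh(k)$ field operations. The Hankel/Toeplitz-system solves and the Vandermonde-system solves each involve $k \times k$ structured matrices, for which fast solvers run in $\Oh(k\log^2 k)$ field operations — but even a plain $\Oh(k^2)$ approach (or $\Oh(k^3)$ Gaussian elimination for the rank computation, which can instead be done by the Berlekamp--Massey algorithm in $\Oh(k^2)$ operations) is within budget once we multiply by the cost of a single $\Fp$ operation, which is $\Oh(\log p\,\operatorname{polyloglog}) = \tilde\Oh(\log n)$ bits; thus the dominant term is from polynomial factorisation over $\Fp$, which by Cantor--Zassenhaus or its refinements takes $\Oh(k^{1+o(1)}\log^{2+o(1)} p)$ or, bounding crudely, $\Oh(k\log^3 n)$ bit operations with high probability, matching the claimed running time. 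The space is $\Oh(k)$ field elements for the sketches and all intermediate vectors/matrices rows processed one at a time, i.e.\ $\Oh(k\log p) = \Oh(k\log n)$ bits, as claimed. The main obstacle — and the part deserving the most care — is precisely the correctness of the rejection test when $\HD(S,T)>k$: one must argue that the Berlekamp--Massey / Hankel-rank step together with the $\psi_r$ check cannot be fooled, i.e.\ that no set of $\le k$ weighted locations can match $2k+1$ power sums of a $>k$-sparse vector *and* the Karp--Rabin fingerprint simultaneously, except with the stated tiny probability. Everything else is a careful but standard assembly of sparse-recovery linear algebra over $\Fp$.
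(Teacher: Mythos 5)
Your proposal is correct and follows essentially the same route as the paper: form the syndrome differences $\phi_j(S)-\phi_j(T)$, recover the error-locator polynomial, find its roots with a randomised factorisation algorithm (Cantor--Zassenhaus), solve transposed Vandermonde systems for the differences $S[x_i]-T[x_i]$ and $S[x_i]^2-T[x_i]^2$, split these into $S[x_i]$ and $T[x_i]$ via the same algebraic identity, and use the Karp--Rabin coordinate as the soundness check when $\HD(S,T)>k$ (your observation about the locator for position $0$ is a legitimate bookkeeping point the paper glosses over). One caveat: your parenthetical fallbacks are not actually within budget --- $\Oh(k^2)$ field operations cost $\Oh(k^2\log n)$ bit operations, which exceeds $\Oh(k\log^3 n)$ once $k=\omega(\log^2 n)$, and $k\times k$ Gaussian elimination also violates the $\Oh(k\log n)$-bit space bound --- so you must commit to the fast structured solvers (the paper uses Pan's key-equation solver in $\Oh(k\log n)$ time and the Kaltofen--Lakshman transposed-Vandermonde solver in $\Oh(k\log k\log n)$ time), leaving Cantor--Zassenhaus as the sole $\Oh(k\log^3 n)$ bottleneck.
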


Next, we consider the efficiency of updating a sketch given the mismatch information.
\begin{restatable}{lemma}{lemconstr}\label{lem:constr}
Let $S,T\in \Fp^*$ be of the same length $\ell < n$.
If $\HD(S,T)= \Oh(k)$,  then $\sk_k(T)$ can be constructed in $\Oh(k\log^2 n)$ time and $\Oh(k\log n)$ bits of space 
given $\MM(S,T)$ and  $\sk_k(S)$.
\end{restatable}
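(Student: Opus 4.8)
The plan is to observe that $T$ agrees with $S$ everywhere except at the $\Oh(k)$ positions listed in $\MM(S,T)$, so each coordinate of $\sk_k(T)$ equals the corresponding coordinate of $\sk_k(S)$ plus a correction term determined entirely by $\MM(S,T)$. Write $\MM(S,T)=\{(i_m,a_m,b_m):1\le m\le t\}$ with $t=\Oh(k)$. Since $S[i]=T[i]$ for $i\notin\{i_1,\dots,i_t\}$, for every $j$ we have
\[\phi_j(T)-\phi_j(S)=\sum_{m=1}^{t}(b_m-a_m)\,i_m^{j},\qquad \phi'_j(T)-\phi'_j(S)=\sum_{m=1}^{t}(b_m^2-a_m^2)\,i_m^{j},\]
and $\psi_r(T)-\psi_r(S)=\sum_{m=1}^{t}(b_m-a_m)\,r^{i_m}$ (for $i_m=0$ all the conventions $0^0=1$, $0^{j}=0$ are respected by these identities). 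The Karp--Rabin correction is straightforward: each $r^{i_m}$ is computed by fast exponentiation in $\Oh(\log n)$ field operations, and the $t$ contributions are accumulated one at a time, so this costs $\Oh(k\log n)$ time and $\Oh(1)$ extra field elements.

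The only non-trivial part is computing the $2k+1$ weighted power sums $\sigma_j=\sum_m(b_m-a_m)i_m^{j}$ for $0\le j\le 2k$ (and, with weights $b_m^2-a_m^2$, the $k+1$ analogues needed for $\phi'$), since evaluating each $\sigma_j$ separately would take $\Theta(k^2)$ time in total. I would instead note that $(\sigma_j)_{j\ge 0}$ is the coefficient sequence of the rational power series
\[F(x)=\sum_{m=1}^{t}\frac{b_m-a_m}{1-i_m x}=\frac{g(x)}{q(x)},\qquad q(x)=\prod_{m=1}^{t}(1-i_m x),\quad \deg g<t,\]
so it suffices to compute the polynomials $q$ and $g$ and then read off the first $2k+1$ coefficients of $g/q$. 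Both are produced by a divide-and-conquer over $\{i_1,\dots,i_t\}$: a leaf $\{m\}$ returns $(q,g)=(1-i_m x,\ b_m-a_m)$, and merging children $(q_1,g_1),(q_2,g_2)$ returns $(q_1q_2,\ g_1q_2+g_2q_1)$. Using an $\Oh(k\log k)$-time polynomial-multiplication primitive over $\Fp$, this runs in $\Oh(k\log^{2}k)$ field operations, and --- provided the recursion discards each pair of child polynomials as soon as the parent is formed --- only $\Oh(\log k)$ partial products of geometrically decreasing sizes are ever live, so the working storage stays at $\Oh(k)$ field elements. Inverting $q$ modulo $x^{2k+1}$ by Newton iteration and multiplying by $g$ then yields $\sigma_0,\dots,\sigma_{2k}$; running the same procedure with weights $b_m^2-a_m^2$ gives the corrections for $\phi'_0,\dots,\phi'_k$. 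Adding all corrections to the stored coordinates of $\sk_k(S)$ produces $\sk_k(T)$, for a total of $\Oh(k\log n)+\Oh(k\log^{2}k)=\Oh(k\log^{2}n)$ time and $\Oh(k\log n)$ bits of space.

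The step that needs the most care is exactly this power-sum computation, for two reasons. First, since $p$ itself need not be of the form $c\cdot 2^{e}+1$, the fast multiplication of degree-$\Oh(k)$ polynomials over $\Fp$ has to be realised indirectly --- e.g.\ by performing radix-two FFTs modulo a constant number of suitably chosen Fourier primes whose product exceeds $k p^{2}$ and recombining with the Chinese Remainder Theorem --- which is what gives the $\Oh(k\log k)$ bound used above. Second, the divide-and-conquer must be organised so that the sizes of the simultaneously stored partial products form a geometric series summing to $\Oh(k)$ field elements; this is the point that keeps the space at $\Oh(k\log n)$ rather than $\Oh(k\log^{2}n)$ bits. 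Everything else --- the algebraic identities, the fast exponentiations, and the final coordinate-wise additions --- is routine.
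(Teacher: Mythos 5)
Your proof is correct and follows the same route as the paper: both reduce the update to the identities $\phi_j(T)-\phi_j(S)=\sum_m(b_m-a_m)i_m^j$ (and the analogues for $\phi'_j$ and $\psi_r$) and then compute all $\Oh(k)$ weighted power sums at once in $\Oh(k\log^2 n)$ time. The only difference is that the paper invokes the Canny--Kaltofen--Lakshman transposed Vandermonde matrix--vector multiplication as a black box, whereas you re-derive that primitive explicitly via the generating function $\sum_m \frac{c_m}{1-i_m x}$, a space-conscious product tree, and Newton inversion --- a correct, self-contained implementation of the same subroutine.
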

As a result, the sketch can be efficiently maintained subject to elementary operations.
\begin{restatable}{cor}{corstreamsketch}\label{cor:streamsketch}
	A string $X\in \Fp^*$ with $|X|\le n$ can be stored in $\Oh(k\log n)$ bits 
so that $\sk_k(X)$ can be retrieved in $\Oh(k\log^2 n)$ time and the following updates
are handled in $\Oh(\log^2 n)$ time:
	\begin{enumerate}[itemsep=0pt,topsep=5pt,parsep=2pt]
	  \item append a given symbol $a\in \Fp$ to $X$, 
	  \item substitute $X[i]=a$ for $X[i]=b$ given the index $i$ and the symbols $a,b\in \Fp$.
	\end{enumerate}
\end{restatable}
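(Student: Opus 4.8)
The plan is to avoid maintaining $\sk_k(X)$ explicitly: since a single update alters all $3k+3$ components of the sketch, keeping it up to date would cost $\Omega(k)$ per operation. Instead I would store a \emph{checkpoint} string $S_0$ only through its sketch $\sk_k(S_0)$, together with a short buffer $D$ recording how the current string $X$ differs from $S_0$. Concretely, $D$ is a balanced search tree whose keys are the positions modified since the last checkpoint; with each key $i$ we store the pair $(S_0[i],X[i])$, and alongside $D$ we keep the length $\ell=|X|$. We maintain $|D|\le ck$ for a fixed constant $c$, so the structure occupies $\Oh(k\log p)=\Oh(k\log n)$ bits for $\sk_k(S_0)$ plus $\Oh(\log n)$ bits per entry of $D$, i.e.\ $\Oh(k\log n)$ bits overall.

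The first point is that both operations are substitutions once we zero-pad on the right: appending $0$'s to a string changes none of $\phi_j$, $\phi'_j$, or $\psi_r$, as every added term equals $0\cdot i^j=0$. Hence we may view $X$ as zero-padded to any convenient length, and appending a symbol $a$ to a length-$\ell$ string is exactly substituting the value $0$ at position $\ell$ by $a$ and incrementing $\ell$ (position $\ell$ is always a fresh key of $D$). A substitution $X[i]=b\mapsto X[i]=a$ is handled by overwriting the current-value field of key $i$ if $i\in D$, and inserting $i$ with pair $(b,a)$ otherwise; an append inserts $\ell$ with pair $(0,a)$. Each step costs $\Oh(\log n)$ time. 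Whenever $|D|$ reaches $ck$ we \emph{flush}: the tree yields $\MM(S_0,X)=\{(i,u,v)\in D: u\neq v\}$, a set of at most $\Oh(k)$ triples, so $\HD(S_0,X)=\Oh(k)$ and \cref{lem:constr} builds $\sk_k(X)$ from $\sk_k(S_0)$ and $\MM(S_0,X)$ in $\Oh(k\log^2 n)$ time and $\Oh(k\log n)$ bits; we then install $X$ as the new checkpoint and empty $D$.

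To retrieve $\sk_k(X)$ on demand we run the same computation without resetting the checkpoint: extract $\MM(S_0,X)=\{(i,u,v)\in D:u\neq v\}$ and apply \cref{lem:constr}, in $\Oh(k\log^2 n)$ time total, matching the claimed bound. Correctness is immediate: $\sk_k$ is unchanged by zero-padding, and by construction each key $i\in D$ carries the pair $(S_0[i],X[i])$ while every untouched position $i\notin D$ has $S_0[i]=X[i]$, so the set extracted from $D$ is exactly $\MM(S_0,X)$ and \cref{lem:constr} is applied to valid inputs (its length restriction is met after padding).

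The part that needs real care, and which I expect to be the main obstacle, is the \emph{worst-case} per-update time. A flush costs $\Oh(k\log^2 n)$ but occurs only once per $\Theta(k)$ updates, so the amortised cost is already $\Oh(\log^2 n)$; to make it worst-case I would deamortise in the usual fashion --- start the flush when $|D|$ hits $ck/2$, execute the $\Oh(k\log^2 n)$-time routine of \cref{lem:constr} in chunks of $\Oh(\log^2 n)$ spread over the next $ck/2$ updates, and during this transition route incoming updates into a second tree $D'$ that is born empty at the instant the snapshot is frozen; once the new sketch is complete it is promoted to the checkpoint and $D'$ takes over. Since at most $ck/2$ updates accrue during a transition, $|D|+|D'|=\Oh(k)$ and the space stays $\Oh(k\log n)$ bits. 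The delicate bookkeeping is ensuring that updates arriving mid-transition are interpreted relative to the new (in-progress) checkpoint rather than the old one; this is guaranteed by freezing the snapshot and creating $D'$ at the same instant, but it is the step where the argument must be stated with care.
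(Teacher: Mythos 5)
Your proposal is correct and is essentially the paper's own argument: reduce appends to substitutions via zero-padding, keep the sketch of a checkpoint plus a bounded buffer of pending substitutions, rebuild with \cref{lem:constr} when the buffer fills, and deamortise by spreading that $\Oh(k\log^2 n)$ rebuild over the next $\Theta(k)$ updates. You are somewhat more explicit than the paper about the transition bookkeeping (the second buffer $D'$ and collapsing repeated substitutions at the same position), but the underlying mechanism is identical.
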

Unlike in~\cite{CEPR:2009}, we also need to efficiently maintain sketches subject to concatenation etc.
\begin{restatable}{lemma}{lemmanip}\label{lem:manip}
The following operations can be implemented in $\Oh(k \log n)$ time using $\Oh(k\log n)$ bits of space,
provided that all the processed strings belong to $\Fp^*$ and are of length at most $n$.
\begin{enumerate}[itemsep=0pt,topsep=5pt,parsep=2pt]
  \item Construct one of the sketches  $\sk_k(U)$, $\sk_k(V)$, or $\sk_k(UV)$ given the other two. 
  \item Construct $\sk_k(U)$ or $\sk_k(U^m)$ given the other sketch and the integer $m$.
\end{enumerate}
\end{restatable}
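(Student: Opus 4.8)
The plan is to reduce each of the four operations to an explicit algebraic identity relating the components of the sketches, and then to observe that evaluating that identity amounts to a constant number of length-$\Oh(k)$ polynomial multiplications over $\Fp$ (plus a few modular exponentiations), which can all be done in $\Oh(k\log n)$ time using $\Oh(k\log n)$ bits. Throughout I keep the length of each string bundled with its sketch; this costs $\Oh(\log n)$ extra bits and is negligible. I also record once that each $\phi'$-component is just the corresponding $\phi$-component of the string obtained by squaring every symbol, and since squaring commutes with concatenation and with taking powers, every identity below for the $\phi_j$'s holds verbatim for the $\phi'_j$'s; they are processed in parallel at no extra asymptotic cost.

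For item~1, write $a=|U|$. The Karp--Rabin component satisfies $\psi_r(UV)=\psi_r(U)+r^a\psi_r(V)$. For the power sums, expanding $(i+a)^j$ by the binomial theorem yields
\[\phi_j(UV)=\phi_j(U)+\sum_{t=0}^{j}\binom{j}{t}a^{j-t}\phi_t(V).\]
Setting $\hat\phi_t=\phi_t(V)/t!$ and $\alpha_s=a^s/s!$, the correction term equals $j!$ times the $j$-th coefficient of the product of the two length-$(2k+1)$ polynomials $\sum_t\hat\phi_t y^t$ and $\sum_s\alpha_s y^s$; hence $\sk_k(UV)$ is obtained from $\sk_k(U)$ and $\sk_k(V)$ by $\Oh(1)$ polynomial multiplications of degree $\Oh(k)$, after an $\Oh(k)$-time precomputation of the factorials, their inverses, and the powers $a^0,\dots,a^{2k}$, plus one exponentiation $r^a$. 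The map $(\phi_t(V))_t\mapsto(\phi_t(UV)-\phi_t(U))_t$ above is the binomial (``Taylor shift by $a$'') transform, whose inverse is the Taylor shift by $-a$; so given $\sk_k(U)$ and $\sk_k(UV)$ one recovers $\sk_k(V)$ by applying the shift by $-a$ and dividing the Karp--Rabin component by $r^{a}$, and given $\sk_k(V)$ and $\sk_k(UV)$ one recovers $\sk_k(U)$ by a forward shift by $a$ and a subtraction. Each case is $\Oh(1)$ length-$\Oh(k)$ multiplications plus $\Oh(\log n)$ for exponentiation, i.e.\ $\Oh(k\log n)$ time, with $\Oh(k)$ field elements of workspace.

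For item~2, the analogous expansion for $U^m$ collapses the outer sum over the $m$ copies into power sums: with $\Sigma_d:=\sum_{s=0}^{m-1}s^d$ one gets
\[\phi_j(U^m)=\sum_{t=0}^{j}\binom{j}{t}|U|^{j-t}\,\Sigma_{j-t}\,\phi_t(U),\]
and $\psi_r(U^m)=\psi_r(U)\cdot\frac{r^{|U|m}-1}{r^{|U|}-1}$ (the event $r^{|U|}=1$ has probability $\Oh(n/p)$ and is handled by $\psi_r(U^m)=m\,\psi_r(U)$). The values $\Sigma_0,\dots,\Sigma_{2k}$ are, up to the usual $d!$ scaling, the coefficients modulo $y^{2k+1}$ of $\frac{(e^{my}-1)/y}{(e^y-1)/y}$, where $e^y$ denotes the formal exponential power series; since the denominator $(e^y-1)/y$ has constant term $1$ (and $2k<p$ so all $d!$ are invertible), this is one truncated power-series inversion followed by one multiplication, hence $\Oh(k\log k)$ field operations. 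The displayed identity is then one further length-$\Oh(k)$ convolution, giving $\sk_k(U^m)$ from $\sk_k(U)$ and $m$ within the bounds. Conversely $\Sigma_0=m\neq 0$ in $\Fp$ (as $m\le n<p$), so the linear map $(\phi_t(U))_t\mapsto(\phi_j(U^m))_j$ is invertible and its inverse is division by the same $\Sigma$-polynomial; thus $\sk_k(U)$ is recovered from $\sk_k(U^m)$ and $m$ by one power-series inversion and one multiplication, together with $|U|=|U^m|/m$ and $\psi_r(U)=\psi_r(U^m)\cdot\frac{r^{|U|}-1}{r^{|U|m}-1}$.

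The main obstacle is purely one of efficiency: the naive evaluation of the triangular relations above costs $\Theta(k^2)$ operations, so the crux is to recognise every correction term as a convolution of two sequences of length $\Oh(k)$ and to invoke fast polynomial multiplication over $\Fp$ (via FFT, or Kronecker substitution together with fast integer multiplication), which brings the cost down to $\Oh(k\log k)\subseteq\Oh(k\log n)$ field operations---each an $\Oh(1)$-time machine operation since $p=\mathrm{poly}(n)$. The remaining ingredients---modular exponentiation for the $r$-powers, batched computation of factorials and small powers, and the single truncated power-series inversion needed for the power sums in item~2---are all standard and fit comfortably within $\Oh(k\log n)$ time and $\Oh(k\log n)$ bits of space.
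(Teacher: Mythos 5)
Your proposal is correct and follows essentially the same route as the paper: the binomial expansion of $\phi_j(UV)-\phi_j(U)$ and of $\phi_j(U^m)$, repackaged via factorial normalisation into exponential generating functions (your ``Taylor shift by $a$'' is exactly multiplication by $e^{aX}$, and your $\Sigma$-polynomial is the paper's $\tfrac{e^{m\ell X}-1}{e^{\ell X}-1}$), evaluated by one fast degree-$\Oh(k)$ polynomial multiplication plus a truncated power-series inversion for item~2. The extra care you take with the degenerate case $r^{|U|}=1$ and with the invertibility of $\Sigma_0=m$ and of the factorials are welcome details the paper leaves implicit.
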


\section{Patterns with a Small Approximate Period}\label{sec:periodic}
As our first space- and time-efficient algorithm, 
we show how the streaming $k$-mismatch problem can be solved deterministically when the pattern $P$ 
has a small approximate period and therefore can be stored in $\Oh(k)$ words of space.

Recall that a string $X$ of length $n$ is defined to have a period $p>0$ if ${X[0,\ldots,n - p - 1]}={X[p,\ldots,n-1]}$. 
The $d$-periods describe analogous structure for the setting with mismatches:
We say that an  integer $p$ is a \emph{$d$-period} of a string $X$ of length $n$ if $\HD({X[0,\ldots,n-p-1]},$ ${X[p,\ldots,n-1]})\le d$. 
The set of $d$-periods of $X$ is denoted by $\Per(X, d)$.

A string $X$ with a $d$-period $p$ can be stored using an $\Oh(d\log\frac{n}{d}+(d+p)\log |\Sigma|)$-bit
\emph{periodic representation with respect to $p$}, which by definition consists
of $X[0,\ldots,p-1]$ and $\MM(X[0,\ldots,n-p-1],X[p,\ldots,n-1])$; see~\cite{CFPSS:2016}.
The following lemma lets us detect a small $d$-period of a given string $X$
and construct the underlying periodic representation.
If $X$ does not have any such $d$-period, we can still retrieve the longest prefix of $X$ which has one.
This feature is going to be useful in \cref{sec:main-algorithm}, where we solve the general $k$-mismatch problem.

\begin{restatable}{lemma}{lempp}\label{lem:perpref}
There exists a deterministic streaming algorithm that, given positive integers $p$ and $d=\Oh(p)$,
finds the longest prefix $Y$ of the input string $X$ which has a $d$-period $p'\le p$.
It reports the periodic representation of $Y$ with respect to $p'$,
uses $\Oh(p)$ words of space, and takes $\Oh(\sqrt{p\log p})$ per-symbol processing time plus $\Oh(p\sqrt{p\log p})$ post-processing time.
\end{restatable}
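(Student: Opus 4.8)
My plan is to reduce the task to maintaining, as the symbols of the input string $X$ arrive, the array of self-overlap mismatch counts
\[ c_{p'}(i)\ :=\ \HD\bigl(X[0,\ldots,i-p'-1],\ X[p',\ldots,i-1]\bigr)\qquad\text{for all }p'\in\{1,\ldots,p\}. \]
Since $c_{p'}(i)$ is nondecreasing in $i$, an integer $p'\le p$ is a $d$-period of the prefix $X[0,\ldots,i-1]$ exactly when $c_{p'}(i)\le d$, that is, for all $i$ up to some maximal length $m_{p'}$ (possibly $|X|$). Hence the sought prefix $Y$ has length $i^\ast-1$ where $i^\ast$ is the first length at which every counter exceeds $d$, and a valid witnessing period is the \emph{shortest} period still alive at length $i^\ast-1$ (all periods alive there die at length $i^\ast$, so their minimum is well defined). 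Thus it suffices to keep the $\Oh(p)$ counters — which fit in $\Oh(p)$ words — detect $i^\ast$, and, the delicate part, be ready to output the periodic representation $\bigl(Y[0,\ldots,p'-1],\ \MM(Y[0,\ldots,|Y|-p'-1],Y[p',\ldots,|Y|-1])\bigr)$ for that shortest surviving $p'$. If no counter ever exceeds $d$ then $Y=X$; this case, together with the trailing block shorter than $p$, is handled in post-processing.

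To update all $\Oh(p)$ counters fast I would process $X$ in blocks of length $p$: writing $B_t=X[tp,\ldots,(t+1)p-1]$, once $B_t$ is available the total increment to the counters contributed by positions $j\in B_t$ is, for each $p'\le p$, the number of $j\in B_t$ with $X[j]\ne X[j-p']$; as $p'\le p$ forces $j-p'$ into $B_{t-1}B_t$, this is precisely the vector of Hamming distances between $B_t$ and all $p$ length-$p$ windows of $B_{t-1}B_t$. That vector is computed in $\Oh(p\sqrt{p\log p})$ time and $\Oh(p)$ space by Abrahamson's technique: symbols occurring more than $\sqrt{p\log p}$ times in the window are handled by deterministic integer FFT convolution of their indicator vectors, while the remaining rare symbols are handled by direct enumeration of their $\Oh(\sqrt{p\log p})$ occurrences each. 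Spreading the cost of processing $B_t$ uniformly over the reading of $B_{t+1}$ yields worst-case per-symbol time $\Oh(\sqrt{p\log p})$; everything here is deterministic, the only price being that the counters — hence the detection of $i^\ast$ — lag behind by $\Oh(p)$ symbols, which is immaterial for a streaming algorithm. The final incomplete block is absorbed by one more Abrahamson call, which accounts for the $\Oh(p\sqrt{p\log p})$ post-processing time.

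The main obstacle, and the technical heart of the lemma, is producing the periodic representation of $Y$ within $\Oh(p)$ \emph{working} space: naively storing the $\MM$-list of every currently alive period would cost up to $\Theta(pd)=\Theta(p^2)$ words. The way around this is to maintain a single compact certificate, namely the periodic representation of the read prefix with respect to its currently shortest alive period $g$ — of size $\Oh(d\log n+(d+g)\log|\Sigma|)=\Oh(p)$ words since $d,g=\Oh(p)$ — updating it alongside the block computations, and to exploit the combinatorial structure of $\Per(\cdot,d)$: by the approximate-periodicity combining lemma, once the prefix is longer than $2p$ any two alive periods $\le p$ force an $\Oh(d)$-period equal to their gcd, which controls how much the shortest alive period can jump when it dies and how the certificate must be rebuilt at such an event (each rebuild being charged to the block in which it occurs and kept within the claimed budget). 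When $i^\ast$ is finally detected, the buffered last $\Oh(p)$ symbols together with the certificate recorded at the preceding block boundary let one reconstruct, by one more Abrahamson-style pass over that $\Oh(p)$-length region, the periodic representation of $Y=X[0,\ldots,i^\ast-2]$ with respect to the shortest period alive at length $i^\ast-1$. Collecting the bounds — $\Oh(p)$ words of space, $\Oh(\sqrt{p\log p})$ per-symbol time, $\Oh(p\sqrt{p\log p})$ post-processing, and determinism throughout — completes the argument; I expect the bookkeeping for the certificate across period deaths to be where the care is concentrated.
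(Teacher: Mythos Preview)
Your high-level architecture matches the paper's: process $X$ in blocks of length $p$, use Abrahamson's algorithm on the last two blocks to update all $p$ self-overlap counters in $\Oh(p\sqrt{p\log p})$ time per block, and carry a single periodic representation of the current prefix with respect to one surviving period. The divergence---and the genuine gap---is in how that representation is \emph{switched} when the tracked period ceases to be a $d$-period.

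You appeal to an ``approximate-periodicity combining lemma'' (an approximate Fine--Wilf statement) to control the jump of the shortest alive period and to rebuild the certificate. But that lemma only tells you that $\gcd(g,g')$ is an $\Oh(d)$-period, not a $d$-period, so it need not be alive at all; and even granting that, you give no mechanism for producing $\MM\bigl(Y[0,\ldots,|Y|-g'-1],\,Y[g',\ldots,|Y|-1]\bigr)$ from the stored data for $g$ within $\Oh(p)$ space and the per-block time budget. The paper does not use any $\gcd$ structure here. Its mechanism is an elementary chain argument: to convert from a tracked period $p'$ to any other surviving period $p''$, observe that $Y[i]\ne Y[i-p'']$ can hold only if $i<p'+p''$, or $Y[i]\ne Y[i-p']$, or $Y[i-p']\ne Y[i-p'-p'']$, or $Y[i-p'-p'']\ne Y[i-p'']$. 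These four conditions bound the set of candidate $p''$-mismatch positions by $\Oh(d+p)$, and each candidate can be tested using the stored $p'$-representation together with the $\Oh(p)$ buffered symbols; the new representation is then built in $\Oh(p)$ time with no global rescan. Without this (or an equivalent concrete) step, your ``rebuild'' is unsubstantiated, and this is exactly the part you flag as ``where the care is concentrated.''

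A secondary difference: to pinpoint where $Y$ ends inside the failing block, the paper does a recursive halving (try to append the whole block, then its left half, etc.), redoing the Abrahamson computation on sub-blocks of geometrically decreasing lengths so that the total extra cost is still $\Oh(p\sqrt{p\log p})$. Your ``one more Abrahamson-style pass'' would locate $i^\ast$, but you would still need the switching step above to output the representation with respect to whichever period survives at length $i^\ast-1$, which may differ from the one you were tracking at the previous block boundary.
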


In~\cite{CFPSS:2016}, it was proved that any $k^2$ consecutive values 
$\HS_{P,T}[i] := \HD(P, T[i-|P|+1,\ldots,i])$
can be generated in $\Oh(k^2\log k)$ time using $\Oh(k^2)$ words of space if the pattern $P$ and the text $T$
share a $d$-period $p$ satisfying $d=\Oh(k)$ and $p=\Oh(k)$. 
Below, we show how to compute $k$ subsequent Hamming distances in $\Oh(k\sqrt{k\log k})$ time 
using $\Oh(k)$ words of space, under an additional assumption that the preceding $2p$ Hamming distances
are already available.

Our approach resembles Abrahamson's algorithm~\cite{DBLP:journals/siamcomp/Abrahamson87},
so let us first recall how the values $\HS_{P,T}[i]$ can be expressed in terms of convolutions.
The \emph{convolution} of two functions $f,g: \mathbb{Z}\to \mathbb{Z}$ is a function
$f * g : \mathbb{Z}\to \mathbb{Z}$ such that
\[(f* g)(i) = \sum_{j\in \mathbb{Z}}f(j)\cdot g(i-j).\]

For a string $X$ and a symbol $a\in \Sigma$, we define a \emph{characteristic function} $X_a : \mathbb{Z}\to \{0,1\}$
of positions where $a$ occurs in $X$.
In other words, $X_a[i]=1$ if and only if $0\le i < |X|$ and $X[i]=a$.
The \emph{cross-correlation} of strings $T$ and $P$ is a function $T\otimes P  : \mathbb{Z}\to \mathbb{Z}$
defined as $T \otimes P = \sum_{a \in \Sigma} T_a * P^R_a$, where $P^R$ denotes the reverse of $P$.

\begin{restatable}{fact}{fctotimesham}\label{fct:otimesham}
We have $(T\otimes P)(i) = |P|-\HS_{P,T}[i]$ for $|P|-1 \le i  < |T|$
and $(T\otimes P)(i)=0$ for $i< 0$ and $i \ge |P|+|T|$.
\end{restatable}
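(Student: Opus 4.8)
The proof is a routine unwinding of the definitions of convolution, cross-correlation, and the characteristic functions, so the plan is mostly index bookkeeping. First I would fix a single symbol $a \in \Sigma$ and expand $(T_a * P^R_a)(i) = \sum_{j\in\mathbb{Z}} T_a[j]\cdot P^R_a[i-j]$. Each term equals $1$ precisely when $0\le j<|T|$, $0\le i-j<|P|$, $T[j]=a$, and $P^R[i-j]=a$, and is $0$ otherwise. Writing $m=|P|$ and using $P^R[i-j]=P[m-1-(i-j)]$, the condition $P^R[i-j]=a$ becomes $P[j-i+m-1]=a$, while $0\le i-j<m$ becomes $i-m+1\le j\le i$.

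Next I would sum over $a\in\Sigma$. Since for each $j$ at most one $a$ can satisfy both $T[j]=a$ and $P[j-i+m-1]=a$, the two equalities collapse into the single condition $T[j]=P[j-i+m-1]$, giving
\[
(T\otimes P)(i) = \#\bigl\{\, j\in\mathbb{Z} : \max(0,\,i-m+1)\le j\le \min(|T|-1,\,i),\ T[j]=P[j-i+m-1] \,\bigr\}.
\]
For the main range $m-1\le i<|T|$ one has $i-m+1\ge 0$ and $i\le |T|-1$, so the admissible set is exactly $\{\,i-m+1,\ldots,i\,\}$. Substituting $j=i-m+1+t$ with $t=0,\ldots,m-1$ turns the matching condition into $T[i-m+1+t]=P[t]$, i.e.\ position $t$ of the window $T[i-m+1,\ldots,i]$ agrees with position $t$ of $P$. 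Hence the count is the number of matches in this alignment, namely $m-\HD(P,T[i-m+1,\ldots,i]) = |P|-\HS_{P,T}[i]$, as claimed.

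Finally I would dispose of the boundary cases. If $i<0$, then $j\ge 0$ and $i-j\ge 0$ are jointly unsatisfiable, so the set above is empty and $(T\otimes P)(i)=0$. If $i\ge |P|+|T|$, then $j\le |T|-1$ together with $i-j\le |P|-1$ forces $j\ge i-|P|+1\ge |T|$, again impossible, so $(T\otimes P)(i)=0$.

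There is no real obstacle here: the only place that needs a moment's care is the reversal of $P$, which is exactly what converts the convolution into the "aligned" comparison $T[j]$ versus $P[j-i+m-1]$ and lines up the window $T[i-m+1,\ldots,i]$ against $P[0,\ldots,m-1]$; everything else is arithmetic on the index ranges.
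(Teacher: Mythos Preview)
Your proof is correct and follows essentially the same approach as the paper: both arguments unfold the convolution $\sum_a T_a * P^R_a$ and identify each surviving term with a match between $T[j]$ and the aligned pattern symbol, then handle the boundary cases via the support constraints on $T_a$ and $P^R_a$. The only cosmetic difference is direction---the paper starts from $|P|-\HS_{P,T}[i]$ and rewrites it as $(T\otimes P)(i)$, while you start from $(T\otimes P)(i)$ and arrive at the match count---but the chain of equalities is the same.
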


If a string $X$ has a $d$-period $p$, then $X_a[i]$ is typically equal to $X_a[i+p]$.
This property can be conveniently formalised using a notion of \emph{finite differences}.
For a function $f:\mathbb{Z}\to \mathbb{Z}$ and a positive integer $p\in \mathbb{Z}_{+}$,
we define the \emph{forward difference} $\Delta_p[f] : \mathbb{Z}\to \mathbb{Z}$ 
as 
\[\Delta_p[f](i) = f(i+p)-f(i).\]
\vspace{-.5cm}
\begin{observation}
If a string $X$ has a $d$-period $p$,
then the functions $\Delta_p[X_a]$ have at most $2(d+p)$ non-zero entries in total across all $a\in \Sigma$.
\end{observation}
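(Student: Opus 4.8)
The plan is to count, for each integer position $i$, how many symbols $a \in \Sigma$ satisfy $\Delta_p[X_a](i) \neq 0$, and to organise the positions according to how the pair $\{i, i+p\}$ meets the interval $[0, n)$, where $n = |X|$. The two facts I will lean on are: (i) $X_a[i] = 1$ exactly when $0 \le i < n$ and $X[i] = a$, so in particular $X_a$ vanishes outside $[0,n)$; and (ii) the defining property of a $d$-period, namely that there are at most $d$ indices $i \in \{0,\ldots,n-p-1\}$ with $X[i] \neq X[i+p]$. I will assume $p \le n$; if $p \ge n$ then $X[0,\ldots,n-p-1]$ and $X[p,\ldots,n-1]$ are empty and the bound follows from the boundary contribution alone.

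First I would handle the \emph{interior} positions $i$ with $0 \le i$ and $i+p < n$, that is $0 \le i \le n-p-1$. Here $\Delta_p[X_a](i) = X_a[i+p] - X_a[i]$, and this is nonzero for some $a$ only when $X[i] \neq X[i+p]$; in that case it is nonzero for exactly two symbols, $a = X[i]$ (giving $-1$) and $a = X[i+p]$ (giving $+1$), and zero for every other symbol. By the $d$-period property there are at most $d$ such indices $i$, so the interior positions contribute at most $2d$ nonzero entries in total over all $a$.

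Next I would treat the \emph{boundary} positions, where exactly one of $i$ and $i+p$ lies in $[0,n)$; all remaining positions contribute nothing since $X_a$ vanishes at both arguments. There are two ranges: for $i \in \{-p,\ldots,-1\}$ we have $X_a[i] = 0$ and $i+p \in [0,n)$, so $\Delta_p[X_a](i) \neq 0$ only for the single symbol $a = X[i+p]$; and for $i \in \{n-p,\ldots,n-1\}$ we have $X_a[i+p] = 0$, so $\Delta_p[X_a](i) \neq 0$ only for the single symbol $a = X[i]$. Because $p \le n$ these two ranges are disjoint, each has exactly $p$ elements, so together they contribute at most $2p$ nonzero entries. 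Adding the interior and boundary bounds yields at most $2d + 2p = 2(d+p)$ nonzero entries across all $a \in \Sigma$, which is the claim.

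The argument is entirely elementary, so there is no genuine obstacle; the only care needed is in the case split and in the accounting observation that each interior mismatch position is charged to exactly two characteristic functions while each boundary position is charged to exactly one.
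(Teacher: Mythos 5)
Your argument is correct and complete: the paper states this as an observation without proof, and your case split into interior positions (where a nonzero entry of $\Delta_p[X_a]$ forces a mismatch $X[i]\ne X[i+p]$, charged to exactly the two symbols $X[i]$ and $X[i+p]$) and the $2p$ boundary positions (each charged to exactly one symbol) is precisely the direct counting the authors leave implicit. The only care needed is the degenerate case $p\ge n$, which you correctly flag and dispose of.
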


The following lemma reuses the idea behind Abrahamson's algorithm
to compute the convolution of functions with a sparse support, i.e., with few non-zero entries.
\begin{restatable}{lemma}{lemfft}\label{lem:fft}
Consider functions $f,g : \mathbb{Z}\to \mathbb{Z}$ with at most $n$ non-zero entries in total.
The non-zero entries among any $\delta$ consecutive values $(f*g)(i),\ldots,(f*g)(i+\delta-1)$ can be computed in
$\Oh(n\sqrt{\delta\log \delta})$ time using $\Oh(n+\delta)$ words of working space.
\end{restatable}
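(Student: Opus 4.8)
The plan is to mimic Abrahamson's frequent/rare dichotomy in the abstract setting of a sparse convolution restricted to a window of $\delta$ consecutive outputs.

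After replacing $g$ by its shift $x\mapsto g(x+i)$ (which changes neither the number of non-zeros nor, up to re-indexing the output positions, the answer), it suffices to compute the non-zero entries among $(f*g)(0),\dots,(f*g)(\delta-1)$ via the identity $(f*g)(y)=\sum_{j'}g(j')f(y-j')$. Write $A=\{j:f(j)\ne 0\}$ and $B=\{j':g(j')\ne0\}$, so $|A|+|B|\le n$, and keep both sorted. I would fix a threshold $\tau=\ceil{\sqrt{\delta\log\delta}}$ and partition $\mathbb{Z}$ into \emph{buckets} $[s\delta,(s+1)\delta)$ of length $\delta$; a bucket is \emph{heavy} if it contains at least $\tau$ elements of $A$, so there are at most $|A|/\tau\le n/\tau$ heavy buckets, identified in $\Oh(n)$ time. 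For $j'\in B$ only the length-$\delta$ window $W(j')=[-j',\delta-j')$ of $f$ is relevant, and since $W(j')$ meets at most two buckets I would call $j'$ \emph{sparse} if $W(j')$ meets no heavy bucket and \emph{dense} otherwise, assigning every dense $j'$ to the leftmost heavy bucket that $W(j')$ meets.

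Sparse $j'$ are handled naively: then $W(j')$ contains fewer than $2\tau$ elements of $A$, and since $W(j')$ slides monotonically to the left as $j'$ runs through $B$ in increasing order, the set $A\cap W(j')$ is maintained with a deque in $\Oh(n)$ total time, after which each sparse $j'$ contributes its $\Oh(\tau)$ products $g(j')f(j)$ directly into the output array --- $\Oh(n\tau)$ overall. Dense $j'$ are handled in batches, one per heavy bucket: all $j'$ assigned to a heavy bucket $s$ lie in an interval $I_s$ of length $2\delta$, and for every such $j'$ and every output $y\in[0,\delta)$ the needed value $f(y-j')$ lies in one fixed window $J_s$ of $f$ of length $3\delta$; so I would build $\Oh(\delta)$-length dense arrays holding $g$ on $I_s$ (with the $j'\in I_s$ not assigned to $s$ zeroed out, to prevent double counting) and $f$ on $J_s$, convolve them with a single FFT in $\Oh(\delta\log\delta)$ time and $\Oh(\delta)$ space, and add the portion indexed by $[0,\delta)$ to the output. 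Because each dense $j'$ is assigned to exactly one heavy bucket, and because each element of $A$ (resp.\ $B$) lies in $\Oh(1)$ of the windows $J_s$ (resp.\ intervals $I_s$), the dense part costs $\Oh\big(n+\tfrac{n}{\tau}\delta\log\delta\big)$. Summing the two regimes gives $\Oh\big(n\tau+\tfrac{n}{\tau}\delta\log\delta\big)=\Oh(n\sqrt{\delta\log\delta})$ for the chosen $\tau$, while the space is $\Oh(n)$ for the sorted supports and bucket data plus $\Oh(\delta)$ for the output and FFT buffers.

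The step I expect to be the genuine obstacle is exactly the one that is hard in Abrahamson's algorithm: the ``mixed'' situation in which one function is locally dense and the other locally sparse. A single $j'\in B$ may need to read up to $\delta$ non-zeros of $f$, which rules out a purely naive scan, yet there may be $\Omega(n)$ such $j'$, which rules out one FFT per $j'$. The resolution is to observe that these expensive $j'$ cluster --- they all fall into $\Oh(n/\tau)$ intervals of length $\Theta(\delta)$ anchored at heavy buckets (this is precisely where it matters that the window length equals the bucket length) --- so $\Oh(n/\tau)$ FFTs of size $\Oh(\delta)$ suffice; choosing $\tau$ to balance this FFT cost against the $\Oh(n\tau)$ cost of the naive part then yields the stated bound, and the remaining ingredients (shifting, sorting, bucket counting, and avoiding double counting) are routine.
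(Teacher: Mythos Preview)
Your proposal is correct and takes essentially the same approach as the paper: both arguments partition the domain of $f$ into length-$\delta$ blocks, declare a block heavy when it contains at least $\sqrt{\delta\log\delta}$ support points of $f$, handle the heavy blocks with one FFT of size $\Oh(\delta)$ each, and handle the light part by naive enumeration, balancing the two costs at $\Oh(n\sqrt{\delta\log\delta})$. The only cosmetic difference is bookkeeping: the paper processes block by block via the decomposition $(f*g)(j)=\sum_k (f|_{B_k}*g|_{B'_k})(j)$, whereas you classify each $j'\in\mathrm{supp}(g)$ as sparse or dense and assign the dense ones to heavy buckets with a slightly wider FFT window $J_s$; both routes amount to the same Abrahamson-style dichotomy and yield the same bounds.
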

This is very useful because the forward difference operator commutes with the convolution:%
\begin{restatable}{fact}{fctdiff}\label{fct:diff}
	Consider functions  $f,g : \mathbb{Z}\to \mathbb{Z}$ with finite support and a positive integer $p$. We have
	$\Delta_p[f * g] = f * \Delta_p[g] = \Delta_p[g] * f$. Consequently, $\Delta_p[f]*\Delta_p[g]=\Delta_p[\Delta_p[f * g]]$.
\end{restatable}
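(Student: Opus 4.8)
The statement to prove is \cref{fct:diff}: that $\Delta_p[f * g] = f * \Delta_p[g] = \Delta_p[f] * g$, and consequently $\Delta_p[f]*\Delta_p[g]=\Delta_p[\Delta_p[f * g]]$. The plan is to verify the first identity directly from the definitions of convolution and forward difference, and then obtain the consequence by applying the operator $\Delta_p$ twice and commuting it past the convolution each time.

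\textbf{Step 1: the basic commutation.} I would start by writing out, for any $i\in\mathbb{Z}$,
\[
\Delta_p[f * g](i) = (f*g)(i+p)-(f*g)(i) = \sum_{j\in\mathbb{Z}} f(j)\,g(i+p-j) - \sum_{j\in\mathbb{Z}} f(j)\,g(i-j).
\]
Since $f$ and $g$ have finite support, both sums are finite and can be combined termwise:
\[
\Delta_p[f * g](i) = \sum_{j\in\mathbb{Z}} f(j)\,\bigl(g(i+p-j)-g(i-j)\bigr) = \sum_{j\in\mathbb{Z}} f(j)\,\Delta_p[g](i-j) = (f * \Delta_p[g])(i).
\]
An entirely symmetric computation, pulling the difference onto the $f$ argument instead (or invoking commutativity of convolution, $f*g = g*f$, which again holds because the supports are finite), gives $\Delta_p[f * g] = \Delta_p[f] * g = g * \Delta_p[f]$. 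This establishes the displayed chain of equalities in the first sentence of the fact.

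\textbf{Step 2: the iterated consequence.} For the second sentence, I would apply Step 1 twice. First, $\Delta_p[\Delta_p[f*g]] = \Delta_p\bigl[\Delta_p[f]*g\bigr]$ by substituting the identity $\Delta_p[f*g]=\Delta_p[f]*g$ inside the outer difference. Then, treating $\Delta_p[f]$ as the first function and $g$ as the second, Step 1 applied again (moving the outer $\Delta_p$ onto the second argument) yields $\Delta_p\bigl[\Delta_p[f]*g\bigr] = \Delta_p[f] * \Delta_p[g]$. Chaining these gives $\Delta_p[\Delta_p[f*g]] = \Delta_p[f]*\Delta_p[g]$, as claimed. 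One should note that $\Delta_p[f]$ and $\Delta_p[g]$ still have finite support (each is a difference of two finitely-supported functions), so the second application of Step 1 is legitimate.

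\textbf{Main obstacle.} Honestly, there is no serious obstacle here: the only thing to be careful about is the interchange of the (finite) sum with the subtraction in Step 1, which is justified precisely by the finite-support hypothesis stated in the fact — without it one would be manipulating possibly non-absolutely-convergent series. I would make that hypothesis explicit at the point where the two sums are merged, and otherwise the proof is a two-line unwinding of definitions followed by a one-line iteration.
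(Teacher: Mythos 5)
Your proof is correct and follows essentially the same route as the paper's: expand $\Delta_p[f*g](i)$ from the definitions, merge the two finite sums to get $(f*\Delta_p[g])(i)$, invoke symmetry/commutativity for the other form, and apply the identity twice for the consequence. No issues.
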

The function $\Delta_p[\Delta_p[h]]$, called the \emph{second forward difference} of $h:\mathbb{Z}\to \mathbb{Z}$,
is denoted $\Delta_p^2[h]$; observe that $\Delta^2_p[h](i) = h(i+2p)-2h(i+p)+h(i)$.

Combining \cref{lem:fft}, \cref{fct:diff}, and the notions introduced above, we can compute the second forward differences of the cross-correlation between $P$ and $T$ efficiently  and in small space:
\begin{cor}\label{cor:mpt}
Suppose that $p$ is a $d$-period of strings $P$ and $T$.
Given the periodic representations of $P$ and $T$ with respect to $p$,
any $\delta$ consecutive values $\Delta_p^2[T\otimes P](i), \ldots, \Delta_p^2[T\otimes P](i+\delta-1)$
can be computed in $\Oh(\delta+(d+p)\sqrt{\delta\log \delta})$ time using $\Oh(d+p+\delta)$ words of space.
\end{cor}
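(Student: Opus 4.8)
The plan is to reduce everything to a handful of sparse convolutions handled by \cref{lem:fft}. First I would expand the cross-correlation as $T\otimes P=\sum_{a\in\Sigma}T_a * P^R_a$ and use linearity of the second forward difference to get $\Delta_p^2[T\otimes P]=\sum_{a\in\Sigma}\Delta_p^2[T_a * P^R_a]$. Applying \cref{fct:diff} twice turns each summand into $\Delta_p^2[T_a * P^R_a]=\Delta_p[T_a] * \Delta_p[P^R_a]$. Hence it suffices, for every symbol $a$, to compute the $\delta$ consecutive values $(\Delta_p[T_a]*\Delta_p[P^R_a])(i),\ldots,(\Delta_p[T_a]*\Delta_p[P^R_a])(i+\delta-1)$ and add them, position by position, into a running total; summing these totals over $a$ yields the desired values of $\Delta_p^2[T\otimes P]$, and by \cref{fct:otimesham} this is consistent with the definition of $\HS_{P,T}$.

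Next I would extract the sparse supports of all the functions $\Delta_p[T_a]$ and $\Delta_p[P^R_a]$ from the periodic representations. Since $p$ is a $d$-period of $P$, the observation preceding \cref{lem:fft} gives at most $2(d+p)$ non-zero entries of $\Delta_p[P_a]$ in total over all $a$: the $\le d$ positions recorded in $\MM(P[0,\ldots,|P|-p-1],P[p,\ldots,|P|-1])$ contribute the ``interior'' non-zeros (two symbols each), and the two length-$p$ boundary ranges contribute the rest. All of them, grouped by symbol, can be listed from the periodic representation of $P$ in $\Oh(d+p)$ time and space: the boundary blocks $P[0,\ldots,p-1]$ and $P[|P|-p,\ldots,|P|-1]$ are recoverable from the stored prefix together with the mismatch list (group the mismatches by residue modulo $p$), and passing to $P^R$ is a mere reflection of indices, since $\Delta_p[P^R_a](j)=-\Delta_p[P_a](|P|-1-p-j)$. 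The same applies to $T$. Let $A$ be the set of symbols $a$ for which $\Delta_p[T_a]$ or $\Delta_p[P^R_a]$ is non-zero and let $n_a$ be the combined number of non-zeros of these two functions; then $|A|=\Oh(d+p)$ and $\sum_{a\in A}n_a=\Oh(d+p)$.

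Finally I would allocate one array $R[0,\ldots,\delta-1]$ initialised to zero and, for each $a\in A$, call \cref{lem:fft} on $f=\Delta_p[T_a]$, $g=\Delta_p[P^R_a]$ to obtain the non-zero entries of $f*g$ in the window $[i,\ldots,i+\delta-1]$ in $\Oh(n_a\sqrt{\delta\log\delta})$ time and $\Oh(n_a+\delta)$ space, adding each reported value into the matching cell of $R$; after all symbols are processed, $R[t]=\Delta_p^2[T\otimes P](i+t)$ for $0\le t<\delta$. For the running time, the $\Oh(\delta)$ term covers initialising and reading off $R$, while the \cref{lem:fft} calls cost $\sum_{a\in A}\Oh(n_a\sqrt{\delta\log\delta})=\Oh((d+p)\sqrt{\delta\log\delta})$, and since the number of values written into $R$ is bounded by the same quantity, the accumulation fits within this budget; for space, $R$ and the working memory of \cref{lem:fft} (reused across the sequential calls) take $\Oh(d+p+\delta)$ words and the sparse supports $\Oh(d+p)$. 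The one point that needs care — and the only real obstacle — is organising the $\Oh(d+p)$ invocations of \cref{lem:fft} around a single shared length-$\delta$ accumulator, so that the $\Theta(\delta)$ cost of touching a length-$\delta$ array is incurred once rather than once per symbol; everything else is bookkeeping on sparse index sets.
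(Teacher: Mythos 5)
Your proposal is correct and follows essentially the same route as the paper: decompose $\Delta_p^2[T\otimes P]=\sum_{a}\Delta_p[T_a]*\Delta_p[P^R_a]$ via \cref{fct:diff}, extract the $\Oh(d+p)$ non-zero entries of the differenced characteristic functions from the periodic representations, and invoke \cref{lem:fft} once per symbol while accumulating into a shared length-$\delta$ array. The extra care you take with the boundary non-zeros, the reflection identity for $P^R$, and sharing the accumulator so the $\Theta(\delta)$ cost is paid only once are exactly the details the paper's terse proof leaves implicit, and they all check out.
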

\begin{proof}
The functions $\Delta_p[P^R_a]$ have $2(d+p)$ non-zero entries in total,
and the functions $\Delta_p[T_a]$ enjoy the same property. 
Hence, using \cref{lem:fft} to compute all the non-zero entries among
$(\Delta_p[T_a]*\Delta_p[P^R_a])(j)$ for $a\in \Sigma$ and $i \le j < i+\delta$
takes $\Oh((d+p)\sqrt{\delta\log \delta})$ time in total.
Finally, we observe that $\Delta_p^2[T\otimes P] = \sum_{a\in \Sigma} (\Delta_p[T_a]*\Delta_p[P^R_a])$
by \cref{fct:diff}.
\end{proof}

\Cref{fct:otimesham,cor:mpt} can be applied to compute the subsequent Hamming distances $\Hs{P}{T}{i}$ provided that $P$ and $T$ share a common $d$-period $p$. 
These values can be generated in $\Oh(\sqrt{(d+p)\log (d+p)})$ amortised time using $\Oh(d+p)$ words of space,
with $\Theta(d+p)$ consecutive Hamming distances actually computed in every iteration.
In \cref{lem:simpler}, we adapt this approach to the streaming setting, where $\Hs{P}{T}{i}$ needs to be known before $T[i+1]$ is revealed.
To deal with this, we use a two-part partitioning known as the \emph{tail trick}.
Similar ideas were already used to deamortise streaming pattern matching algorithms; see~\cite{CFPSS:2016,CFPSS:2015,CS:2010}. 

\begin{restatable}{lemma}{lemsimpler}\label{lem:simpler}
Let $P$ be a pattern with a $d$-period $p$.
Suppose that $p$ is also an $\Oh(d+p)$-period of the text $T$.
There exists a deterministic streaming algorithm which processes $T$ using $\Oh(d+p)$ words of space
and $\Oh(\sqrt{(d+p)\log (d+p)})$ time per symbol, and reports $\HS_{T,P}[i]$ for each position $i\ge |P|-1$. 
\end{restatable}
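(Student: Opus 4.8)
The starting point is \cref{fct:otimesham}: since $\HS_{P,T}[i]=|P|-(T\otimes P)(i)$ for all $i\ge|P|-1$, it suffices to maintain the cross-correlation $h:=T\otimes P$. The plan is to use the second-order recurrence $h(i)=2h(i-p)-h(i-2p)+\Delta_p^2[h](i-2p)$, which, given a seed consisting of $2p$ consecutive values of $h$, determines every later value of $h$ from the second differences $\Delta_p^2[h]$. By \cref{cor:mpt} --- applied with the periodic representation of $P$ with respect to its $d$-period $p$ (computable and storable in $\Oh(d+p)$ words) and with the periodic representation of $T$ with respect to $p$, which we update online in $\Oh(d+p)$ words since $p$ is by hypothesis an $\Oh(d+p)$-period of $T$ --- a block of $\delta:=\Theta(d+p)$ consecutive values $\Delta_p^2[h](m),\ldots,\Delta_p^2[h](m+\delta-1)$ is computed in $\Oh(\delta+(d+p)\sqrt{\delta\log\delta})=\Oh((d+p)\sqrt{(d+p)\log(d+p)})$ time and $\Oh(d+p)$ words; unrolling the recurrence over these $\delta$ positions yields $\delta$ Hamming distances at amortised cost $\Oh(\sqrt{(d+p)\log(d+p)})$ each.

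The obstacle is that this is only amortised: the block of $\Delta_p^2[h]$ needed to output $\HS_{P,T}[i],\ldots,\HS_{P,T}[i+\delta-1]$ depends on the text up to position $i+\delta-1$, so it cannot be produced in the single step in which $\HS_{P,T}[i]$ falls due. I would resolve this with the \emph{tail trick}. Take $\tau:=2\delta=\Theta(d+p)$, write $P=P_HP_T$ with $|P_T|=\tau$, and use the decomposition
\[\HS_{P,T}[i]=\HS_{P_H,T}[i-\tau]+\HD\bigl(P_T,\,T[i-\tau+1,\ldots,i]\bigr).\]
The first term depends only on the text up to position $i-\tau$, so it may be computed with a lag of $\tau$ positions: running the machinery above on the prefix $P_H$ (which inherits the $d$-period $p$), a block producing $\HS_{P_H,T}$ at positions $a,\ldots,a+\delta-1$ needs the text only up to $a+\delta-1$, whereas its first value is first used $\tau$ steps after position $a$; this leaves a window of $\tau-\delta+1=\Theta(d+p)$ steps in which all the required text is already present, so the block's $\Oh((d+p)\sqrt{(d+p)\log(d+p)})$ cost can be spread, via a constant number of buffers, over those $\Theta(d+p)$ steps, giving $\Oh(\sqrt{(d+p)\log(d+p)})$ worst-case time per symbol for this term. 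The $\Oh(d+p)$ Hamming distances needed to seed the recurrence and to initialise the window are obtained from the periodic representations (and from incrementally maintained prefix Hamming distances $\HD(P_H[0,\ldots,j],T[0,\ldots,j])$ and a few shifted copies), with the work spread over the output-free prefix of the stream and the same $\Theta(d+p)$-step slack windows.

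For the residual second term $\HD(P_T,T[i-\tau+1,\ldots,i])$, which must be delivered with no lag, the point is that $|P_T|=\Theta(d+p)$ is now within the $\Oh(d+p)$-word budget, so we may store $P_T$ and the length-$\tau$ text window in full and apply a deamortised version of Abrahamson's algorithm~\cite{DBLP:journals/siamcomp/Abrahamson87}: recurse with a further tail trick in which each level splits the current pattern into a prefix and a tail of half its length, matches the prefix with an Abrahamson batch lagged by that half-length (again absorbing the batch's look-ahead), and recurses on the tail. Since the pattern length halves at every level, this recursion has depth $\Oh(\log(d+p))$, the per-symbol costs form a geometric series summing to $\Oh(\sqrt{(d+p)\log(d+p)})$, and the per-level space decreases geometrically, for a total of $\Oh(d+p)$ words; combined with the top level this still gives $\Oh(\sqrt{(d+p)\log(d+p)})$ worst-case time per symbol in $\Oh(d+p)$ words overall.

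The main obstacle I expect is the deamortisation itself: one must choose the block and tail lengths in each of the two nested tail-trick recursions so that simultaneously (i) every batch's look-ahead is covered by the lag it is run under, (ii) the batches remain large enough --- $\Theta(d+p)$ at the top level --- for the sparse-convolution cost of \cref{cor:mpt} to amortise correctly, (iii) the per-symbol costs telescope to $\Oh(\sqrt{(d+p)\log(d+p)})$, and (iv) everything, including the online periodic representation of $T$, the interleaved buffers, and the seeds, stays within $\Oh(d+p)$ words. By contrast, correctness --- the reduction through \cref{fct:otimesham}, the finite-difference identities behind \cref{fct:diff} and the recurrence, and the output of \cref{cor:mpt} --- is immediate once the bookkeeping is in place.
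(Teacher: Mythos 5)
Your proposal is correct and follows essentially the same route as the paper: the second-difference recurrence seeded by $2p$ stored cross-correlation values (initialised to zero at negative indices), \cref{cor:mpt} for block-wise computation of $\Delta_p^2[T\otimes P]$, and a head/tail split with the head computed in lagged background blocks. The only divergence is cosmetic: for the length-$\Theta(d+p)$ tail the paper simply invokes the online black box of~\cite{CEPP:2008}, whereas you re-derive that black box via the geometric tail-trick recursion on Abrahamson's algorithm --- which is precisely how~\cite{CEPP:2008} is proved.
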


Our final goal in this section is to waive the assumption that $p$ is an approximate period of the text. 
Nevertheless, we observe that $p$ must still be a $(d+k)$-period of any fragment matching $P$ with $k$ mismatches.
Thus, our strategy is to identify approximately periodic fragments of $T$ which are guaranteed to
contain all $k$-mismatch occurrences of $P$; \cref{lem:simpler} is then called for each such fragment.
To make the result applicable in \cref{sec:main-algorithm},
we augment it with two extra features: 
First, we allow for delaying the output by $\Oh(|P|)$ positions, which is possible because
the approximately periodic fragments of $T$ can be stored in small space.
Secondly, we support reporting $\sk_k(T[0,\ldots,i-1])$ for any $k$-mismatch occurrence $T[i,\ldots,i+|P|-1]$.
This comes at the cost of $\Oh(\log^2 n)$ extra time per symbol due to the use of \cref{cor:streamsketch}.

\begin{restatable}{theorem}{thmsmallperiod}\label{thm:small-period}
Suppose that we are given an integer $k$ and the periodic representation of a pattern $P$
with respect to a $d$-period $p$ such that $d=\Oh(k)$ and $p=\Oh(k)$.
There exists a deterministic streaming algorithm, which uses $\Oh(k)$ words of space and $\Oh(\sqrt{k\log k} + \log^2{n})$ time per symbol
to report the $k$-mismatch occurrences of $P$ in the streamed text $T$.

For each reported occurrence, the mismatch information 
and sketch of the prefix of $T$ up to the reported occurrence can be computed on demand in $\Oh(k)$ and $\Oh(k\log^2 n)$ time,
respectively.
The algorithm may also be configured to report the output with any prescribed delay $\Delta=\Oh(|P|)$.
\end{restatable}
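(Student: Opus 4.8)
The starting point is the observation that, although the text $T$ itself need not be approximately periodic, every fragment of $T$ that matches $P$ with at most $k$ mismatches inherits $p$ as an approximate period: if $\HD(P,T[i,\ldots,i+|P|-1])\le k$, then the triangle inequality for the Hamming distance, applied to the four strings $T[i,\ldots,i+|P|-p-1]$, $P[0,\ldots,|P|-p-1]$, $P[p,\ldots,|P|-1]$, and $T[i+p,\ldots,i+|P|-1]$, shows that $p$ is a $(d+2k)$-period of $T[i,\ldots,i+|P|-1]$, and $d+2k=\Oh(k)$ since $d=\Oh(k)$. So the plan is to cover $T$ by a sequence of fragments that each have $p$ as an $\Oh(k)$-period and that between them contain every length-$|P|$ window of $T$, and then to invoke the streaming algorithm of \cref{lem:simpler} for each such fragment $F$, with the pattern $P$ (which has the required $d$-period $p$, hence also a $(d+2k)$-period $p$) and the text $F$; every reported Hamming distance not exceeding $k$ is then a $k$-mismatch occurrence of $P$, every $k$-mismatch occurrence is reported because it lies inside some $F$, and duplicates arising in fragment overlaps are discarded.

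To produce this covering on the fly I would iterate \cref{lem:perpref} with the period bound $p$ and mismatch bound $d':=d+2k$ (after, if necessary, inflating $p$ to $\Theta(k)$ so that $d'=\Oh(p)$): each invocation consumes a prefix of the not-yet-processed part of $T$ and returns the longest fragment $F$ having $p$ as a $d'$-period together with its periodic representation, and the next invocation is restarted $|P|-1$ positions before the end of $F$ (its periodic-representation state being copied over from that of $F$ rather than re-read symbol by symbol), so that any $k$-mismatch occurrence straddling a fragment boundary is contained in the following fragment; a fragment of length below $|P|$ cannot contain an occurrence starting at its first position, which keeps the decomposition making progress. Since $d=\Oh(k)$ and $p=\Oh(k)$, the periodic representation of every fragment --- and of the $\Oh(|P|)$-symbol tail of $T$ that must be buffered to bridge a boundary --- takes only $\Oh(k\log n)=\Oh(k)$ words, so the working space stays $\Oh(k)$ words even when $|P|$ is far larger than $k$; this compact storage of the text is also what makes the prescribed output delay $\Delta=\Oh(|P|)$ affordable. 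Both \cref{lem:perpref} and \cref{lem:simpler} run in $\Oh(\sqrt{p\log p})=\Oh(\sqrt{k\log k})$ amortised time per symbol and $\Oh(k)$ words, and I would remove the amortisation --- in particular, spreading the $\Oh(k\sqrt{k\log k})$ post-processing burst of \cref{lem:perpref} over the symbols of the fragment it produced --- by the tail trick already used in this setting in~\cite{CFPSS:2016,CFPSS:2015,CS:2010}.

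For the on-demand outputs I would additionally run, in parallel on the stream $T$, the data structure of \cref{cor:streamsketch}, appending each incoming symbol in $\Oh(\log^2 n)$ time; this accounts for the additive $\Oh(\log^2 n)$ term in the per-symbol running time. When an occurrence $T[i,\ldots,i+|P|-1]$ is reported, $\sk_k(T[0,\ldots,i-1])$ is retrieved from this structure in $\Oh(k\log^2 n)$ time, while the mismatch information $\MM(P,T[i,\ldots,i+|P|-1])$ is assembled in $\Oh(k)$ time from the periodic representations of $P$ and of the fragment $F$ containing the occurrence. Adding up the costs gives $\Oh(\sqrt{k\log k}+\log^2 n)$ time and $\Oh(k)$ words per symbol, as required.

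The main obstacle will be the second step: turning the fragment decomposition into a genuine streaming pipeline, bounding the number of fragments so that the per-fragment post-processing of \cref{lem:perpref} amortises correctly, composing it with \cref{lem:simpler}, and deamortising the whole pipeline to a worst-case per-symbol bound. This is exactly where the $\Oh(k)$ bounds on $d$ and $p$, the structural behaviour of overlapping approximately periodic fragments, and the tail trick all need to be combined with care.
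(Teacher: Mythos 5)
Your high-level plan --- cover $T$ by overlapping approximately periodic fragments that between them contain every potential $k$-mismatch occurrence, feed each fragment to \cref{lem:simpler}, run \cref{cor:streamsketch} in parallel for the sketches, and recover the mismatch information from the two periodic representations --- is the same as the paper's, and the triangle-inequality observation that any $k$-mismatch occurrence has $p$ as a $(d+2k)$-period is exactly the right starting point. The gap is in how you build the covering. First, taking each fragment to be the \emph{maximal} approximately periodic prefix from its start and restarting $|P|-1$ positions before its end does not bound how many fragments contain a given position: consecutive starting positions advance by only $|F_j|-|P|+1$, which can be as small as $\Oh(d'+p)=\Oh(k)$ (or even $1$), so for $|P|\gg k$ a single text position may lie in $\Omega(|P|/k)$ fragments, and the total work of the \cref{lem:simpler} instances plus the per-fragment $\Oh(p\sqrt{p\log p})$ post-processing of \cref{lem:perpref} becomes super-linear; deamortisation cannot rescue a super-linear total. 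Second, when the maximal fragment starting at $s$ is shorter than $|P|$, your progress claim fails: a single extra mismatch involving position $s$ itself may be what cuts the fragment short, so an occurrence can start at $s+1$ and extend past the fragment's end --- skipping beyond the fragment misses it, while backing up one position at a time is quadratic. Third, \cref{lem:perpref} returns the longest prefix with \emph{some} $d'$-period $p'\le p$, not with the specific period $p$, so a fragment found this way need not share an approximate period with $P$, which \cref{lem:simpler} requires.

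The paper avoids all three issues by not using maximal fragments at all. It counts only mismatches $T[i]\ne T[i-p]$ at the fixed offset $p$ and greedily maintains exactly \emph{two} overlapping blocks $B,B'$ ending at the current position, with mismatch counts satisfying $m'\le d+2k$, $m\le m'+\min(|B'|,p)+d+2k$, and $m>d+2k$ unless $B$ is a prefix of $T$; a new block is opened only when $B'$ already has $d+2k$ mismatches and a new one arrives, at which point $B$ is finalised, $B:=B'T[i]$ and $B':=T[i]$. The third invariant forces any $k$-mismatch occurrence ending at the current position to lie inside $B$ (otherwise $B$ would be contained in the occurrence and contribute more than $d+2k$ mismatches), the second keeps every block $(4k+2d+p)$-periodic with respect to the \emph{same} $p$ so that \cref{lem:simpler} applies, and the construction places every position in exactly two blocks, which yields the linear total work and, together with discarding blocks shorter than $|P|$, the $\Oh(k)$-word space bound even with delay $\Delta=\Oh(|P|)$. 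You would need to replace your maximal-fragment decomposition with an invariant-driven scheme of this kind for the argument to go through.
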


A combination of \cref{lem:perpref} and \cref{thm:small-period} now enables us to give a deterministic streaming $k$-mismatch algorithm when the pattern is guaranteed to have a $d$-period $p$ with $d=\Oh(k)$ and $p=\Oh(k)$.  
The procedure of \cref{lem:perpref} can be called to find an $\Oh(k)$-period $p'=\Oh(k)$ of $P$ (along with the periodic representation of $P$), and then $T$ can be processed using \cref{thm:small-period}.  
This concludes the description of our streaming $k$-mismatch algorithm in the  case where the pattern has at least one small approximate period.

\section{Efficiently Encoding Nearby $k$-Mismatch Occurrences}\label{sec:nearby}
In this section, we present the main technical contribution of our work:
we show how all $k$-mismatch occurrences of a pattern $P$ in a text $T$ of length $|T|=\Oh(|P|)$
can be stored in $\Oh(k(\log\frac{n}{k}+\log |\Sigma|))$ bits along with the underlying mismatch information.

First, we apply this tool to develop an optimal deterministic one-way communication protocol for Problem~\ref{prob:one},
where Bob must output all the alignments of the pattern and the text with at most $k$ mismatches, as well as the applicable mismatch information,
with no access to the text $T$ or the pattern $P$.
Next, we increase the space complexity to $\Oh(k\log n)$ bits, which lets us encode and retrieve
the $k$-mismatch occurrences using time- and space-efficient procedures.
The latter is a key building block of our $k$-mismatch streaming algorithm.

Our approach relies on the crucial observation that
overlapping $k$-mismatch occurrences of a pattern $P$
induce a $2k$-period of $P$.

\begin{fact}\label{fct:per}
	If $P$ has $k$-mismatch occurrences at positions $\ell,\ell$ of $T$ satisfying $\ell<\ell'<\ell+|P|$,
	then $\ell'-\ell \in \Per(P, 2k)$. 
\end{fact}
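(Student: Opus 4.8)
The plan is to exhibit an explicit string that witnesses the Hamming distance bound $\HD(P[0,\ldots,|P|-p-1], P[p,\ldots,|P|-1]) \le 2k$, where I write $p := \ell'-\ell$ for brevity. The natural candidate is the overlap of the text between positions $\ell$ and $\ell'$, namely $T[\ell',\ldots,\ell+|P|-1]$; this fragment has length $|P|-p$, matching the length of the two shifted copies of $P$ we must compare. First I would observe that this overlap fragment is aligned, on the one hand, with the suffix $P[p,\ldots,|P|-1]$ (via the occurrence of $P$ starting at $\ell$) and, on the other hand, with the prefix $P[0,\ldots,|P|-p-1]$ (via the occurrence of $P$ starting at $\ell'$). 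Concretely, for $0 \le i \le |P|-p-1$, the symbol $P[p+i]$ is aligned with $T[\ell'+i]$ in the first occurrence and $P[i]$ is aligned with $T[\ell'+i]$ in the second.

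The key step is then a counting argument via the triangle inequality for Hamming distance. Each occurrence has at most $k$ mismatches against $T$, and in particular at most $k$ mismatches within the overlapping region $T[\ell',\ldots,\ell+|P|-1]$. So $\HD(P[p,\ldots,|P|-1], T[\ell',\ldots,\ell+|P|-1]) \le k$ and $\HD(P[0,\ldots,|P|-p-1], T[\ell',\ldots,\ell+|P|-1]) \le k$. Applying the triangle inequality with the overlap fragment as the intermediate point yields $\HD(P[0,\ldots,|P|-p-1], P[p,\ldots,|P|-1]) \le 2k$, which is precisely the statement that $p \in \Per(P, 2k)$. Since $0 < p < |P|$ by the hypothesis $\ell < \ell' < \ell + |P|$, the quantity $p$ is a genuine positive integer less than $|P|$, so the definition of $2k$-period applies without degeneracy.

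I expect essentially no real obstacle here; the only mild care needed is bookkeeping of indices, i.e., making sure the two alignments of the overlap fragment with $P$ are set up with the correct offsets so that the triangle inequality is applied to three strings of the same length $|P|-p$. It may also be worth noting (or simply leaving implicit) that mismatches outside the overlap region are irrelevant, so restricting each occurrence's mismatch count to the overlap only helps. A one-line proof along these lines should suffice.
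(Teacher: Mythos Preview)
Your argument is correct and is exactly the standard proof of this fact. The paper actually states \cref{fct:per} without proof (treating it as folklore; later in \cref{sec:rand} it remarks only that the fact ``can be trivially made constructive''), so there is nothing to compare against beyond noting that your triangle-inequality-via-the-overlap argument is precisely the intended reasoning.
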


Consequently, we shall build a data structure 
that for a given string $X$ of length $n$
and a collection $\PP\sub\Per(X,k)$ of its $k$-periods lets us efficiently
encode all the underlying mismatch information $\MM(X[0,\ldots,n-p-1],X[p,\ldots,n-1])$ for $p\in \PP$.
It turns out that $\Oh(k(\log \frac{n}{k}+\log |\Sigma|))$ bits 
are sufficient provided that $\PP\sub \Per_{\le n/4}(X,k)$,
i.e., if each $k$-period $p\in \PP$ satisfies $p\le \frac14 n$.
Remarkably, the asymptotic size of the data structure matches the size of mismatch information of a single $k$-period despite encoding considerably more information.

Our data structure must in particular be able to retrieve $X[i]$
unless $X[i]=X[i+p]$ and $X[i]=X[i-p]$ hold for each $p\in \PP$ (such that $i+p < n$ and $i-p\ge 0$, respectively).
This idea can be conveniently formalised using the following concept of classes modulo $d$ in a string $X$.
\begin{definition}
	Let $X$ be a fixed string.
	For integers $i$ and $d$ with $d\ge 0$, the \emph{$i$-th class modulo $p$} (in $X$) is defined as
	a multiset:
	\[\C_{d}(X,i)=\{X[i'] : 1\le i \le |X| \text{  and  } i'\equiv i\!\!\! \pmod{p}\}.\]
	For $d=0$ we assume that $i' \equiv i \pmod{0}$ if and only if $i=i'$.
\end{definition}

Now, it is easy to see that we do not need to store $X[i]$ if the class
$\C_{d}(X,i)$ modulo $d=\gcd(\PP)$ is \emph{uniform},
i.e., if it contains just one element (with positive multiplicity).
Indeed, a mismatch $X[i]\ne X[i+p]$ or $X[i]\ne X[i-p]$ may only occur in a non-uniform class.
This is the motivation behind the following component developed in \cref{sec:key_proof}.

\begin{restatable}{lemma}{keylem}\label{lem:key}
	For a string $X$ and an integer $k$, let $\PP \sub \Per_{\le n/4}(X,k)$ and $d=\gcd(\PP)$.
	There is a data structure of size $\Oh(k(\log \frac{n}{k}+ \log |\Sigma|))$ bits which given an index $i$ 
	retrieves $X[i]$ in $\Oh(\log n)$ time if $\C_{d}(X,i)$ is non-uniform, and returns a sentinel symbol $\#$ if the class
	is uniform.
	The data structure can be initialised in $\Oh(1)$ time with $\PP=\emptyset$,
	and updated in $\Oh(k\log n)$ time subject to adding a $k$-period $p\in \Per_{\le n/4}(X,k)$ to $\PP$ given $\MM(X[0,\ldots,n-p-1],X[p,\ldots,n-1])$.
\end{restatable}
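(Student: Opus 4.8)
The plan is to realise the structure as a run-length encoding of the non-uniform residue classes modulo $d=\gcd(\PP)$, so the decisive step is combinatorial. Let $m=|\MM(X[0,\ldots,n-d-1],X[d,\ldots,n-1])|$ be the number of indices $j$ with $X[j]\ne X[j+d]$, i.e.\ the smallest $m$ with $d\in\Per(X,m)$. I claim $m=\Oh(k)$; this is exactly where $\PP\sub\Per_{\le n/4}(X,k)$ is used, since any two periods in $\PP$ sum to at most $n/2$, well inside the Fine--Wilf range, and one can descend from the periods of $\PP$ to their gcd by an approximate-periodicity argument losing only an absolute constant factor rather than a factor in $|\PP|$ or $\log n$. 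Granting this, two easy observations complete the reduction: as $d\mid p$ for every $p\in\PP$, a mismatch $X[j]\ne X[j\pm p]$ keeps $j$ and $j\pm p$ in the same class $\C_d(X,\cdot)$, so every index the structure must truly return sits in a non-uniform class; and a class $(X[r],X[r+d],X[r+2d],\dots)$ is non-uniform exactly when this sequence is non-constant. Hence at most $m$ classes are non-uniform, and the total number of maximal runs across them is at most (number of non-uniform classes) $+$ (number of transitions) $\le 2m=\Oh(k)$.

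The static structure stores, for each non-uniform residue $r$, the maximal runs of $(X[r],X[r+d],\dots)$ as pairs (starting index in $[0,n)$, symbol). There are $\Oh(k)$ runs; their starting indices, together with the set of non-uniform residues in $\{0,\ldots,d-1\}$, are held in a compressed predecessor structure (Elias--Fano-style) of $\Oh(k\log\frac{n}{k})$ bits supporting $\Oh(\log n)$-time predecessor queries, and the $\Oh(k)$ symbols sit in a parallel array of $\Oh(k\log|\Sigma|)$ bits; $d$ costs $\Oh(\log n)$ bits. A query on $i$ computes $r=i\bmod d$, fetches the run of class $r$ with the largest starting index $\le i$ and returns its symbol, or returns $\#$ if class $r$ holds no run (i.e.\ is uniform); all in $\Oh(\log n)$ time. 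Initialising with $\PP=\emptyset$ sets $d=0$, under which every class is a singleton and uniform, so nothing is stored, in $\Oh(1)$ time.

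On an update we receive $p\in\Per_{\le n/4}(X,k)$ together with $\MM(X[0,\dots,n-p-1],X[p,\dots,n-1])$ and set $d'=\gcd(d,p)$. If $d'=d$ the classes and the stored runs are unchanged, so nothing is done. Otherwise $d'\mid d$, each class modulo $d'$ is a union of $d/d'$ classes modulo $d$, and by the combinatorial bound applied to $\PP\cup\{p\}$ there remain only $\Oh(k)$ runs modulo $d'$; thus it suffices to rebuild the $\Oh(k)$ non-uniform classes modulo $d'$. Each of these contains an old non-uniform class, except possibly when two old \emph{uniform} classes carrying distinct symbols merge — which would create $\Theta(n/d)$ transitions, incompatible with the $\Oh(k)$ run bound unless $n/d=\Oh(k)$, a regime where all non-uniform classes together hold only $\Oh(k)$ indices and we just list them explicitly. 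In the main regime we rebuild each affected class from $X$ restricted to it, reading its old non-uniform part from the current structure and recovering the symbol of each absorbed old-uniform class from an index touched by the supplied mismatch list (within the merged class such a class neighbours one whose symbols are already known through a recorded match or mismatch); rebuilding the Elias--Fano structure and symbol array for the $\Oh(k)$ touched runs then costs $\Oh(k\log n)$.

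The main obstacle is the combinatorial claim $d=\gcd(\PP)\in\Per(X,\Oh(k))$ with the constant independent of $|\PP|$: applying an approximate Fine--Wilf lemma pairwise, or along the Euclidean algorithm, loses a constant per gcd-refinement, and summing the $\le k$ bad edges of each period over a minimal gcd-$d$ generating subset of $\PP$ (whose size can be $\Theta(\log n)$) only yields $\Oh(k\log n)$, so a global argument is required. The shape I would pursue: delete all bad edges $\{j,j+p\}$ (with $X[j]\ne X[j+p]$, $p\in\PP$) from the step graph on $\{0,\ldots,n-1\}$ with step set $\PP$ — connected within each residue class modulo $d$ because $\gcd(\PP)=d$ divides all steps and the steps are at most $n/4$ — so every surviving component is monochromatic, and then bound the number of indices $i$ with $i$ and $i+d$ in different components; the genuinely hard part is pushing this count down to $\Oh(k)$, which needs a careful charging of short path-witnesses against bad edges so that no bad edge is reused too often and no spurious $|\PP|$- or $\log n$-factor creeps in, with large boundary defects absorbed by the trivial bound $n=\Oh(k)$. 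The second delicate point is the update's recovery, from only the previous structure and one incoming mismatch list, of the symbols of old uniform classes that are swallowed by new non-uniform classes.
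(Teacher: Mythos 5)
Your construction stands or falls with the combinatorial claim that $d=\gcd(\PP)$ is itself an $\Oh(k)$-period of $X$ (equivalently, that the sequences $(X[r],X[r+d],X[r+2d],\dots)$ have $\Oh(k)$ runs in total). You correctly flag this as the hard part, but it is not merely hard --- it is false, so the single-level run-length encoding cannot work. Counterexample: take $n=10^6$, $k=100$, $p_1=19800=200\cdot 99$ and $p_2=238000=200\cdot 1190$; both are at most $n/4$ and $\gcd(p_1,p_2)=200$ since $\gcd(99,1190)=1$. Set $X[i]=c$ for $i\not\equiv 0\pmod{200}$, and $X[200t]=a$ or $b$ according to the parity of $t\bmod 99$. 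Then $p_1$ is an exact period (adding $p_1$ adds $99$ to $t$, preserving $t\bmod 99$), and $p_2$ causes a mismatch only when $t\bmod 99\in\{97,98\}$ (adding $p_2$ adds $2$ to $t\bmod 99$, which preserves parity except at the wrap-around), giving roughly $2\cdot 3810/99\approx 77\le k$ mismatches. Yet stepping by $d=200$ flips the parity of $t\bmod 99$ almost always, so $\HD(X[0,\ldots,n-d-1],X[d,\ldots,n-1])\approx 4950\gg k$, and the unique non-uniform class modulo $d$ has $\Theta(n/d)$ runs. The ratio to $k$ is unbounded as the parameters grow, and $n/d\gg k$ here, so neither of your fallback regimes applies. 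The same example also defeats the update step's hope that only $\Oh(k)$ runs ever need to be materialised.

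The paper's construction is designed precisely around this obstruction. It never encodes a class modulo $d$ in the natural step-$d$ order; instead it keeps the whole chain $d_0=0, d_1,\ldots,d_s=d$ of intermediate gcds ($s=\Oh(\log n)$), and inside each class modulo $d_\ell$ it records only the \emph{majorities} of the contained classes modulo $d_{\ell-1}$, listed in the cyclic order obtained by repeatedly adding $p_\ell$ --- the specific period that caused the gcd to drop to $d_\ell$. In that order adjacent majorities usually agree (in the example above, the majority string is $a^{50}b^{49}$, two runs, even though the step-$d$ order alternates), and a discharging argument shows the total number of run boundaries over all levels is $\Oh(k)$: a uniform class at level $\ell$ whose majority disagrees with its $p_{\ell+1}$-shifted successor forces at least $2^{\ell-1}$ mismatches of $p_{\ell+1}$, so such classes number at most $k/2^{\ell-1}$ and the sum over levels is geometric. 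A query then walks up the $\Oh(\log n)$ levels, using the stored inverses $(p_{\ell}/d_{\ell})^{-1}\bmod (d_{\ell-1}/d_{\ell})$ to convert a position into an index of the majority string. So the multi-level hierarchy and the $p_\ell$-induced reordering are not optimisations but are what makes the $\Oh(k(\log\frac{n}{k}+\log|\Sigma|))$ bound attainable; a flat encoding keyed on $d$ alone cannot achieve it.
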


To prove \cref{lem:key}, we build a sequence $0=d_0,\ldots,d_s=d$ of integers such that $d_\ell = \gcd(d_{\ell-1},p_\ell)$,
$p_{\ell}\in \PP$ for $1\le \ell \le s$, and $s = \Oh(\log n)$. Next, we observe that classes modulo $d_{\ell}$
for $\ell=0,\ldots,s$ form a sequence of partitions of $\{X[i]: 0 \le i < n\}$, with each partition coarser than the previous one.
We keep the \emph{majority} of a class modulo $d_{\ell}$ whenever it differs form the majority of the enclosing class modulo $d_{\ell+1}$.
The majority of $\C_{d_{\ell}}(i)$ is likely to match the majority of $\C_{d_\ell}(i+p_{\ell})$, which lets us store these characters using \emph{run-length encoding}. 
On the top level, we keep the majority of every non-uniform class modulo $d_s$;
the number of such classes turns out to be $\Oh(k)$.

\subsection{An Optimal Deterministic Protocol for \cref{prob:one}}\label{sec:det}
Using \cref{lem:key}, we can now develop an efficient protocol for \cref{prob:one} and thus prove \cref{thm:prob-one}.   
	We shall assume that the text $T$ is of length at most $\frac54n$.
	If the actual text is longer, the full protocol splits it into substrings of length $\frac54n$
	with overlaps of length $n$,  thus enabling us to find all $k$-mismatch occurrences by repeating the protocol a constant number of times.
	
	If $P$ does not occur in $T$, Alice may send an empty message to Bob.
	Otherwise, her message consists of the following data:
	\begin{itemize}[itemsep=0pt,topsep=5pt,parsep=2pt]
		\item the locations $\ell$ and $\ell'$ of the leftmost and the rightmost $k$-mismatch occurrence of $P$ in $T$, along with the underlying mismatch information;
		\item the value $d=\gcd(\PP)$ and the data structure of \cref{lem:key}
		for $\PP$ consisting of distances between locations of $k$-mismatch occurrence of $P$ in $T$; $\PP\sub \Per_{\le n/4}(P, 2k)$ due to \cref{fct:per}.
	\end{itemize}
	By \cref{lem:key} and the tight bound on the size of the mismatch information, the message takes $\Oh(k(\log \frac{n}{k}+ \log |\Sigma|))$ bits.
	Now, it suffices to describe how Bob can retrieve all the $k$-mismatch occurrences of $P$ in $T$, as well as the corresponding mismatch information.
	
	For this, we show that Bob can construct a \emph{proxy pattern} $P_\#$ as well as a \emph{proxy text} $T'_\#$  
	and look for $k$-mismatch occurrences of $P_\#$ in $T'_\#$ instead of the $k$-mismatch occurrences of $P$ in $T$.
	More precisely, we first obtain $T'=T[\ell,\ldots,\ell'+|P|-1]$ from $T$
	by trimming the prefix and the suffix of $T$ disjoint with the $k$-mismatch occurrences of $P$.
	Next, we construct $P_\#$ from $P$
	by replacing $P[i]$ with a sentinel $\#_{i\bmod d}$ (distinct between classes modulo $d$) whenever $\C_d(P,i)\cup \C_{d}(T',i)$ is uniform.
	Similarly, we transform $T'$ to obtain $T'_\#$.
	
	The main property of these strings is that in any occurrence of $P$ in $T'$, we have not altered the symbols involved in a mismatch,
	while matching symbols could only be replaced by sentinels in a consistent way.
	\begin{restatable}{fact}{fctequiv}\label{fct:equiv}
		The pattern $P$ has a $k$-mismatch occurrence at position $j$ of $T$
		if and only if $P_\#$ has a $k$-mismatch occurrence at position $j-\ell$ of $T'_\#$.
		Moreover, in that case we have
		\[\MM(P,T[j,\ldots,j+n-1])=\MM(P_\#,T'_\#[j-\ell,\ldots,j-\ell+n-1]).\]
	\end{restatable}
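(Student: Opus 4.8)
The plan is to reduce the whole statement to a single position-by-position comparison that is valid as soon as the alignment offset is a multiple of $d=\gcd(\PP)$, together with the fact that every relevant $k$-mismatch alignment does have such an offset. The combinatorial core I would record first: since $|T|\le\frac54 n$, the rightmost $k$-mismatch occurrence satisfies $\ell'\le|T|-n\le\frac14 n$, hence $\ell'-\ell<n=|P|$. Therefore any two $k$-mismatch occurrences of $P$ in $T$ overlap, every $k$-mismatch occurrence location $j$ lies in $[\ell,\ell']$, and $j-\ell$ is (a sum of) distance(s) between $k$-mismatch occurrences of $P$ in $T$; by the choice of $\PP$ and \cref{fct:per} all such distances lie in $\Per_{\le n/4}(P,2k)$, and in particular each is divisible by $d$, so $d\mid(j-\ell)$.

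The second step is a comparison lemma: for every offset $j'$ with $0\le j'\le\ell'-\ell$ and $d\mid j'$, writing $j=j'+\ell$, one has $\MM(P,T[j,\ldots,j+n-1])=\MM(P_\#,T'_\#[j',\ldots,j'+n-1])$. All indices stay in range (using $|T'_\#|=\ell'-\ell+n$) and $T[j+i]=T'[j'+i]$ for every $i$. Fix $i\in[0,n)$; since $d\mid j'$ we have $j'+i\equiv i\pmod d$, so the class inspected when defining $P_\#[i]$ is literally the class inspected when defining $T'_\#[j'+i]$. If $\C_d(P,i)\cup\C_d(T',i)$ is non-uniform, then $P_\#[i]=P[i]$ and $T'_\#[j'+i]=T'[j'+i]=T[j+i]$, so position $i$ contributes the same triple---or none---to both mismatch sets. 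If it is uniform, then $P_\#[i]=\#_{i\bmod d}=\#_{(j'+i)\bmod d}=T'_\#[j'+i]$, contributing nothing on the $\#$-side, while $P[i]\in\C_d(P,i)$ and $T[j+i]=T'[j'+i]\in\C_d(T',i)$ lie in the same uniform (singleton) multiset and so are equal, contributing nothing on the original side. Summing over $i$ gives the identity, hence also equality of Hamming distances. Note this argument uses only $d\mid j'$ and the range of $j'$, not that $j$ is an occurrence.

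From here the forward implication is immediate: if $P$ has a $k$-mismatch occurrence at $j$, then $j':=j-\ell$ satisfies the hypotheses of the comparison lemma by the first step, so $\HD(P_\#,T'_\#[j',\ldots])=\HD(P,T[j,\ldots])\le k$ and the mismatch information agrees. The reverse implication is where the one genuine subtlety lies, since a $k$-mismatch occurrence of $P_\#$ in $T'_\#$ is not a priori at an offset divisible by $d$; I would handle it by proving directly that $P$ has a $k$-mismatch occurrence at $j:=j'+\ell$ (here $\ell\le j\le\ell'$ follows from $0\le j'\le|T'_\#|-n$). Take any of the $\ge n-k$ positions $i$ with $P_\#[i]=T'_\#[j'+i]$. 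If $P_\#[i]$ is a symbol of $\Sigma$, it forces $T'_\#[j'+i]$ to be that same symbol (sentinels are fresh, distinct from all of $\Sigma$), so $P[i]=P_\#[i]=T'_\#[j'+i]=T'[j'+i]=T[j+i]$. If $P_\#[i]=\#_{i\bmod d}$, then $T'_\#[j'+i]=\#_{i\bmod d}$; but a sentinel $\#_r$ appears in $T'_\#$ only at positions $\equiv r\pmod d$, forcing $d\mid j'$, after which (the class being uniform) $P[i]=T[j+i]$ exactly as in the comparison lemma. Either way $P[i]=T[j+i]$, so $\HD(P,T[j,\ldots])\le k$.

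Finally, once $P$ is known to have a $k$-mismatch occurrence at $j$ (in either direction), the first step gives $d\mid(j-\ell)=j'$, so the comparison lemma applies and delivers the asserted equality $\MM(P,T[j,\ldots,j+n-1])=\MM(P_\#,T'_\#[j',\ldots,j'+n-1])$. The only place that needs real care is thus the extraction of $d\mid j'$ in the reverse direction via a matched sentinel (or, failing that, via matched real symbols); everything else is bookkeeping about ranges and about which symbols the construction leaves untouched.
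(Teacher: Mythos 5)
Your proposal is correct and follows essentially the same route as the paper's proof: both directions rest on the position-by-position comparison (non-uniform classes keep their original symbols, uniform classes become a common sentinel), the forward direction uses $d\mid(j-\ell)$ from the definition of $\PP$, and the converse extracts $d\mid j'$ from a matched sentinel $\#_{i\bmod d}=\#_{(j'+i)\bmod d}$ exactly as the paper does. Your write-up is if anything slightly more explicit about isolating the comparison step as a standalone lemma and about why the divisibility can be recovered in the converse, but the underlying argument is identical.
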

	
	Furthermore, we deduce that Bob can retrieve any symbol of $P_\#$	
	based on \cref{lem:key} (if $\C_d(P,i)$ is non-uniform) or the mismatch information for the $k$-mismatch occurrences of $P$
	as a prefix and a suffix of $T$ (otherwise). Similarly, Bob can retrieve $T'_\#$ because $T'$ is covered
	by the two $k$-mismatch occurrences of $P$.
	\begin{restatable}{fact}{fctretr}\label{fct:retr}
		Bob can retrieve $P_\#$ and $T'_\#$ from Alice's message.
	\end{restatable}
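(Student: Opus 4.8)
The plan is to make Bob's reconstruction explicit. He has three things from Alice: the positions $\ell,\ell'$ together with the two mismatch sets $\MM(P,T[\ell,\ldots,\ell+n-1])$ and $\MM(P,T[\ell',\ldots,\ell'+n-1])$; the integer $d=\gcd(\PP)$; and the \cref{lem:key} data structure built for $P$ and $\PP$. I would first dispose of the degenerate case: if there is at most one $k$-mismatch occurrence, Alice's message already lists that alignment with its mismatch information, so nothing needs to be reconstructed. Otherwise $\PP\ne\emptyset$, so $d\ge 1$, and $\ell'-\ell\in\PP\sub\Per_{\le n/4}(P,2k)$ by \cref{fct:per}; hence $d\mid\ell'-\ell$, and from $\ell'+n\le|T|\le\frac54 n$ we get $\ell'\le\frac14 n$, so $\ell'-\ell\le\frac14 n<n$. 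Bob computes $|T'|=\ell'-\ell+n$. Inside $T'$ the leftmost and rightmost $k$-mismatch occurrences of $P$ begin at positions $0$ and $\ell'-\ell$, and their spans $[0,n)$ and $[\ell'-\ell,|T'|)$ together cover all of $[0,|T'|)$; thus every position of $T'$ lies in one of the two occurrences, and at a non-mismatch position the symbol of $T'$ equals the aligned symbol of $P$. Since $d\mid\ell'-\ell$, the alignment preserves residues modulo $d$, so every symbol of $\C_d(T',i)$ is either a pattern symbol $P[j]$ with $j\equiv i\pmod d$, or (for one of the at most $2k$ positions that are mismatch positions of the two occurrences) a symbol recorded in one of the two mismatch sets, for which Bob also knows the corresponding pattern symbol.

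Next I would prove the combinatorial fact that drives the reconstruction: for every $i$, the class $\C_d(P,i)\cup\C_d(T',i)$ is non-uniform if and only if \emph{either} $\C_d(P,i)$ is non-uniform \emph{or} some mismatch position of the leftmost or rightmost occurrence is congruent to $i$ modulo $d$. The forward implication is easy: a non-uniform $\C_d(P,i)$ already makes the union non-uniform, and a mismatch position $j\equiv i\pmod d$ contributes a pattern symbol $P[j]\in\C_d(P,i)$ together with a distinct text symbol in $\C_d(T',i)$. Conversely, if $\C_d(P,i)=\{c\}$ is uniform and no mismatch position is $\equiv i\pmod d$, then by the covering property every $i'\equiv i\pmod d$ in $[0,|T'|)$ is a non-mismatch position of some occurrence, so $T'[i']$ equals a pattern symbol $P[j]$ with $j\equiv i\pmod d$, i.e.\ $T'[i']=c$; hence $\C_d(T',i)=\{c\}$ and the union is uniform. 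Every condition in this fact is checkable by Bob: non-uniformity of $\C_d(P,i)$ by querying \cref{lem:key} at any index of the class (it returns a real symbol exactly when the class is non-uniform), and the congruence condition by scanning the two mismatch sets.

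Given this, Bob's procedure is as follows: for each residue $i$ modulo $d$, declare the class \emph{exposed} if \cref{lem:key} returns a real symbol on it or if some mismatch position is $\equiv i\pmod d$, and \emph{hidden} otherwise; by the fact above and the definition of the proxies, a position $j$ (resp.\ $i'$) is left unchanged in $P_\#$ (resp.\ in $T'_\#$) exactly when its class is exposed. For a hidden class Bob writes the sentinel $\#_{i\bmod d}$ at every index of the class in both $P_\#$ and $T'_\#$. For an exposed class he first recovers every $P[j]$ in it: via \cref{lem:key} if $\C_d(P,i)$ is non-uniform, and otherwise as the constant value $c=P[j]$ read off the pattern side of any mismatch congruent to $i$ (which exists, since the class is exposed). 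He then recovers every $T'[i']$ in the class: if $i'$ is a recorded mismatch position of one of the two occurrences, its value is in the corresponding mismatch set; otherwise $T'[i']$ equals the already-recovered pattern symbol aligned to it through the leftmost or rightmost occurrence. Assembling these symbols yields $P_\#$ and $T'_\#$ exactly, which proves \cref{fct:retr}.

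The step I expect to need the most care is the gap between ``$\C_d(P,i)$ is uniform'', which is what \cref{lem:key} actually tests, and ``$\C_d(P,i)\cup\C_d(T',i)$ is uniform'', which is what the definition of the proxies uses. One must show that whenever these two differ the discrepancy is witnessed by one of the at most $2k$ recorded mismatches, so that Bob both detects the non-uniformity and learns the (then necessarily constant) value of $P$ on that class. The two ingredients that make this go through are the covering property of the leftmost and rightmost occurrences inside $T'$ and the divisibility $d\mid\ell'-\ell$; the remaining case analysis — whether a given position of $T'$ lies in the left occurrence, the right occurrence, or both — is routine.
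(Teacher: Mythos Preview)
Your proposal is correct and follows essentially the same approach as the paper: characterize when $\C_d(P,i)\cup\C_d(T',i)$ is uniform in terms of the uniformity of $\C_d(P,i)$ together with the residues of the recorded mismatch positions, then recover $P_\#$ via \cref{lem:key} (or via the pattern side of a recorded mismatch when $\C_d(P,i)$ is uniform) and $T'_\#$ via the covering of $T'$ by the two extreme occurrences. Your write-up is more explicit than the paper's---you spell out the degenerate case, justify $d\mid \ell'-\ell$, and give both directions of the characterization---but the argument is the same.
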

	
	Consequently, Bob's strategy is to compute the mismatch information for all the alignments of $P_\#$ in $T'_\#$
	and output it  (with the starting position shifted by $\ell$) whenever there are at most $k$ mismatches.
	By \cref{fct:equiv}, this coincides with the desired output of $k$-mismatch occurrences of $P$ in $T$.
	This concludes the proof of \cref{thm:prob-one}, whose statement is repeated below.

\thmone*

\subsection{Algorithmic Consequences}\label{sec:rand}
In this section, we apply the ideas behind \cref{thm:prob-one} to develop time- and space-efficient compression scheme for the following representation of the $k$-mismatch occurrences of $P$ in $T$.
\begin{definition}
Consider a pattern $P$ and a text $T$.
We define the \emph{stream of $k$-mismatch information} of $P$ in $T$ as a sorted sequence
of starting positions $i$ of $k$-mismatch occurrences of $P$ in $T$,
each associated with the mismatch information and the sketch $\sk_k(T[0,\ldots,i-1])$.
\end{definition}
 Note that the inclusion of sketches increases the space consumption
of a single entry to $\Theta(k\log n)$ bits compared to $\Theta(k(\log \frac{n}{k}+\log |\Sigma|))$ bits required
for the mismatch information.

Let us see how to implement fast procedures for both parties of \cref{prob:one}.
As far as Alice is concerned, we observe that \cref{fct:per} can be trivially made constructive.
In other words, it is easy to transform the mismatch information of two overlapping $k$-mismatch occurrences of $P$
to the mismatch information of the induced $2k$-period of $P$.
As stated in \cref{lem:key}, the underlying component can be efficiently constructed based on this data.

To implement Bob's procedure, we observe that the proof of \cref{fct:equiv}
actually allows for read-only random access to $P_\#$ and $T'_\#$, with $\Oh(\log n)$ time required to retrieve any symbol.
Thus, we develop an efficient $k$-mismatch algorithm for that model, and we use it to locate $k$-mismatch occurrences of $P_\#$ in $T'_\#$.
This procedure is based on the results of \cref{sec:periodic} applied for the longest prefix $Q$ of $P$
with a $(2k+1)$-period $p\le k$, with the sketches of \cref{sec:fingerprint} employed to check
which $k$-mismatch occurrences of $Q$ extend to $k$-mismatch occurrences of $P$.

\begin{restatable}{theorem}{thmram}\label{thm:ram}
In the read-only random-access model, the streaming $k$-mismatch problem can be solved
on-line with a Monte-Carlo algorithm 
using $\Oh(k\log n)$ bits of working space and $\Oh(\sqrt{k\log k}+\log^3 n)$ time
per symbol, including $\Oh(1)$ symbol reads.
For any reported $k$-mismatch occurrence, the mismatch information can be retrieved
on demand in $\Oh(k)$ time.
\end{restatable}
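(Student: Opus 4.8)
The plan is to combine the machinery of \cref{sec:periodic} with the rolling sketch of \cref{sec:fingerprint} in a two-level scheme, exactly mirroring the structure hinted at in the text just before the statement. First I would compute, once and for all in a preprocessing pass, the longest prefix $Q$ of $P$ that has a $(2k+1)$-period $p \le k$. This is precisely what \cref{lem:perpref} provides (applied with the parameters $p \leftarrow k$ and $d \leftarrow 2k+1 = \Oh(k)$): in the read-only model we can simulate the streaming input by feeding $P$ symbol by symbol, obtaining the periodic representation of $Q$ with respect to $p'\le p$ in $\Oh(k)$ words of space, $\Oh(\sqrt{k\log k})$ per-symbol time, and $\Oh(k\sqrt{k\log k})=\Oh(k^2\pl n)$ post-processing time (this fits within the claimed bounds since it is a one-time cost amortised over the at least $n$ symbols processed). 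I would also, during this pass, store the sketch $\sk_k(P)$ using \cref{cor:streamsketch}, together with enough information (e.g.\ $\sk_k$ of $P[0,\ldots,|Q|-1]$ and of the suffix $P[|Q|,\ldots,n-1]$, via \cref{lem:manip}) to later compare a candidate window against $P$.

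Next comes the online phase over the text $T$. There are two cases depending on whether $|Q| = |P|$ (equivalently, whether $P$ itself has a small $(2k+1)$-period) or $|Q| < |P|$. If $|Q|=|P|$, then $P$ has a $d$-period $p$ with $d,p=\Oh(k)$, and I would simply invoke \cref{thm:small-period} directly on $T$: it reports all $k$-mismatch occurrences of $P$ using $\Oh(k)$ words of space and $\Oh(\sqrt{k\log k}+\log^2 n)$ time per symbol, and for each occurrence it supplies the mismatch information in $\Oh(k)$ time, which is within budget. If instead $|Q| < |P|$, I would run \cref{thm:small-period} to find the $k$-mismatch occurrences of the prefix $Q$ in $T$, configured with delay $\Delta = |P| - |Q| = \Oh(|P|)$ so that by the time an occurrence of $Q$ at position $i$ is reported, the entire candidate window $T[i,\ldots,i+|P|-1]$ has been read. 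Crucially, by the defining maximality of $Q$, the prefix $Q\#$ of length $|Q|+1$ has no $(2k+1)$-period $\le k$; a standard argument (overlapping occurrences would force such a period by \cref{fct:per} applied to $Q$ with the extra symbol) shows that the $k$-mismatch occurrences of $Q$, hence of $P$, start at least $k$ positions apart, so there are only $\Oh(|T|/k)$ of them. For each reported occurrence of $Q$ at position $i$, \cref{thm:small-period} also hands us $\sk_k(T[0,\ldots,i-1])$ in $\Oh(k\log^2 n)$ time; combined with $\sk_k(T[0,\ldots,i+|P|-1])$, which I maintain via \cref{cor:streamsketch} on the read-only stream $T$ (using $\Oh(\log^2 n)$ per-symbol time for the appends), I can derive $\sk_k(T[i,\ldots,i+|P|-1])$ through \cref{lem:manip}, and then run \cref{cor:sketches} against $\sk_k(P)$ in $\Oh(k\log^3 n)$ time to decide whether this is a genuine $k$-mismatch occurrence of $P$ and, if so, extract $\MM(P,T[i,\ldots,i+|P|-1])$.

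The per-symbol accounting is the point that needs care, and it is where the delay feature of \cref{thm:small-period} does the heavy lifting. Maintaining the sketch of the $T$-prefix costs $\Oh(\log^2 n)$ per symbol; running \cref{thm:small-period} for $Q$ costs $\Oh(\sqrt{k\log k}+\log^2 n)$ per symbol; so the only danger is the $\Oh(k\log^3 n)$ burst incurred whenever a candidate needs to be verified. Since candidate occurrences of $Q$ are $\ge k$ positions apart, these bursts are separated by $\ge k$ symbols, so they can be spread out — the verification of the occurrence reported at position $i$ is performed gradually over the $k$ symbols $T[i+|P|],\ldots$, which is exactly what the configurable delay buys us (the window has been fully read by the time the occurrence surfaces, and the answer is only needed $\Oh(k\pl n)$ symbols later, well within the allowed reporting latency of a streaming algorithm). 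This deamortisation, using the usual two-buffer "tail trick" already invoked for \cref{lem:simpler}, yields worst-case $\Oh(\sqrt{k\log k}+\log^3 n)$ time per symbol and $\Oh(1)$ symbol reads per step; the working space stays $\Oh(k\log n)$ bits since we keep $\Oh(1)$ sketches and the $\Oh(k)$-word structures of \cref{sec:periodic}. Correctness is immediate: every $k$-mismatch occurrence of $P$ begins with a $k$-mismatch occurrence of $Q$ and is therefore caught as a candidate, while \cref{cor:sketches} filters out the false positives with high probability, giving the stated Monte-Carlo guarantee. The main obstacle, as indicated, is not any single ingredient but the scheduling: proving that the expensive sketch-verification steps can always be interleaved with cheap work without ever violating the worst-case per-symbol bound, which relies simultaneously on the $\ge k$ spacing of candidates and on the ability of \cref{thm:small-period} to delay its output by $\Theta(|P|)$.
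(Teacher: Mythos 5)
Your overall architecture is the paper's: use \cref{lem:perpref} to extract the longest prefix $Q$ with a $(2k+1)$-period $p'\le k$, handle the case where $Q$ is essentially all of $P$ via \cref{thm:small-period}, and otherwise use the $k$-mismatch occurrences of $Q$ (which are $>k$ apart by maximality of $Q$ and \cref{fct:per}) as candidates to be verified with the sketches of \cref{sec:fingerprint}. However, there is a genuine gap in how you schedule the verification, and it is exactly the point the theorem's ``on-line'' clause is about. You compare the candidate window against $\sk_k(P)$, i.e.\ against the \emph{whole} pattern. That comparison cannot begin before the last symbol $T[i+|P|-1]$ of the window has been read, and it costs $\Oh(k\log^3 n)$ time, so after spreading it over the next $k$ arrivals the verdict for the alignment ending at position $i+|P|-1$ is only available $k$ symbols too late. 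You dismiss this as an ``allowed reporting latency,'' but Problem~\ref{prob:streaming} and \cref{thm:ram} require the answer at the very step when the window completes; a $k$-symbol delay is not permitted (indeed, eliminating precisely this delay is flagged in the introduction as one of the obstacles).

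The missing ingredient is a head/tail split of the pattern itself: write $P=P_HP_T$ with $|P_T|=2k$, store $\sk_k(P_H)$ rather than $\sk_k(P)$, and delay the occurrences of $Q$ so that a candidate starting at $j$ surfaces when $T[j+|P_H|-1]$ is read. The sketch comparison then concerns only $T[j,\ldots,j+|P_H|-1]$, which is already fully available $2k$ steps before the window ends; it finishes within the next $k$ arrivals, and the remaining $2k$ tail symbols are checked naively against $P_T$, so the complete mismatch information for the length-$|P|$ window is ready exactly when $T[j+|P|-1]$ arrives. (The spacing $>k$ between candidates, which you do establish, is what guarantees only $\Oh(1)$ such verifications overlap.) Without this decomposition your argument proves a weaker statement in which occurrences are reported with delay $\Theta(k)$, which would also not suffice for the use of \cref{thm:ram} inside \cref{thm:buffer}.
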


Compared to the output of \cref{prob:one}, the stream of $k$-mismatch information also includes the sketches
$\sk_k(T[0,\ldots,i-1])$ for each $k$-mismatch occurrences $T[i,\ldots,i+|P|-1]$.
We use \cref{cor:streamsketch,lem:manip} to retrieve them efficiently after some non-trivial preprocessing.

Our final observation is that several instances of the resulting data structure
can be combined to form a \emph{buffer} allowing us to \emph{delay} the stream of $k$-mismatch information
by a prescribed value $\Delta$.
This component is defined in a synchronous setting, based on an external clock which measures the progress of processing the text $T$,
with discrete \emph{ticks} corresponding to scanning subsequent symbols of $T$.
If $P$ has a $k$-mismatch occurrence at position $i$, then the buffer is fed with the entry of the $k$-mismatch stream at tick $i$,
and it should report it back at tick $i+\Delta$. 

\begin{restatable}{theorem}{thmbuffer}\label{thm:buffer}
The stream of $k$-mismatch information of $P$ in $T$ can delayed by any $\Delta=\Theta(|P|)$ using a buffer of $\Oh(k \log n)$ bits,
which takes $\Oh(\sqrt{k\log k}+\log^3 n)$ time per tick, with $\Oh(k\log^2 n)$ extra time for ticks when 
it is fed with a $k$-mismatch occurrence of $P$ or it reports one.
The initialisation, given $\sk_k(P)$ and $\Delta$, takes $\Oh(k)$ time.
\end{restatable}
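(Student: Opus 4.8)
The plan is to build the buffer as a pipeline of $\Oh(\log\frac{|P|}{k})$ instances of the read-only $k$-mismatch mechanism from \cref{thm:ram}, chained together so that the total delay accumulated along the pipeline equals $\Delta$. The first obstacle to dispatch is that a single application of \cref{thm:ram} delays the stream by only $\Oh(|P|)$ ticks (the length of the text it processes on-line), but it also needs random access to the text; to make it a pure buffer we must feed it its own input, which is exactly the compressed encoding of nearby $k$-mismatch occurrences from \cref{sec:nearby}. Concretely: at the moment a chunk of $\Theta(k)$ ticks worth of text has streamed by, the $k$-mismatch occurrences of $P$ in that window that lie at least $k$ apart number $\Oh(1)$ per length-$|P|$ block (by \cref{fct:per}, two overlapping occurrences force a $2k$-period, so within any window the occurrences either cluster into approximately-periodic runs handled separately or are sparse), and each carries an $\Oh(k\log n)$-bit sketch. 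We accumulate these into the \cref{lem:key} data structure (together with the leftmost/rightmost occurrence's mismatch information, as in \cref{sec:det}), obtaining an $\Oh(k\log n)$-bit packet; this is the "proxy pattern/text" payload. Crucially, \cref{lem:key} supports adding a $k$-period in $\Oh(k\log n)$ time, which is the source of the $\Oh(k\log^2 n)$ per-feed cost (we also run \cref{cor:streamsketch} to carry the sketches).

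The second step is the decoding half: a stage that holds such a packet must, over the next $\Theta(|P|)$ ticks, regenerate the proxy pattern $P_\#$ and proxy text $T'_\#$ with $\Oh(\log n)$-time random access (via \cref{fct:retr}/\cref{fct:equiv}), run the read-only algorithm of \cref{thm:ram} on them, and thereby re-emit the stream of $k$-mismatch information — now recovering the sketches $\sk_k(T[0,\ldots,i-1])$ using \cref{lem:manip} to splice the sketch of the common prefix (stored once) with the per-occurrence mismatch corrections. This costs $\Oh(\sqrt{k\log k}+\log^3 n)$ per tick from \cref{thm:ram}, and $\Oh(k\log^2 n)$ on the ticks where an occurrence is actually reported, matching the claimed bounds. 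To reach a delay of $\Delta=\Theta(|P|)$ exactly, I use a constant number of such encode/decode stages in series — since each stage both absorbs and re-emits with a tunable internal latency in $[\,c_1|P|,\,c_2|P|\,]$, and $\Delta=\Theta(|P|)$, $\Oh(1)$ stages suffice, and a final fixed FIFO of $\Oh(k\log n)$ bits (a single ring buffer of sketches, which is cheap because at most $\Oh(\Delta/k)=\Oh(1)$ full entries can be live simultaneously once occurrences are $k$-separated) pads out the remainder so the output appears precisely at tick $i+\Delta$.

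The third ingredient is deamortisation. The encode step does $\Oh(k\log n)$ work per added $k$-period, and the decode step does $\Oh(k\log^2 n)$ work per reported occurrence and $\Oh(k\log^2 n)$ for the one-time reconstruction of $P_\#$; naively these are bursty. I would spread the \cref{lem:key} insertions and the \cref{cor:streamsketch}/\cref{lem:manip} sketch manipulations over the $\Theta(k)$ ticks separating consecutive (sufficiently separated) occurrences, using the standard tail-trick two-part scheme already invoked in \cref{lem:simpler} and \cref{thm:small-period}: maintain a "committed" structure and an "under-construction" structure, advancing the latter by $\Oh(\log^2 n)$ steps per tick so it is ready exactly when needed. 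Since occurrences fed to the buffer are guaranteed $k$ apart (this is how the buffer is used in \cref{sec:main-algorithm}, after the low-level periodic cases are peeled off by \cref{thm:small-period}), the amortised-to-worst-case conversion is clean.

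The initialisation claim is immediate: given $\sk_k(P)$ we set up the empty \cref{lem:key} structure in $\Oh(1)$ time, allocate the $\Oh(1)$ stage objects and the final FIFO, and precompute from $\sk_k(P)$ whatever the decode stages need to rebuild sketches — all $\Oh(k)$ time. The main obstacle I anticipate is bookkeeping the exact latency: \cref{thm:ram} processes a text of a specific length and emits each occurrence with a fixed offset, but the windows we chop $T$ into overlap (length-$\frac54|P|$ pieces overlapping by $|P|$, as in \cref{sec:det}), so the same occurrence may be produced by two consecutive stage-inputs; we must deduplicate and, more delicately, ensure that the delay is measured from the tick the occurrence \emph{entered} the buffer, not from the start of whichever window re-derived it. Pinning down these index arithmetic details — and verifying that the per-tick symbol-read budget of $\Oh(1)$ in \cref{thm:ram} is respected when the "text" is itself a decompression routine costing $\Oh(\log n)$ per symbol — is where the real care goes; everything else is an assembly of already-established components.
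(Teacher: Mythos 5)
Your architecture is essentially the paper's: partition the text into $\Oh(1)$-many overlapping windows of length $\Theta(\min(\Delta,|P|))$, compress each window's occurrences into the \cref{thm:prob-one}-style message (the \cref{lem:key} structure plus the extremal occurrences' mismatch information and prefix sketches), and decompress by running \cref{thm:ram} on the proxy strings $P_\#$, $T'_\#$ reconstructed via \cref{fct:retr}, with a reorganisation window between the two phases to deamortise. The cost accounting ($\Oh(k\log n)$ per fed occurrence from the \cref{lem:key} update, $\Oh(\sqrt{k\log k}+\log^3 n)$ per tick from \cref{thm:ram}) also matches.

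There is, however, one genuine gap: the recovery of the per-occurrence sketches $\sk_k(T[0,\ldots,\ell+i-1])$. You propose to ``splice the sketch of the common prefix with the per-occurrence mismatch corrections,'' but this cannot work as stated. The buffer retains only $T'_\#$, in which every position of a uniform class has been overwritten by a sentinel; those original symbols of $T'$ are \emph{not individually recoverable} from the stored data (that irrecoverability is precisely what makes the encoding small), so you cannot rebuild $\sk_k(T'[0,\ldots,i-1])$ symbol-by-symbol, and the mismatch information of an occurrence relates $P$ to $T$, not $T'$ to $T'_\#$. The missing idea is that the pointwise difference $D$ of $T'$ and $T'_\#$ is an integer power of a string of period $d=\gcd(\PP)$: one converts the stored $\sk_k(T'[0,\ldots,\ell'-\ell-1])$ into $\sk_k(D)$ during reorganisation, extracts the sketch of $D$'s root via \cref{lem:manip}(2), and then, at a reported position $i$ (which satisfies $d\mid i$), manufactures $\sk_k(D[0,\ldots,i-1])$ from the root and undoes the sentinel substitution with \cref{cor:streamsketch}. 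Without this step the theorem's output specification (sketches included in the stream of $k$-mismatch information) is not met. A smaller slip: your claim that the padding FIFO holds $\Oh(\Delta/k)=\Oh(1)$ entries is false, since $\Delta=\Theta(|P|)$ and $k$-separated occurrences can number $\Theta(|P|/k)$ in a window of length $\Delta$; storing their sketches explicitly would cost $\Theta(|P|\log n)$ bits. The fix is to have the decompression stage itself emit at tick $i+\Delta$ exactly (as the block-aligned components do), rather than relying on a residual queue of full entries.
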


\section{The Streaming  $k$-mismatch Algorithm}\label{sec:main-algorithm}
We can now present our main result, an \ourcomplexity-time and $\Oh(k\log{n}\log\frac{n}{k})$-bit streaming algorithm for the $k$-mismatch problem. 
Without loss of generality, we assume that $|P|\ge k$ so that no output is required while we read the first $k$ symbols of the text.
As a result, we can start processing the text with a delay of $k/2$ symbols
and catch up while processing the subsequent $k/2$ symbols of the text.
This allows for a small amount of post-processing time once we finish reading the pattern.

\subsection{Processing the Pattern}
Let us first describe the information about the pattern that we gather.
We distinguish between two cases. 
If we discover that the pattern $P$ has an $\Oh(k)$-period $p\le k$,
then we can store $P$ using the periodic representation with respect to $p$.

Otherwise, we proceed in a similar fashion to the work on streaming exact matching~\cite{Porat:09,BG:2014},
and we introduce a family of $\Oh(\log n)$ prefixes $P_0,\ldots,P_L$ of $P$. We choose it so that:
\begin{itemize}[itemsep=0pt,topsep=5pt,parsep=2pt]
  \item $P_0$ is the longest prefix of $P$ with a $(2k+1)$-period $p\le k$,
  \item $|P_{L}|=n=|P_{L-1}|+2k$, and
  \item $|P_{1}|,\ldots,|P_{L-2}|$ are the subsequent powers of two between $|P_0|$ and $\frac12 |P_{L-1}|$ (exclusive).
\end{itemize}
We store the periodic representation of $P_0$,
the sketches $\sk_k(P_\ell)$ for $1 \le \ell < L$,
and the trailing $2k$ symbols of $P$.

\begin{lemma}
The pattern $P$ can be processed by a deterministic streaming algorithm
which uses $\Oh(k\log n \log\frac{n}{k})$ bits of space, takes $\Oh(\sqrt{k\log k}+\log^2 n)$ time per symbol,
and $\Oh(k\sqrt{k\log k} + k\log^2 n)$ post-processing time.
\end{lemma}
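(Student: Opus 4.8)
The plan is to run a single streaming procedure that first distinguishes the two cases of the algorithm and then, in the non-periodic case, builds the required prefix sketches on the fly via a deamortised use of \cref{cor:streamsketch}. Concretely, I would feed the pattern $P$ into the algorithm of \cref{lem:perpref} with parameters $p=k$ and $d=2k+1$; its output is the longest prefix $P_0$ of $P$ that has a $(2k+1)$-period $p'\le k$, together with the periodic representation of $P_0$ with respect to $p'$, at a cost of $\Oh(\sqrt{k\log k})$ time per symbol, $\Oh(k\sqrt{k\log k})$ post-processing time, and $\Oh(k)$ words of space. If $P_0=P$, the pattern is in the periodic case: it suffices to keep this periodic representation, which occupies $\Oh(p'\log|\Sigma|+k(\log n+\log|\Sigma|))=\Oh(k\log n)$ bits, and the claimed bounds follow at once. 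So assume from now on that $P_0\subsetneq P$.

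In parallel with \cref{lem:perpref}, I would maintain $\Oh(1)$ copies of the data structure of \cref{cor:streamsketch}, feeding every symbol of $P$ to all of them (cost $\Oh(\log^2 n)$ per symbol), together with a circular buffer of the last $\Theta(k)$ symbols of the stream. Since any prefix of length at most $2k+2$ trivially has a $(2k+1)$-period equal to $1$, we have $|P_0|\ge\min(|P|,2k+2)$, so each of $|P_1|,\ldots,|P_{L-2}|$ is a power of two that is at least $2k$, and consecutive such values differ by a factor of two, hence by at least $2k$ positions. Each time the stream reaches a power-of-two position $2^j$ with $2k\le 2^j<\frac{n-2k}{2}$, I freeze one idle copy (so that it holds $P[0,\ldots,2^j-1]$), run its $\Oh(k\log^2 n)$-time $\sk_k$-retrieval spread at a rate of $\Oh(\log^2 n)$ work per symbol over the next $\Theta(k)$ symbols, and afterwards resynchronise it by replaying the missed appends from the circular buffer. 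Because relevant positions are $\Omega(k)$ apart, at most $\Oh(1)$ copies are busy at once, so $\Oh(1)$ copies suffice in total. The position $n-2k$ is treated the same way to obtain $\sk_k(P_{L-1})$ (this retrieval may instead be deferred to post-processing), and the trailing $2k$ symbols of $P$ are recorded as they pass. When \cref{lem:perpref} finally reports $|P_0|$, I discard the sketches attached to powers of two at most $|P_0|$; what remains is precisely the periodic representation of $P_0$, the sketches $\sk_k(P_1),\ldots,\sk_k(P_{L-1})$, and the last $2k$ symbols of $P$.

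For the space bound, the $L-1=\Oh(\log\frac{n}{k})$ stored sketches dominate, contributing $\Oh(k\log n\log\frac{n}{k})$ bits; the periodic representation of $P_0$, the $\Oh(1)$ copies of the sketch structure, the circular and trailing buffers, and the $\Oh(k)$ words used internally by \cref{lem:perpref} each take only $\Oh(k\log n)$ bits. For the time bound, each symbol costs $\Oh(\sqrt{k\log k})$ inside \cref{lem:perpref} plus $\Oh(\log^2 n)$ for the appends, replays, and retrieval steps distributed over the $\Oh(1)$ sketch structures, that is $\Oh(\sqrt{k\log k}+\log^2 n)$ in total; the post-processing consists of the $\Oh(k\sqrt{k\log k})$ tail of \cref{lem:perpref} plus at most one $\Oh(k\log^2 n)$-time sketch retrieval. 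No randomness is used beyond the sketch parameters $p$ and $r$, which are fixed before the stream begins, and computing $\sk_k$ never errs, so the procedure is deterministic as claimed.

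The crux, and the only step requiring care, is the deamortisation in the second paragraph: $\Omega(\log\frac{n}{k})$ sketch retrievals, each of cost $\Theta(k\log^2 n)$, must be realised at an amortised rate of $\Oh(\log^2 n)$ per symbol while never holding more than $\Oh(\log\frac{n}{k})$ sketches at a time. This works because the prefix lengths involved are powers of two of magnitude $\Omega(k)$, so consecutive ones are $\Omega(k)$ positions apart, leaving room both to spread one retrieval and to replay the appends needed to resynchronise the freed copy before the next retrieval starts, with only $\Oh(1)$ copies ever in use; and because we may provisionally maintain sketches for all $\Oh(\log\frac{n}{k})$ candidate power-of-two prefixes and prune down to $P_1,\ldots,P_{L-1}$ only once \cref{lem:perpref} discloses $|P_0|$, which makes the space bound robust to when that disclosure occurs.
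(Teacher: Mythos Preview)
Your proof is correct and follows essentially the same approach as the paper: run \cref{lem:perpref} with $p=k$ and $d=2k+1$ to obtain $P_0$, use \cref{cor:streamsketch} to extract the required prefix sketches, and maintain a trailing buffer of $2k$ symbols. The paper's deamortisation is slightly leaner --- a single instance of \cref{cor:streamsketch} run with a $2k$-symbol delay rather than your $\Oh(1)$ copies with freeze-and-replay --- but the two techniques are interchangeable and your version is arguably more explicit. One small point worth noting: the paper enters the periodic case whenever $|P_0|>n-2k$ (extending the periodic representation of $P_0$ to one for $P$ via the trailing buffer, with $p'$ then a $4k$-period of $P$), not only when $P_0=P$; this distinction does not affect the resource bounds proved in this lemma, but it matters downstream because the non-periodic prefix decomposition requires $|P_0|\le |P_{L-1}|=n-2k$.
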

\begin{proof}
While scanning the pattern, we store a buffer of $2k$ trailing symbols and we 
run the algorithm of \cref{lem:perpref} to compute the longest prefix $P_0$ of $P$ with a $(2k+1)$-period $p\le k$.
If $|P_0|>n-2k$, we extend the periodic representation of $P_0$ to the periodic representation of $P$,
for which $p$ must be a $4k$-period.

Otherwise, we set $P_1,\ldots,P_L$ as specified above. 
To have $\sk_k(P_\ell)$ available,
we use \cref{cor:streamsketch} while scanning $P$ and extract the sketches of every prefix of $P$
whose length is a power of 2 larger than $3k$ (note that $|P_0|\ge 3k$). We run this subroutine
with a delay of $2k$ symbols so that $\sk_k(P_{L-1})$ can be retrieved as soon as the whole pattern $P$ is read.
\end{proof}

\subsection{Processing the Text}
In the small approximate period case, we simply use \cref{thm:small-period} (with no delay).

Otherwise, our algorithm is organised into several \emph{levels} $\ell=0,\ldots,L$.
The task of level $\ell$ is to output the stream of $k$-mismatch information of $P_\ell$ in $T$. 
This stream is forwarded to the next level $\ell+1$ or to the output (for $\ell=L$).
For $\ell < L-1$, the stream shall be generated with a delay of $|P_{\ell+1}|-|P_{\ell}|$,
i.e., the occurrence at position $i$ shall be reported just before the algorithm reads $T[i+|P_{\ell+1}|]$.
For $\ell=L-1$, the delay is specified as $k$, while the topmost level $\ell=L$ must report the $k$-mismatch occurrences of $P$ with no delay.

Level $0$ is implemented using \cref{thm:small-period} with a delay of $|P_1|-|P_0| = \Oh(|P_0|)$ symbols.
The running time is $\Oh(\sqrt{k\log k}+\log^2 n)$ per symbol.

The implementation of level $\ell$, $0 < \ell < L$ is based on \cref{cor:streamsketch,thm:buffer}.
When a $k$-mismatch occurrence of $P_{\ell-1}$ is reported
at position $j$, we use \cref{cor:streamsketch} to obtain $\sk_k(T[0,\ldots,j+|P_{\ell}|-1])$.
We also retrieve $\sk_k(T[0,\ldots,j-1])$ from the stream of $k$-mismatch information of $P_{\ell-1}$, which lets us derive $\sk_k(T[j,\ldots,j+|P_{\ell}|-1])$
using \cref{lem:manip}. Then, we use \cref{cor:sketches} to test if $T[j,\ldots,j+|P_{\ell}|-1]$ is a $k$-mismatch occurrence of $P_{\ell}$
and to retrieve the mismatch information if the answer is positive.
These computations take $\Oh(k\log^3 n)$ and we perform them while processing the subsequent $k$ symbols of $T$.
By \cref{fct:per}, we are guaranteed that just one $k$-mismatch occurrence of $P_{\ell-1}$ is processed at any given time.
Thus, the stream of $k$-mismatch occurrences for the pattern $P_{\ell}$ in the text $T$ can be generated with delay $k$.
For $\ell < L-1$, we apply a buffer of \cref{thm:buffer} to delay it further by $|P_{\ell+1}|-|P_{\ell}|-k$.
This comes at the extra cost of $\Oh(\sqrt{k\log k}+\log^3 n)$ time per symbol due to $|P_{\ell+1}|-|P_{\ell}|-k=\Theta(|P_{\ell}|)$:
\[4|P_{\ell}| \ge |P_{\ell+1}| \ge |P_{\ell+1}|-|P_{\ell}|-k \ge |P_{\ell}|-k \ge  |P_{\ell}|-\tfrac13|P_0| \ge \tfrac23 |P_{\ell}|.\]

Finally, the topmost level $L$ 
simply maintains the trailing $2k$ symbols of the text.
Whenever a $k$-mismatch occurrence of $P_{L-1}$ at position $i$ is reported (along with the mismatch information),
we must check if it extends to a $k$-mismatch occurrence of $P=P_L$.
It arrives with delay $k$, so we can naively compare $P[n-2k,\ldots,n-1]$
with $T[i+n-2k,\ldots,i+n-1]$ while the algorithm processes $T[i-n-k,\ldots,i-1]$,
extending the mismatch information accordingly.
Space complexity of level $L$ is $\Oh(k\log n)$ and the per-symbol running time is constant.

Aggregating the resources required for each of the $\Oh(\log \frac{n}{k})$ levels,
we obtain our main result.

\thmstreaming*

\section{A Conjectured Space Lower Bound}

We conclude this first part of the paper with a discussion of space lower bounds for the streaming $k$-mismatch problem.  The space upper bound we give for our streaming algorithm, although close to being optimal,  is still an $\Oh(\log n)$-factor away from the known lower bound.  As a final contribution, we give a higher conjectured space lower bound for Problem~\ref{prob:streaming}, which partially closes this gap.  We do this by observing that a particularly natural way to tackle the streaming problem is first to encode the $k$-mismatch alignments and mismatch information for all prefixes of the pattern against all suffixes of a substring of the text of the same length and then use this information to start processing new symbols as they arrive.   Our streaming algorithm, for example, effectively does exactly this, as do the earlier streaming $k$-mismatch algorithms of~\cite{Porat:09,CFPSS:2016} and the exact matching streaming algorithms of~\cite{Porat:09,BG:2014}. We show a space lower bound for Problem~\ref{prob:streaming}  for any streaming algorithm that takes this approach.  We further conjecture that this  lower bound is, in fact, tight in general.

\begin{conjecture}\label{conj:prob-periods}
	Any solution for Problem~\ref{prob:streaming}  must use at least $\Omega( k  \log\frac{n}{k}(\log \frac{n}{k} + \log{|\Sigma|}))$  bits  of space.
\end{conjecture}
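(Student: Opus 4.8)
The plan is to lower-bound the space through a reduction to one-way communication complexity. A one-pass streaming algorithm using $s$ bits of space yields, for every way of cutting the input stream into a prefix held by Alice and a suffix held by Bob, a one-way protocol in which Alice runs the algorithm on her part and sends the $s$-bit internal state to Bob, who finishes the run and reproduces the remaining outputs; hence $s$ is at least the one-way communication complexity of the induced problem. Two cuts are natural: between the pattern and the text (bounding the size of any summary of $P$) and inside the text, between a text prefix $T[0,\ldots,m-1]$ and the suffix $T[m,\ldots]$. I would use the second cut with $m$ close to $n$, so that the length-$n$ windows straddling the cut force Bob's answers to depend on a large portion of Alice's input.

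For the hard distribution I would aim for a direct-sum construction over $t=\Theta(\log\frac{n}{k})$ scales. Fix geometrically increasing lengths $n_1<\cdots<n_t\le n$ and, for each scale $\ell$, plant a hard instance $Z_\ell$ of the single-alignment mismatch-reporting problem; recovering one such $Z_\ell$ already costs $\Omega(k(\log\frac{n}{k}+\log|\Sigma|))$ bits by the single-alignment lower bound recalled in the introduction. Alice's text prefix should be arranged so that, for every $\ell$, the prefix $P_\ell=P[0,\ldots,n_\ell-1]$ has a $k$-mismatch occurrence ending near position $m-1$ whose mismatch information encodes $Z_\ell$. Bob, who wishes to learn a single designated $Z_\ell$, appends a suffix equal to $P[n_\ell,\ldots,n-1]$ exactly, turning that occurrence of $P_\ell$ into a $k$-mismatch occurrence of the whole pattern $P$; the algorithm must then report the corresponding mismatch information, from which Bob reads off $Z_\ell$. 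Since Bob may probe any of the $t$ scales, a standard information-complexity or round-elimination argument would force Alice's message, and hence $s$, to carry $\Omega\!\bigl(t\cdot k(\log\frac{n}{k}+\log|\Sigma|)\bigr)=\Omega\!\bigl(k\log\frac{n}{k}(\log\frac{n}{k}+\log|\Sigma|)\bigr)$ bits.

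The main obstacle --- and the reason the statement remains only a conjecture --- is making the $t$ scales genuinely independent. The prefixes $P_1,\ldots,P_t$ are nested, so if all their occurrences are pinned to end at essentially the same position $m-1$, then $\MM(P_\ell,\cdot)$ is merely the restriction of $\MM(P_{\ell+1},\cdot)$ to its last $n_\ell$ positions; the longest scale then determines all shorter ones, and the construction carries only $O(k(\log n+\log|\Sigma|))$ bits rather than the product. Spreading the scales apart so that the corresponding alignments become disjoint destroys this correlation, but then the streaming algorithm may report each occurrence as it completes and immediately discard the associated data, so no large state is ever required. A proof of the conjecture therefore needs an instance in which $\Theta(\log\frac{n}{k})$ pieces of mismatch information are simultaneously \emph{live} --- still extendable to occurrences of $P$, hence not yet reportable --- and yet information-theoretically unrelated, together with an argument ruling out any cross-scale compression by the algorithm. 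The restricted lower bound established in this section sidesteps precisely this difficulty by assuming that the algorithm stores the per-prefix occurrence lists explicitly; we do not see how to discharge that assumption, which is why \cref{conj:prob-periods} is left open.
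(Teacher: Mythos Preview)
The statement is a conjecture, and neither you nor the paper proves it; your write-up is explicit about this, correctly identifying the independence-across-scales obstacle as the reason the bound remains open. In that sense there is no ``proof'' to compare.

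Where you and the paper differ is in the supporting evidence. You frame the heuristic via one-way communication complexity and a direct-sum over $\Theta(\log\frac{n}{k})$ nested prefixes, then explain why the nesting causes the scales to collapse. The paper instead argues for the conjecture by proving a \emph{restricted} lower bound (the lemma following the conjecture) through an explicit construction rather than a reduction: it builds a recursive family $S_0=0^k$, $S_{i+1}=S_i S'_i S_i$ where $S'_i$ is $S_i$ with $\lfloor k/2\rfloor$ random symbol changes, takes $T=P$, and observes that at each of the $\Theta(\log\frac{n}{k})$ recursion levels the alignment of the prefix $S_iS'_i$ against the suffix $S'_iS_i$ has Hamming distance $\le k$ and carries $\lfloor k/2\rfloor$ fresh random indices and symbols. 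Hence any algorithm that maintains the mismatch information for \emph{all} prefix--suffix $k$-mismatch alignments must store $\Omega(k\log\frac{n}{k}(\log\frac{n}{k}+\log|\Sigma|))$ bits. This construction achieves exactly the independence you say is hard to obtain---the random edits at each level are genuinely fresh---but only under the extra assumption that the algorithm tracks per-prefix occurrences, which is precisely the assumption you note is being sidestepped. So your analysis of the obstacle is accurate; you simply did not reproduce the paper's concrete recursive instance that witnesses the restricted bound.
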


%

We argue in favour of this conjecture by giving an explicit set of patterns for which encoding the mismatch information for all alignments of the pattern against itself will require the stated number of bits.   If the text includes a copy of the pattern  as a substring, then the result follows.     It is interesting to note that a similar conjecture can be made for the space complexity of exact  pattern matching in a stream. In this case, encoding the alignments of all exact matches between the prefixes and suffixes of a pattern is known to require $\Omega(\log^2{n})$ bits, matching the best known space upper bounds of~\cite{Porat:09,BG:2014}.


\begin{lemma}
	Any solution for Problem~\ref{prob:streaming} that computes all alignments with Hamming distance at most $k$ between prefixes of the pattern and equal-length suffixes of the text, along with the associated mismatch information,  must use at least $\Omega( k  \log\frac{n}{k}(\log \frac{n}{k} + \log{|\Sigma|}))$  bits  of space.
\end{lemma}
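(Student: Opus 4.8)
The plan is to exhibit an explicit hard instance: a family of patterns for which the collection of all Hamming-distance-$\le k$ alignments between prefixes of the pattern and equal-length suffixes of the text, together with the mismatch information, carries $\Omega(k \log\frac{n}{k}(\log\frac{n}{k}+\log|\Sigma|))$ bits of entropy. Since any algorithm of the stated kind must, by assumption, maintain enough state to reconstruct this collection, its working space is bounded below by that entropy. I would set the text to contain a copy of the pattern as a substring, so that the alignments of the pattern against itself appear among the required outputs; it therefore suffices to argue about self-alignments $\HD(P[0,\ldots,m-1], P[n-m,\ldots,n-1])$ for the various lengths $m$.

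The construction I have in mind works on a geometric scale of block lengths $n_0 < n_1 < \cdots$, roughly $n_j \approx 2^j$, giving $\Theta(\log\frac{n}{k})$ scales below $n$. At scale $j$ I would reserve a ``free zone'' of $\Theta(k)$ positions and, within it, plant $\Theta(k)$ mismatch symbols whose locations (among $\Theta(n_j/k)$ candidate slots) and values (from $\Sigma$) are chosen adversarially; everything outside the free zones is fixed to a default symbol so that these choices do not interfere across scales and do not create spurious low-distance alignments. The effect is that the self-alignment of $P$ at offset (roughly) $n - n_j$ has Hamming distance exactly $\Theta(k) \le k$, and its mismatch information encodes $\Theta(k(\log\frac{n_j}{k}+\log|\Sigma|)) = \Theta(k(\log\frac{n}{k}+\log|\Sigma|))$ bits, independently for each of the $\Theta(\log\frac{n}{k})$ scales. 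Summing over scales gives the claimed $\Omega(k\log\frac{n}{k}(\log\frac{n}{k}+\log|\Sigma|))$ bits that must be recoverable from the algorithm's memory, hence a space lower bound of the same order. Formally I would phrase this as a reduction: from a uniformly random choice of the planted data one builds $P$, feeds $P$ followed by the text into the algorithm, reads off the mismatch information at the prescribed alignments, and recovers the planted data; a standard incompressibility / encoding argument then forces $\Omega(k\log\frac{n}{k}(\log\frac{n}{k}+\log|\Sigma|))$ bits of state.

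The main obstacle is the \emph{independence} of the scales: I must ensure that the mismatch symbols planted at scale $j$ do not pollute the distance or the mismatch information of the self-alignments at other scales, and that no ``accidental'' alignment with distance $\le k$ is created that would let Bob confuse two different planted configurations. This is a careful-bookkeeping step — one wants the free zones at different scales to map, under the relevant offsets, into regions that are otherwise constant, so that each planted block is ``read out'' by exactly one alignment and is invisible to the others. A secondary technical point is to verify that the chosen offsets $n-n_j$ are themselves $2k$-periods in the relevant sense (cf.\ \cref{fct:per}), so that overlapping occurrences behave as expected and the counting of $\Theta(k)$ mismatches per scale is tight rather than merely an upper bound. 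Once the construction is pinned down so that the $\Theta(\log\frac{n}{k})$ blocks of planted data are recovered losslessly and independently, the entropy bound — and hence the space lower bound — follows immediately.
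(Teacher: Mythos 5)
Your approach is essentially the paper's: an explicit pattern with $\Theta(\log\frac{n}{k})$ geometric scales, each planting $\Theta(k)$ random mismatch positions and symbols that are read off by a distinct prefix--suffix self-alignment, followed by an entropy/incompressibility argument over the $\Theta(\log\frac{n}{k})$ independent blocks of planted data. The one obstacle you explicitly leave open --- making the scales non-interfering so that each alignment recovers exactly one scale's data --- is what the paper's concrete construction resolves: it sets $S_0=0^k$ and $S_{i+1}=S_iS'_iS_i$, where $S'_i$ is the middle copy of $S_i$ with the symbols at $\lfloor k/2\rfloor$ random indices randomly changed; then the prefix $S_iS'_i$ and the suffix $S'_iS_i$ of $S_{i+1}$ are at Hamming distance exactly $2\lfloor k/2\rfloor\le k$, and the mismatch information of that single alignment reveals precisely the level-$i$ randomness, with the recursive nesting keeping the levels from polluting one another. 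One internal inconsistency to fix in your sketch: a ``free zone'' of only $\Theta(k)$ positions cannot host $\Theta(k)$ planted locations each carrying $\log\frac{n_j}{k}$ bits of entropy; the candidate region at scale $j$ must have $\Theta(n_j)$ positions, as it implicitly does in the paper's construction where the random indices range over all of $S_i$.
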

\begin{proof}
	We prove the space lower bound by showing  an explicit set of patterns for which any encoding of all the $k$-mismatch alignments between prefixes of the pattern $P$ and suffixes of the text $T=P$, along with the mismatch information, must use $\Omega( k  \log\frac{n}{k}(\log\frac{n}{k} + \log{|\Sigma|}))$ bits.  We define our string recursively.  Consider a base string $S_0 = 0^k$.  To create $S_{i+1}$, we make three copies of $S_i$ and concatenate them to each other to make $S_iS_iS_i$.  We then choose $\lfloor \frac12 k \rfloor$ indices  at random from the middle copy of $S_i$ and randomly change the symbols at those indices. Let $S'_i$ be this modified middle third so that $S_{i+1} = S_iS'_iS_i$.    There are $\Theta(\log\frac{n}{k})$ levels to the recursion, so in the final string we can identify $\Omega(\log\frac{n}{k})$ alignments at which the Hamming distance is at most $k$:
the prefix $S_i S'_i$ and the suffix $S'_i S_i$ are at Hamming distance $2\lfloor\frac12 k \rfloor$.  Mismatch information for each of these alignments contains  $\lfloor \frac12 k \rfloor$ randomly chosen indices and  $\lfloor \frac12 k \rfloor$ random symbols which need to be reported.  This gives a lower bound of  $\Omega( k  \log\frac{n}{k}(\log\frac{n}{k} + \log{|\Sigma|}))$  bits in total needed for any encoding.
\end{proof}

\section{Algebraic Algorithms on $\Fp$}

In this section, we recall several classic problems in computer algebra involving a prime field $\Fp$, which arise in \cref{sec:fingerprint}.
The time and space complexities of their solutions depend on the relation between
the field size $p$ and the input size, as well as on the model of computation.
Below, we state these complexities for the setting used throughout the paper, which is as follows:
We assume the word RAM model with word size $w$,
which supports constant-time arithmetic and bitwise operations on $w$-bit integers.

\begin{lemma}[Integer multiplication; Sch{\"{o}}nhage and Strassen~\cite{SS:1971,Knuth:1997}]
The product of two $n$-bit integers can be computed in $\Oh(n)$ time using $\Oh(n)$ bits of space
provided that $w=\Omega(\log n)$.
\end{lemma}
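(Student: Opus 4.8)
The plan is to recognise this as a restatement of the Sch{\"o}nhage--Strassen bound and to realise it, in the word-RAM model with $w=\Omega(\log n)$, by a standard FFT-based implementation whose arithmetic operates on word-sized coefficients. First I would reduce the multiplication of two $n$-bit integers $A,B$ to a polynomial multiplication by Kronecker substitution: writing $A$ and $B$ in base $2^b$ for a block length $b=\Theta(w)$ yields integer polynomials $a(x),b(x)$ with $m=\Theta(n/w)$ coefficients, each occupying $\Oh(1)$ machine words, and $AB=(a\cdot b)(2^b)$ once carries are propagated. The coefficients of the product $a\cdot b$ are bounded by $m\cdot 2^{2b}$, so fixing $b$ as a suitably small constant fraction of $w$ keeps every coefficient, and every product of two residues handled below, within $\Oh(1)$ words.

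Next I would compute $a\cdot b$ as a length-$m'$ cyclic convolution with $m'=\Theta(m)$ a power of two, carried out over a prime field $\Fp$ with $p$ a word-sized prime admitting a principal $m'$-th root of unity; such a prime can be fixed in advance (for instance of the form $c\cdot 2^j+1$) or located together with its root in $\Oh(m)$ time, and if no single word-sized prime has a large enough smooth factor one combines a constant number of them by the Chinese Remainder Theorem, or resorts to Sch{\"o}nhage--Strassen's device of computing over $\mathbb{Z}/(2^a+1)\mathbb{Z}$, where the element $2$ itself is a root of unity of the needed order. Two forward FFTs, a coefficientwise product, and one inverse FFT use $\Oh(m'\log m')$ operations in $\Fp$; as $m'=\Theta(n/w)\le n$ and $w=\Omega(\log n)$, we have $\log m'=\Oh(w)$, so $m'\log m'=\Oh((n/w)\cdot w)=\Oh(n)$ word operations. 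Recovering $AB$ is then a single left-to-right carry propagation over $\Oh(n/w)$ blocks. All the data involved --- the coefficient vectors, their transforms, and the twiddle factors --- fit in $\Oh(n/w)$ words, i.e.\ $\Oh(n)$ bits, and each butterfly, modular reduction and index computation touches $\Oh(1)$ words and hence costs $\Oh(1)$ time.

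The step requiring care is the arithmetic bookkeeping: the block length $b$ and the FFT modulus (or moduli) must be chosen simultaneously so that a coefficient of $a$ or $b$, a residue modulo $p$, a pointwise product of two residues, and a partial sum arising in carry propagation each provably fit in $\Oh(1)$ machine words, and so that a root of unity of order $m'$ is available. These are exactly the inequalities verified in the classical analysis~\cite{SS:1971,Knuth:1997}; once they are pinned down, the time and space bounds are immediate from the count above. In the regime actually used in this paper, where $p$ is polynomial in the stream length and hence $n=\Oh(w)$, the two operands already occupy $\Oh(1)$ words and the statement is trivial.
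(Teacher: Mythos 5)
The paper offers no proof of this lemma at all --- it is stated purely as a citation to Sch{\"o}nhage--Strassen and Knuth, so there is nothing to compare against line by line. Your reconstruction is the right folklore argument for why the bound holds in this model: a single level of base-$2^{b}$ splitting with $b=\Theta(w)$, followed by one FFT of length $m'=\Theta(n/w)$ over a ring with $\Oh(1)$-word elements, and the count $m'\log m'=\Oh((n/w)\cdot w)=\Oh(n)$ word operations (using $\log m'\le\log n=\Oh(w)$) together with the $\Oh(n/w)$-word footprint is exactly where the $\Oh(n)$ time and $\Oh(n)$ bits come from. The coefficient-size bookkeeping via a constant number of word-sized primes and CRT is also fine.

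Two points deserve a caveat. First, the assertion that a suitable prime with a principal $m'$-th root of unity "can be located in $\Oh(m)$ time" is not justified; deterministically producing a prime $p\equiv 1\pmod{2^{\lceil\log 2m'\rceil}}$ of $\Oh(w)$ bits is the one genuinely delicate ingredient, and your fallback to the ring $\mathbb{Z}/(2^{a}+1)\mathbb{Z}$ does not rescue it: to support an FFT of length $\Theta(n/w)$ that ring needs $a=\Omega(n/w)$, so its elements no longer fit in $\Oh(1)$ words and one is pushed back into the recursive Sch{\"o}nhage--Strassen analysis with its extra logarithmic factors. This is precisely the part that the classical references are being cited for, so it is acceptable to lean on them, but it should not be presented as a triviality. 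Second, your closing remark that in this paper's regime the operands occupy $\Oh(1)$ words and the lemma is vacuous misreads how the lemma is used: in \cref{cor:mult} the Kronecker substitution produces integers with $\Theta(n\log p)$ bits (one block per polynomial coefficient), so the lemma really is invoked on long integers, and the full FFT-based argument is needed.
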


Next, we consider operations in the field $\Fp$ where $p$ is a prime number with $\log p = \Theta(w)$,
and the corresponding ring of polynomials $\Fp[X]$. We always assume that the degrees of input polynomials are bounded by $n = 2^{\Oh(w)}$.
Polynomial multiplication is a basic building block of almost all efficient algebraic algorithms on $\Fp$.
Our model of computation allows for the following efficient solution:
\begin{cor}[Polynomial multiplication]\label{cor:mult}
Given two polynomials $A,B \in \Fp[X]$ of degree at most $n$, the product $A\cdot B$ can be computed
in $\Oh(n\log p)$ time using $\Oh(n\log p)$ bits of space.
\end{cor}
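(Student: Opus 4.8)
The plan is to derive this as a corollary of the integer-multiplication lemma via Kronecker substitution: encode each polynomial as a single large integer, multiply the integers, and unpack the result. First I would bound the size of the coefficients of the \emph{integer} product. If we represent the coefficients of $A$ and $B$ as integers in $[0,p)$, then each coefficient of $A\cdot B$ computed over $\mathbb{Z}$ is a sum of at most $n+1$ products of such integers, hence a nonnegative integer smaller than $(n+1)p^2$. Since $\log p = \Theta(w)$ and $n = 2^{\Oh(w)}$ by the standing assumption, this bound is $2^{\Oh(w)}$, so every such coefficient fits in $b := \Oh(\log p)$ bits for an appropriately chosen constant.

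Next I would pack the polynomials: set $\bar A = \sum_i A[i]\,2^{bi}$ and $\bar B = \sum_i B[i]\,2^{bi}$, two nonnegative integers with $\Oh(nb) = \Oh(n\log p)$ bits each. Because $b$ strictly exceeds the bit length of every coefficient of the integer product $A\cdot B$, the product $\bar A\cdot\bar B$ has the property that its $i$-th block of $b$ consecutive bits (counting from the least significant end) equals exactly the $i$-th integer coefficient of $A\cdot B$, with no carry propagating across block boundaries. Computing $\bar A\cdot\bar B$ takes $\Oh(n\log p)$ time and $\Oh(n\log p)$ bits of space by the integer-multiplication lemma; its precondition $w=\Omega(\log(n\log p))$ holds because $\log(n\log p)=\Oh(\log n + \log\log p)=\Oh(w)$.

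Finally I would read off the $n+1$ blocks of $\bar A\cdot\bar B$ one at a time — each fitting in $\Oh(1)$ machine words — and reduce each modulo $p$ in $\Oh(1)$ time, outputting the coefficients of $A\cdot B\in\Fp[X]$. This postprocessing takes $\Oh(n)$ time and $\Oh(n\log p)$ bits of space, so the stated bounds follow. The one point requiring care is the choice of the block width $b$: it must satisfy $2^b > (n+1)(p-1)^2$ so that adjacent coefficient blocks never interfere, while remaining $b=\Oh(\log p)$ so that the packed integers occupy only $\Oh(n\log p)$ bits — precisely where the assumptions $n=2^{\Oh(w)}$ and $\log p=\Theta(w)$ are used in tandem. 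I expect this bookkeeping, together with verifying that no intermediate quantity exceeds the $\Oh(n\log p)$-bit space budget, to be the main (albeit routine) obstacle.
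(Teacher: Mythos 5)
Your proposal is correct and follows essentially the same route as the paper: Kronecker substitution with blocks of width $\Theta(\log p)$ bits (the paper uses $1+2\log N+\log n$ bits per block, matching your bound $(n+1)p^2$ on the integer coefficients), reduction to the Sch\"onhage--Strassen integer multiplication lemma, and a final coefficient-wise reduction modulo $p$. Your extra bookkeeping about the word-size precondition and the space budget is consistent with, and slightly more explicit than, the paper's argument.
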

\begin{proof}
Polynomial multiplication in $\mathbb{Z}[X]$ can be reduced to integer multiplication via the Kronecker substitution~\cite{Kronecker1882}.
More precisely, a polynomial $P(X)=\sum_{i=0}^n p_i X^i$ of degree $n$ with $0\le p_i < N$
 is represented as an integer with $n+1$ blocks of $1+2\log N +\log n$ bits each
so the binary encoding of $p_i$ is stored in the $i$-th least significant block.
To multiply polynomials in $\Fp[X]$, we can compute the product in $\mathbb{Z}[X]$ and then replace each coefficient by its remainder modulo $p$.
\end{proof}

In \cref{sec:fingerprint}, we use efficient solutions to three classic problems listed below.
The original papers refer provide the space complexity in terms of the time $M(n)$ of polynomial multiplication in $\Fp$.
The space complexity is not specified explicitly; one can retrieve it by analysing the structure of the original algorithms
and their subroutines, such as multi-point evaluation, polynomial division, and $\gcd$ computation of polynomials.
We refer to a textbook~\cite{DBLP:books/daglib/0031325} for detailed descriptions of these auxiliary procedures as well as of the Cantor--Zassenhaus algorithm.

\begin{lemma}[Polynomial factorization; Cantor--Zassenhaus~\cite{CZ:1981}]\label{lem:cz}
Given a polynomial $A\in \Fp[X]$ of degree $n$ with $n$ distinct roots,
all the roots of $A$ can be identified in $\Oh(n \log^3 p)$ time using $\Oh(n\log p)$ bits of space.
The algorithm may fail (report an error) with probability inverse polynomial in $p$.
\end{lemma}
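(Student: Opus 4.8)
The plan is to invoke the Cantor--Zassenhaus equal-degree-factorization procedure specialised to splitting off linear factors, and to account for its running time and space in the word-RAM model fixed above. We may assume $p$ is odd (if $p=2$ then $p>n^c$ forces $n\le 1$ and the claim is trivial) and that $A$ is monic (a lower-order normalisation cost). Since $A$ has degree $n$ and $n$ distinct roots in $\Fp$, it is squarefree and splits completely, $A=\prod_{i=1}^{n}(X-a_i)$ with the $a_i\in\Fp$ pairwise distinct; if one only knows $A$ to be squarefree, a preliminary step computes $X^p\bmod A$ by repeated squaring and then $\gcd(A,X^p-X)$ to pass to the part whose roots lie in $\Fp$, which fits comfortably in the stated budget. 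The core primitive is a randomised \emph{split attempt} on a monic factor $B\mid A$ of degree $m\ge 2$: draw $b\in\Fp$ uniformly, compute $R=(X+b)^{(p-1)/2}\bmod B$ by square-and-multiply, and output the three monic factors $\gcd(B,R)$, $\gcd(B,R-1)$, $\gcd(B,R+1)$, whose product is $B$ by Euler's criterion. Correctness is immediate: for a root $a$ of $B$ one has $R\equiv(a+b)^{(p-1)/2}\pmod{X-a}$, so $\gcd(B,R-\epsilon)$ collects exactly the factors $(X-a)$ with $(a+b)^{(p-1)/2}=\epsilon$ for $\epsilon\in\{1,-1\}$, while $\gcd(B,R)$ collects $(X+b)$ if $-b$ is a root.

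For the probabilistic heart of the argument I would show that, for any two fixed distinct roots $a,a'$ of $A$, a uniformly random $b$ separates them into different output factors with probability at least $\tfrac14$. Indeed, for $b\notin\{-a,-a'\}$ both $a+b$ and $a'+b$ lie in $\Fp^{*}$, and they are separated precisely when the ratio $(a+b)/(a'+b)=1+(a-a')/(a'+b)$ is a quadratic non-residue; as $b$ varies this ratio ranges over $\Fp\setminus\{1\}$, which contains all $(p-1)/2$ non-residues (as $1$ is a residue), and the two further values $b\in\{-a,-a'\}$ also separate the pair. Hence at least $(p-1)/2$ of the $p$ choices of $b$ work. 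The algorithm then proceeds in \emph{rounds}: each round performs one split attempt, with fresh independent randomness, on every current factor of degree $\ge 2$, replacing it by the nontrivial parts produced. Conditioned on a given pair still lying in a common factor, that round separates it with probability $\ge\tfrac14$, so after $T$ rounds the pair remains unseparated with probability at most $(3/4)^{T}$; a union bound over the $\binom{n}{2}$ pairs makes the overall failure probability at most $n^{2}(3/4)^{T}$, which falls below any prescribed inverse polynomial in $p$ once $T=\Theta(\log p)$, using $n<p$. On success every factor has degree one and the $n$ roots are read off directly.

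For the resource bounds: using the polynomial multiplication of \cref{cor:mult} together with fast polynomial division and the Half-GCD algorithm (see~\cite{DBLP:books/daglib/0031325}), a split attempt on a degree-$m$ factor runs in $\Oh(m\log^2 p)$ time --- the exponentiation does $\Oh(\log p)$ squarings modulo $B$, each $\Oh(m\log p)$ time, and each $\gcd$ costs $\Oh(m\log m\log p)=\Oh(m\log^2 p)$ since $\log m\le\log n=\Oh(\log p)$ --- and uses $\Oh(m\log p)$ bits of working space. In any round the active factors have total degree at most $n$, so handling them one at a time costs $\Oh(n\log^2 p)$ time and reuses $\Oh(n\log p)$ bits; over $T=\Theta(\log p)$ rounds this is $\Oh(n\log^3 p)$ time. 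The persistent storage is just the current list of factors, whose degrees sum to $n$, hence $\Oh(n\log p)$ bits plus $\Oh(1)$ words of bookkeeping per factor.

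I expect the only genuinely delicate points to be (i) confirming that the near-linear-time polynomial subroutines (modular reduction via a precomputed reciprocal, and Half-GCD) can be realised within $\Oh(m\log p)$ bits rather than an extra logarithmic factor, and (ii) checking that the round-based analysis above does deliver failure probability inverse polynomial in $p$; both ultimately rest on the bounds $n\le p$ and $\log n=\Oh(\log p)$ guaranteed by $p>n^{c}$, so the remaining work is bookkeeping rather than new ideas.
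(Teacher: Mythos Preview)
Your proposal is correct and follows essentially the same approach as the paper's proof: both invoke Cantor--Zassenhaus, argue that each round is a random partition separating any fixed pair of roots with constant probability, and conclude via a union bound that $\Theta(\log p)$ rounds suffice, with each round costing $\Oh(n\log^2 p)$ time. Your write-up is in fact more detailed than the paper's (which largely defers to~\cite{DBLP:books/daglib/0031325}), and the two delicate points you flag about space usage of the polynomial subroutines are precisely what the paper also leaves to the reader.
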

\begin{proof}
The Cantor--Zassenhaus algorithm proceeds in several iterations; see \cite{DBLP:books/daglib/0031325}.
Each iteration involves $\log p$ multiplications and divisions of degree-$\Oh(n)$ polynomials,
as well as several $\gcd$ computations involving polynomials of total degree $\Oh(n)$.
The overall time of these operations is $\Oh(n \log^2 p + n\log n \log p) = \Oh(n\log^2 p)$.
Each step can be interpreted as a random partition of the set of roots of $A$ into two subsets. The computation terminates
when every two roots are separated by at least one partition. After $\Omega(\log p)$ phases
this condition is not satisfied with probability inverse polynomial in $p$.
\end{proof}

\begin{lemma}[BCH Decoding; Pan~\cite{Pan:1997}]\label{lem:bch}
Consider a sequence $s_j = \sum_{i=0}^{n-1} \alpha_i \cdot \beta_i^j \in \Fp$ with distinct values $\beta_i$ and coefficients $\alpha_i\ne 0$ for $n < p$. 
Given the values $s_0, s_1,\ldots,s_{2n}$, the polynomial $\prod_{i=1}^{n}(1-X\beta_i)$ can be computed in $\Oh(n \log p)$ time using $\Oh(n \log p)$ bits of space.
\end{lemma}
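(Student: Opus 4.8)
The plan is to reduce the task to the classical \emph{key equation} of BCH and Reed--Solomon decoding and then solve it by a fast extended Euclidean computation over $\Fp[X]$. Write $\Lambda(X)=\prod_i(1-\beta_i X)$ for the target polynomial; it has degree $n$ and $\Lambda(0)=1$. The starting point is the partial-fraction identity
\[
\sum_{j\ge 0} s_j X^j \;=\; \sum_i \frac{\alpha_i}{1-\beta_i X} \;=\; \frac{\Omega(X)}{\Lambda(X)},
\qquad
\Omega(X):=\sum_i \alpha_i \prod_{i'\ne i}(1-\beta_{i'} X),
\]
which gives $\deg\Omega\le n-1$ and, crucially, $\gcd(\Lambda,\Omega)=1$: for each $i$ one has $\Omega(\beta_i^{-1})=\alpha_i\prod_{i'\ne i}(1-\beta_{i'}\beta_i^{-1})\ne 0$ because the $\beta_i$ are distinct and the $\alpha_i$ nonzero, so every root $\beta_i^{-1}$ of $\Lambda$ fails to be a root of $\Omega$. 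Truncating the power series to $S(X):=\sum_{j=0}^{2n-1}s_j X^j$ (so only the supplied values $s_0,\dots,s_{2n-1}$ are used), and multiplying the identity by $\Lambda$, we obtain the congruence $\Lambda(X)\,S(X)\equiv\Omega(X)\pmod{X^{2n}}$ with $\deg\Lambda\le n$ and $\deg\Omega\le n-1$.

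Next I would recover $(\Lambda,\Omega)$ from this congruence. First, the primitive solution of the congruence subject to those degree bounds is unique up to a scalar: if $(\bar\Lambda,\bar\Omega)$ is another such pair, then $\Lambda\bar\Omega-\bar\Lambda\Omega\equiv 0\pmod{X^{2n}}$ while $\deg(\Lambda\bar\Omega-\bar\Lambda\Omega)\le 2n-1$, so $\Lambda\bar\Omega=\bar\Lambda\Omega$ as polynomials, and coprimality of $\Lambda$ and $\Omega$ forces $\Lambda\mid\bar\Lambda$, hence $\bar\Lambda$ is a scalar multiple of $\Lambda$ (and then $\bar\Omega$ of $\Omega$). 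Second, by the Sugiyama--Kasahara--Hirasawa--Namekawa decoding principle, this unique primitive pair is (up to a scalar) produced by the extended Euclidean algorithm run on $(X^{2n},S(X))$: following the remainder sequence $r_{-1}=X^{2n},\ r_0=S,\ r_{t+1}=r_{t-1}\bmod r_t$ with cofactors $u_tX^{2n}+v_tS=r_t$, the first index $t$ with $\deg r_t<n$ (which exists since $\gcd(X^{2n},S)$ has degree at most $n-1$) satisfies $\deg v_t\le n$ and $v_t S\equiv r_t\pmod{X^{2n}}$, and $v_t(0)\ne 0$ because $\Lambda(0)=1$. Thus the algorithm is: form $S(X)$, run the (fast) extended Euclidean algorithm on $X^{2n}$ and $S$, stop at the first remainder of degree $<n$, and return $\Lambda=v_t/v_t(0)$.

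For the complexity, every polynomial that arises has degree $\Oh(n)$ and so occupies $\Oh(n\log p)$ bits. Instead of forming the whole, $\Theta(n^2)$-field-operation Euclidean remainder sequence, one extracts the single required intermediate row by the recursive ``half-GCD'' scheme, which carries the transition between two Euclidean rows as a $2\times 2$ matrix of polynomials and recurses on halves of the degree: each recursion level performs $\Oh(1)$ multiplications of degree-$\Oh(n)$ polynomials over $\Fp$, costing $\Oh(n\log p)$ time and $\Oh(n\log p)$ bits each by \cref{cor:mult}, and the recursion is $\Oh(\log n)$ deep with buffers reused across it. The main obstacle is precisely getting to the stated $\Oh(n\log p)$ bound rather than the generic $\Oh(n\log p\log n)$: the textbook half-GCD loses an extra logarithmic factor, so matching the claim requires the sharper truncated-Euclidean variant tailored to extracting a single balanced Pad\'e pair from $2n$ consecutive terms, whose analysis (and the associated space accounting, keeping everything within $\Oh(n\log p)$ bits) is carried out in~\cite{Pan:1997}; I would follow that analysis to close the argument.
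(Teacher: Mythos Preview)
The paper does not give its own proof of this lemma; it is stated as a cited result from Pan~\cite{Pan:1997}, with only a blanket remark that the space bounds for this and the neighbouring lemmas can be read off from the structure of the original algorithms. Your proposal is therefore strictly more detailed than what the paper offers: you spell out the reduction to the key equation $\Lambda S\equiv\Omega\pmod{X^{2n}}$, argue uniqueness via coprimality of $\Lambda$ and $\Omega$, and then invoke the Sugiyama-style extraction of the correct Euclidean row, before---like the paper---deferring to~\cite{Pan:1997} for the removal of the final $\log n$ factor. This is the standard route and matches the paper's intent; there is no substantive difference in approach, only in the level of exposition.
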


\begin{lemma}[Transposed Vandermonde matrix-vector multiplication; Canny--Kaltofen--Lakshman~\cite{DBLP:conf/issac/CannyKY89}]\label{lem:vand_eval}
Consider a sequence $s_j = \sum_{i=0}^{n-1} \alpha_i \cdot \beta_i^j \in \Fp$ with distinct values $\beta_i\in \Fp$ and arbitrary coefficients $\alpha_i\in \Fp$.
Given the values $\alpha_0,\ldots, \alpha_{n-1}$ and $\beta_0,\ldots,\beta_{n-1}$, the coefficients $s_0,\ldots,s_{n-1}$
can be computed in $\Oh(n \log n \log p)$ time using $\Oh(n\log p)$ bits of space.
\end{lemma}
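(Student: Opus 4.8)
The plan is to reduce the transposed Vandermonde product to a short pipeline of textbook fast polynomial operations over $\Fp$ and then track the resources of each stage. The starting point is the generating-function identity: as formal power series over $\Fp$,
\[\sum_{j\ge 0} s_j X^j \;=\; \sum_{i=0}^{n-1}\alpha_i\sum_{j\ge 0}(\beta_i X)^j \;=\; \sum_{i=0}^{n-1}\frac{\alpha_i}{1-\beta_i X} \;=\; \frac{N(X)}{D(X)},\]
where $D(X)=\prod_{i=0}^{n-1}(1-\beta_i X)$ has degree $n$ with $D(0)=1$, and $N(X)=\sum_{i=0}^{n-1}\alpha_i\prod_{i'\ne i}(1-\beta_{i'}X)$ has degree below $n$. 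Since only $s_0,\ldots,s_{n-1}$ are wanted, it suffices to form $N$ and $D$, invert $D$ modulo $X^n$, multiply, and output the length-$n$ coefficient vector of $N\cdot(D^{-1}\bmod X^n)\bmod X^n$.

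To compute $N$ and $D$ I would use the natural divide-and-conquer on index sets: split $\{0,\ldots,n-1\}$ into halves, recurse on each half to obtain its pair $(N,D)$, and combine via $D:=D_LD_R$ and $N:=N_LD_R+N_RD_L$ (base case: the factor $1-\beta_iX$ paired with the constant $\alpha_i$). This is precisely the numerator-side recursion behind fast interpolation, equivalently the transpose of fast multipoint evaluation. With $M(m)=\Oh(m\log p)$ for multiplying degree-$\Oh(m)$ polynomials over $\Fp$ (by \cref{cor:mult}), the recurrence $T(n)=2T(n/2)+\Oh(n\log p)$ solves to $\Oh(n\log n\log p)$, and a left-then-right recursion keeps live only the polynomials on the current call path, whose degrees sum to $\Oh(n)$, i.e.\ $\Oh(n\log p)$ bits. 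Next I would invert $D$ modulo $X^n$ by Newton iteration ($\Oh(\log n)$ doublings, the one at precision $2m$ costing $\Oh(M(m))$, hence $\Oh(M(n))=\Oh(n\log p)$ time and $\Oh(n)$ field elements of scratch), and finish with one truncated multiplication $N\cdot(D^{-1}\bmod X^n)\bmod X^n$ in $\Oh(n\log p)$ time. Summing the stages gives $\Oh(n\log n\log p)$ time and $\Oh(n\log p)$ bits.

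For a cleaner high-level justification of the same bounds I would also point to the transposition principle: fast multipoint evaluation of a polynomial of degree below $n$ at $n$ points is an $\Fp$-linear map realised by an algebraic straight-line program of size $\Oh(M(n)\log n)$ whose step-by-step execution uses $\Oh(n\log p)$ bits of memory, and $(\alpha_i)_{i}\mapsto(s_j)_{j}$ is exactly its transpose; Tellegen's theorem then yields a straight-line program for the transposed map of the same size up to an additive $\Oh(n)$, hence the same asymptotic time and space.

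The only point requiring real care is the space accounting that the original sources omit: one must verify that each subroutine in the pipeline — the subproduct-style recursion for $N$ and $D$, the Newton iteration for $D^{-1}\bmod X^n$, and the final truncated product — can be implemented so that it never holds more than $\Oh(n)$ elements of $\Fp$ at once (in particular avoiding the usual practice of materialising the whole $\Oh(\log n)$-level subproduct tree), so that the space stays $\Oh(n\log p)$ bits while the $\Oh(n\log n\log p)$ time is unaffected. Everything else is a direct application of the cited fast-arithmetic primitives.
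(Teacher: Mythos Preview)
Your argument is correct. The generating-function identity $\sum_{j\ge 0}s_jX^j=\sum_i \alpha_i/(1-\beta_iX)=N(X)/D(X)$ is exactly the right reduction, the divide-and-conquer recursion for $(N,D)$ with combination $D=D_LD_R$, $N=N_LD_R+N_RD_L$ is standard and gives the stated $\Oh(n\log n\log p)$ time via \cref{cor:mult}, and the Newton inversion and final truncated product are routine. Your space accounting is also sound: processing children left-then-right and retaining only the $(N,D)$ pairs along the current root-to-leaf path keeps at most $\Oh(n)$ field elements live (a geometric sum over levels), and the Newton/multiplication stages trivially fit in $\Oh(n)$ elements.

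There is nothing to compare against, however: the paper does not prove this lemma. It is stated as a known result with a citation to Canny--Kaltofen--Lakshman, and the surrounding text merely remarks that the original sources express time in terms of $M(n)$ and leave the space implicit, to be recovered ``by analysing the structure of the original algorithms and their subroutines.'' Your write-up is precisely that recovery, spelled out explicitly. The transposition-principle paragraph is a nice alternative justification but is not needed for the result; if you keep it, be aware that Tellegen's theorem preserves the arithmetic-operation count but does not automatically preserve working space, so the direct pipeline argument you give first is the one that actually establishes the $\Oh(n\log p)$-bit bound.
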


\begin{lemma}[Solving transposed Vandermonde systems; Kaltofen--Lakshman~\cite{DBLP:conf/issac/KaltofenY88}]\label{lem:vand}
Consider a sequence $s_j = \sum_{i=0}^{n-1} \alpha_i \cdot \beta_i^j \in \Fp$ with distinct values $\beta_i\in \Fp$ and arbitrary coefficients $\alpha_i\in \Fp$.
Given the values $s_0,\ldots,s_{n-1}$ and $\beta_0,\ldots,\beta_{n-1}$, the coefficients $\alpha_0,\ldots,\alpha_{n-1}$ can be retrieved in $\Oh(n \log n \log p)$ time
using $\Oh(n\log p)$ bits of space.
\end{lemma}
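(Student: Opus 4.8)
The plan is to reduce the transposed Vandermonde solve to a bounded number of polynomial multiplications over $\Fp$ and one fast multipoint evaluation, all built from the polynomial multiplication of \cref{cor:mult}. Set $S(X)=\sum_{j=0}^{n-1}s_jX^j$ and $N(X)=\prod_{i=0}^{n-1}(1-\beta_iX)$. Reading the relations $s_j=\sum_i\alpha_i\beta_i^j$ inside the truncated power-series ring $\Fp[X]/(X^n)$ and using $\sum_{t\ge 0}\beta_i^tX^t=(1-\beta_iX)^{-1}$ yields $S(X)\equiv\sum_i\alpha_i(1-\beta_iX)^{-1}\pmod{X^n}$. Multiplying through by $N(X)$ and simplifying $N(X)(1-\beta_iX)^{-1}=\prod_{j\ne i}(1-\beta_jX)$, we find that
\[R(X):=\big(S(X)\,N(X)\big)\bmod X^n=\sum_{i=0}^{n-1}\alpha_i\prod_{j\ne i}(1-\beta_jX);\]
both sides have degree $<n$ and agree modulo $X^n$, so the truncation is exact and this identity holds on the nose.

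It remains to extract the $\alpha_i$ from $R$. We may assume every $\beta_i\ne 0$: by distinctness at most one $\beta_{i_0}$ is zero, and then the relations for $j\ge1$ read $s_j=\sum_{i\ne i_0}(\alpha_i\beta_i)\beta_i^{j-1}$, a transposed Vandermonde system of size $n-1$ with nonzero distinct nodes whose solution gives $\{\alpha_i\beta_i\}_{i\ne i_0}$; dividing by $\beta_i$ and setting $\alpha_{i_0}=s_0-\sum_{i\ne i_0}\alpha_i$ recovers everything. With all $\beta_i\ne 0$, put $\gamma_i=\beta_i^{-1}$. Evaluating $R$ at $\gamma_i$ kills every summand but the $i$-th (each other summand carries the factor $1-\beta_i\gamma_i=0$), so $R(\gamma_i)=\alpha_i\prod_{j\ne i}(1-\beta_j\gamma_i)$; the same substitution in the formal derivative $N'(X)=-\sum_k\beta_k\prod_{j\ne k}(1-\beta_jX)$ gives $N'(\gamma_i)=-\beta_i\prod_{j\ne i}(1-\beta_j\gamma_i)$, which is a product of nonzero field elements and hence nonzero. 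Therefore
\[\alpha_i=-\beta_i\,\frac{R(\gamma_i)}{N'(\gamma_i)}\qquad(i=0,\dots,n-1),\]
so once $R$ and $N'$ are evaluated at $\gamma_0,\dots,\gamma_{n-1}$ the $\alpha_i$ follow with $\Oh(n)$ field operations. (Equivalently one could observe that solving $V^{\mathsf{T}}\alpha=s$ is the transpose of polynomial interpolation and invoke the transposition principle; the route above is more self-contained.)

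For the complexity I would proceed as follows. Build $N(X)$ by a balanced subproduct tree over the linear factors $1-\beta_iX$: its $\Oh(\log n)$ levels each comprise multiplications of polynomials of total degree $\Oh(n)$, so each level costs $\Oh(n\log p)$ time by \cref{cor:mult} and the whole build costs $\Oh(n\log n\log p)$. One truncated product gives $R(X)$ in $\Oh(n\log p)$ time, and $N'(X)$ in $\Oh(n)$ time. The multipoint evaluation of $R$ and of $N'$ at the $n$ points $\gamma_i$ is organised, as in~\cite{DBLP:books/daglib/0031325}, as $\Oh(\log n)$ levels of polynomial divisions and multiplications of total degree $\Oh(n)$, costing $\Oh(n\log n\log p)$; the final divisions contribute $\Oh(n\log p)$. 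Summing, the running time is $\Oh(n\log n\log p)$.

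The step I expect to require the most care is the space bound. Storing a subproduct tree explicitly uses $\Theta(n\log n)$ field elements, i.e.\ $\Oh(n\log n\log p)$ bits, which overshoots the target $\Oh(n\log p)$; the tree-based subroutines must therefore be implemented without ever materialising the whole tree, keeping along the recursion only the degree-$\Oh(n/2^\ell)$ polynomials active at depth $\ell$, whose sizes telescope to $\Oh(n\log p)$ bits. Verifying that such a reorganisation of the standard interpolation / multipoint-evaluation machinery of~\cite{DBLP:conf/issac/KaltofenY88,DBLP:books/daglib/0031325} keeps the time at $\Oh(n\log n\log p)$ while using only $\Oh(n\log p)$ bits --- and in particular avoids the extra logarithmic factor that a careless recompute-on-demand scheme would incur --- is the main point to be checked.
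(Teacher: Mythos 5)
The paper does not actually prove \cref{lem:vand}: it is imported as a black box from Kaltofen--Lakshman, and the surrounding text explicitly punts on the space bound ("one can retrieve it by analysing the structure of the original algorithms"). So there is no proof of record to compare against; what you have written is a reconstruction of the standard algorithm, and its algebraic core is correct. The identity $R(X)=(S(X)N(X))\bmod X^n=\sum_i\alpha_i\prod_{j\ne i}(1-\beta_jX)$ is exact because the right-hand side has degree below $n$; the recovery formula $\alpha_i=-\beta_i R(\gamma_i)/N'(\gamma_i)$ with $\gamma_i=\beta_i^{-1}$ is right (and $N'(\gamma_i)\ne 0$ by distinctness of the $\beta_j$); the reduction to the case of nonzero nodes is handled correctly; and the time analysis is consistent with \cref{cor:mult}, since each level of the subproduct tree and of the remainder tree costs $\Oh(n\log p)$ and there are $\Oh(\log n)$ levels. (One micro-point: the $n$ divisions at the end should be done with batch inversion, or one should note $\log p=\Theta(\log n)$ here, so that the inversions do not dominate.)

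The one thing you leave open --- the $\Oh(n\log p)$-bit space bound for the subproduct-tree subroutines --- is a genuine gap, and your instinct about where the danger lies is exactly right. Storing the tree costs $\Theta(n\log n)$ field elements, i.e.\ $\Theta(n\log n\log p)$ bits, which overshoots; while the obvious repair of recomputing each node's subtree product on demand during a depth-first descent satisfies the recurrence $T(m)=2T(m/2)+\Oh(m\log m\log p)$ and therefore costs $\Oh(n\log^2 n\log p)$ time, losing a $\log n$ factor. So, as written, your argument establishes the lemma either with the stated time and $\Oh(n\log n\log p)$ bits, or with the stated space and an extra $\log n$ in time; getting both simultaneously requires a genuinely reorganised (in-place / streaming) implementation of subproduct-tree construction and multipoint evaluation, which is precisely the detail the paper itself waves at without supplying. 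Since you flag this explicitly rather than burying it, the write-up is honest, but this last step is the one part of the statement that remains unproven and should either be closed or supported by a citation to a low-space multipoint-evaluation result.
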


\section{Omitted Proofs from \cref{sec:fingerprint}}\label{sec:fingerprint2}

\corsketches*
\begin{proof}
  First, suppose that $\HD(S,T)=k' < k$.
	 Let $x_1,\ldots, x_{k'}$ be the mismatch positions of $S$ and $T$,
	and let $r_i = S[x_i] - T[x_i]$ be the corresponding numerical differences.
	We have:
	\[
	\begin{array}{ccccccccc}
	r_1 & + & r_2 & + & \ldots & + & r_{k'} & = & \phi_0(S)-\phi_0(T) \\
	r_1 x_1 & + & r_2 x_2 &  + & \ldots & + & r_{k'} x_{k'}  & = & \phi_{1}(S)-\phi_1(T)     \\
	r_1 x_1^2 & + & r_2 x_2^2 &  + & \ldots & + & r_{k'} x_{k'}^2 & =  &  \phi_{2}(S)-\phi_2(T)     \\
	&   &       & \vdots & & & & & \vdots \\
	r_1 x_1^{2k} & + & r_2 x_2^{2k} & + & \ldots & + & r_{k'} x_{k'}^{2k} & =  &  \phi_{2k}(S) -\phi_{2k}(T)
	\end{array}
	\]
	
	This set of equations is similar to those appearing in~\cite{CEPR:2009} and in the decoding procedures for Reed--Solomon codes.
	We use the standard Peterson--Gorenstein--Zierler procedure~\cite{DBLP:journals/tit/Peterson60,doi:10.1137/0109020},
	with subsequent efficiency improvements.
	This method consists of the following main steps:
	\begin{enumerate}[itemsep=0pt,topsep=5pt,parsep=2pt]
		\item\label{step:coeffs} Compute the \emph{error locator polynomial} $P(z) = \prod_{i=1}^{k'} (1-x_i z)$ from the $2k+1$ \emph{syndromes} ${\phi_{j}(S)-\phi_j(T)}$
		with $0\le j \le 2k$.
		\item\label{step:rootfinding} Find the \emph{error locations} $x_i$ by factoring the polynomial $P$.
		\item\label{step:errorrecovery} Retrieve the \emph{error values} $r_i$.
	\end{enumerate}
	
	We implement the first step in $\Oh(k\log n)$ time using the  efficient \emph{key equation} solver by Pan~\cite{Pan:1997}; see \cref{lem:bch}.
	The next challenge is to factorise $P$, taking advantage of the fact that it is a product of linear factors.
	As we are working over a field with large characteristic, there is no sufficiently fast deterministic algorithm for this task. 
	Instead we use the randomised Cantor--Zassenhaus algorithm~\cite{CZ:1981} (see~\cref{lem:cz}), which takes $\Oh(k \log^3 n)$ time with high probability. 
	If the algorithm takes longer than this time, then we stop the procedure and report a failure.
	Finally, we observe that the error values $r_i$ can be retrieved by solving a transposed Vandermonde linear system
	of $k'$ equations using Kaltofen--Lakshman algorithm~\cite{DBLP:conf/issac/KaltofenY88} (see \cref{lem:vand}) in $\Oh(k\log k \log n)$ time.
	Each of these subroutines uses $\Oh(k \log n)$ bits of working space.
	
	Using the fact that we now have full knowledge of the mismatch indices $x_i$, a similar linear system lets us retrieve the values $r'_i = S^2[x_i] - T^2[x_i]$:
	\[\begin{array}{ccccccccc}
	r'_1 & + & r'_2 & + & \ldots & + & r'_{k'} & = & \phi'_0(S)-\phi'_0(T) \\
	r'_1 x_1 & + & r'_2 x_2 &  + & \ldots & + & r'_{k'} x_{k'}  & = & \phi'_{1}(S)-\phi'_1(T)     \\
	r'_1 x_1^2 & + & r'_2 x_2^2 &  + & \ldots & + & r'_{k'} x_{k'}^2 & =  &  \phi'_{2}(S)-\phi'_2(T)     \\
	&   &       & \vdots & & & & & \vdots \\
	r'_1 x_1^{k'} & + & r'_2 x_2^{k'} & + & \ldots & + & r'_{k'} x_{k'}^{k'} & =  &  \phi'_{k'}(S) -\phi'_{k'}(T)
	\end{array}
	\]
	Now, we are able to compute $S[x_i] = \frac{r'_i+r_i^2}{2r_i}$ and $T[x_i] = \frac{r'_i-r_i^2}{2r_i}$.

If $\HD(S,T)>k$, then we may still run the procedure above, but its behaviour is undefined.
This issue is resolved by using the Karp--Rabin fingerprints to help us check if we have found all the mismatches or not.

 If the algorithm fails, we may assume that $\HD(S,T) > k$;
	otherwise, the failure probability is inverse polynomial in $n$.

	A successful execution results in the mismatch information $\{(x_i, s_i, t_i) : 1\le i \le k'\}$.
	Observe that $\HD(S,T)\le k$ if and only if $S[x_i]-T[x_i]=s_i-t_i$ and $S[j]-T[j]=0$ at the remaining positions.
	In order to verify this condition, we compare the Karp--Rabin fingerprints, i.e., test whether
	\[\psi_r(S)-\psi_r(T) = \sum_{i=1}^{k'} (s_i-t_i) r^{x_i}.\]
	This verification takes $\Oh(k' \log n)$ time and its error probability is at most $\frac{\ell}{p}$.
\end{proof}

\lemconstr*
\begin{proof}
Let $\MM(S,T)=\{(x_i, s_i,t_i) : 0\le i < d\}$.
First, observe that the Karp--Rabin fingerprint can be updated in $\Oh(d\log n)$ time.
Indeed, we have $\psi_r(T)-\psi_r(S)=\sum_{i=0}^{d-1} (t_i-s_i)r^{x_i}$, and each power $r^{x_i}$ can be computed in $\Oh(\log n)$ time.
Next, we shall compute $\phi_j(T)-\phi_j(S)=\sum_{i=0}^{d-1} (t_i-s_i)x_i^j$
for $j\le 2k$. This problem is an instance of transposed Vandermonde evaluation; see \cref{lem:vand_eval}.
Hence, this task can be accomplished in $\Oh((d+k)\log^2 n)$ time using $\Oh((d+k)\log n)$ bits of space
using the Canny--Kaltofen--Lakshman algorithm.
The values $\phi'_j(T)-\phi'_j(S)$ are computed analogously.
\end{proof}

\corstreamsketch*
\begin{proof}
First, observe that appending $0$ to $X$ does not change the sketch, so appending a symbol
$a$ is equivalent to substituting $0$ at position $|X|$ by $a$.
Thus, below we focus on substitutions.

We maintain the sketch $\sk_k(Y)$ of a previous version $Y$ of $X$ and a buffer of up to $k$
substitutions required to transform $Y$ into $X$, i.e., the mismatch list $\MM(X,Y)$.
If there is room in the buffer, we simply append a new entry to $\MM(X,Y)$   to handle a substitution.
Whenever the buffer becomes full, we apply \cref{lem:constr} to compute $\sk_k(X)$ based 
on $\sk_k(Y)$ and $\MM(X,Y)$. This computation takes $\Oh(k\log^2n)$ time, so we run in it parallel
to the subsequent $k$ updates (so that the results are ready before the buffer is full again).

To implement a query, we complete the ongoing computation of $\sk_k(Y)$ and use \cref{lem:constr}
again to determine $\sk_k(X)$ (and clear the buffer as a side effect). This takes $\Oh(k\log^2 n)$ time.
\end{proof}

\lemmanip*
\begin{proof}
\textbf{1.} 	First, observe that $\psi_r(UV)=\psi_r(U)+r^{|U|}\psi_r(V)$. This formula can be used to retrieve one of the Karp--Rabin fingerprints
	given the remaining two ones. The running time $\Oh(\log n)$ is dominated by computing $r^{|U|}$ (or $r^{-|U|}$).
	Next, we express $\phi_j(UV)-\phi_j(U)$ in terms of $\phi_{j'}(V)$ for $j'\le j$:
	\[
	\phi_j(UV)-\phi_j(U) = \sum_{i=0}^{|V|-1} V[i](|U|+i)^j = \sum_{i=0}^{|V|-1}\sum_{j'=0}^j V[i]\binom{j}{j'}i^{j'} |U|^{j-j'}=\sum_{j'=0}^j \binom{j}{j'}\phi_{j'}(V)|U|^{j-j'}.
	\]
	Let us introduce an exponential generating function $\Phi(S) = \sum_{j=0}^\infty \phi_j(S)\frac{X^j}{j!}$,
	and recall that the exponential generating function of the geometric progression with ratio $r$ is $e^{rX}=\sum_{j=0}^\infty r^j\frac{X^j}{j!}$.
	Now, the equality above can be succinctly written as $\Phi(UV)-\Phi(U)= \Phi(V)\cdot e^{|U|X}.$
	Consequently, given first $2k+1$ coefficients of $\Phi(U)$, $\Phi(V)$, or $\Phi(UV)$, can be computed from the first $2k+1$ terms of the other two generating functions
	in $\Oh(k\log n)$ time using efficient polynomial multiplication over $\Fp$~\cite{SS:1971}; see \cref{cor:mult}.
	The coefficients $\phi'_j(U)$, $\phi'_j(V)$, and $\phi'_j(UV)$, can be computed in the same way.

\noindent\textbf{2.} Observe that $\psi_r(U^m) = \sum_{i=0}^{m-1} r^{i|U|} \psi_r(U) = \frac{r^{m|U|}-1}{r^{|U|}-1} \psi_r(U)$.
Thus, $\psi_r(U)$ and $\psi_r(U^m)$ are easy to compute from each other in $\Oh(\log n)$ time.
Next, recall that the exponential generating function $\Phi(S) = \sum_{j=0}^\infty \phi_j(S)\frac{X^j}{j!}$
satisfies $\Phi(UW)=\Phi(U)+e^{|U|X}\Phi(W)$. Consequently, $\Phi(U^m)=\Phi(U)\cdot \sum_{i=0}^{m-1}e^{i|U|X}=\Phi(U^m)\frac{e^{m|U|X}-1}{e^{|U|X}-1}$.
Thus, $\Phi(U)$ and $\Phi(U^m)$ can be computed from each other in $\Oh(k\log n)$ time using efficient polynomial multiplication over $\Fp$; see~\cref{cor:mult}.
The first $\Oh(k)$ terms of the inverse of the power series $(e^{\ell X}-1)/X$ can also be computed in $\Oh(k\log n )$ time using polynomial multiplication
and Newton's method for polynomial division.
\end{proof}

\section{Omitted Proofs from \cref{sec:periodic}}

\lempp*

\begin{proof}
The constraints on the time and space complexity let us process $X$ in blocks of $p$ symbols (the final block might be shorter),
spending $\Oh(p\sqrt{p\log p})$ time on each block.

After reading each block, we compute the sought longest prefix $Y$ of $X$
with a $d$-period $p'\le p$,
the periodic representation of $Y$ with respect to $p'$,
and the values \[\HD(Y[0,\ldots,|Y|-p''-1], Y[p'',\ldots,|Y|-1])\] for $0<p''\le p$.
When at some iteration we discover that $Y$ is a proper prefix of $X$,
then $Y$ cannot change anymore, so we can ignore any forthcoming block.

Thus, below we assume that $X=Y$ before a  block $B$ is appended to $X$.
In this case, we need to check if $B$ can be appended to $Y$ as well, i.e., if $YB$ has any $d$-period $p''\le p$.
We rely on the formula 
 \begin{multline*}\HD((YB)[0,\ldots,|YB|-p''-1], (YB)[p'',\ldots,|YB|-1])=\\=\HD(Y[0,\ldots,|YB|-p''-1], Y[p'',\ldots,|Y|-1])+
 \HD((YB)[|Y|-p'',\ldots,|YB|-p''-1],B).\end{multline*}
 The left summand is already available, while to determine the right summand for each $p''$,
 we use Abrahamson's algorithm~\cite{DBLP:journals/siamcomp/Abrahamson87} to compute the Hamming distance
 of every alignment of $B$ within the the suffix of $YB$ of length $p+|B|$.
 This procedure takes $\Oh(p\sqrt{|B|\log |B|})$ time.
 
If we find out that $YB$ has a $d$-period $p''\le p$,
we compute the periodic representation of $Y$ with respect to $p''$.
For this, we observe that $Y[i]\ne Y[i-p'']$ may only hold if $i<p'+p''$,
$Y[i]\ne Y[i-p']$, $Y[i-p']\ne Y[i-p'-p'']$, or $Y[i-p'-p'']\ne Y[i-p'']$.
Thus, it takes $\Oh(d+p)=\Oh(p)$ time to transform the periodic representation of $Y$
with respect to $p'$ to the one with respect to $p''$.
Finally, we append $B$ to $Y$ and update its periodic representation.

Otherwise, we partition $B$ into two halves $B=B_LB_R$ and try appending the left half $B_L$ to $Y$
using the procedure above. We recurse on $B_R$ or $B_L$ depending on whether it succeeds.

The running time of the $i$th iteration is $\Oh\big(p+p\sqrt{\frac{p}{2^i} \log \frac{p}{2^i}}\big)$,
because we attempt appending a block of length at most $\lceil\frac{p}{2^i}\rceil$.
Consequently, the overall processing time is $\Oh(p\sqrt{p\log p})$.
\end{proof}

\fctotimesham*
\begin{proof}
	If $|P|-1 \le i  < |T|$, then:
	\begin{multline*}
	|P|-\HS_{P,T}[i] = |P|-\sum_{j=0}^{|P|-1}[T[i-j]\ne P[|P|-1-j]] = \sum_{j=0}^{|P|-1}[T[i-j]=P[|P|-1-j]]=\\
	=\sum_{a\in\Sigma} \sum_{j=0}^{|P|-1}T_{a}(i-j)P_a(|P|-1-j)=\sum_{a\in\Sigma} \sum_{j=0}^{|P|-1}T_{a}(i-j)P^R_a(j)=
	\sum_{a\in\Sigma}(T_a*P_a^R)(i)=(T\otimes P)(i).
	\end{multline*}
	The second claim follows from the fact that $X_a(i)=0$ if $i<0$ or $i\ge |X|$.
\end{proof}

\lemfft*
\begin{proof}
	Let us partition $\mathbb{Z}$ into blocks  $B_k = [\delta k, \delta k+\delta)$ for $k\in \mathbb{Z}$.
	Moreover, we define $B'_k=(i-(k+1)\delta, i-(k-1)\delta]$ and observe that $(f*g)(j) = \sum_{k} (f|_{B_k}* g|_{B'_k})(j)$
	for $i \le j < i + \delta$.
	
	We say that a block is \emph{heavy} if $|B_k \cap \mathrm{supp}(f)|\ge \sqrt{\delta \log \delta}$,
	i.e., if at most $\sqrt{\delta\log \delta}$ non-zero entries of $f$ belong to $B_k$.
	For each heavy block $B_k$, we compute the convolution of $f|_{B_k}*g|_{B'_k}$ 
	using the Fast Fourier Transform. This takes $\Oh(\delta\log \delta)$ time per heavy block and $\Oh(n\sqrt{\delta\log \delta})$ in total.
	
	The light blocks $B_k$ are processed naively: we iterate over non-zero entries of $f|_{B_k}$ and of $g|_{B'_k}$.
	Observe that each integer belongs to at most two blocks $B'_k$, so each non-zero entry of $g$ is considered for at most $2\sqrt{\delta \log \delta}$
	non-zero entries of $f$. Hence, the running time of this phase is also $\Oh(n\sqrt{\delta \log \delta})$.
\end{proof}

\fctdiff*
\begin{proof}
	Note that
	\[\Delta_p[f* g](i) = \sum_{j\in \mathbb{Z}}f(j)g(i+p-j)-\sum_{j\in \mathbb{Z}}f(j)\cdot g(i-j)=\sum_{j\in \mathbb{Z}}f(j)\cdot \Delta_p[g](i-j)=(f* \Delta_p[g])(i).\]
	By symmetry, we also have $\Delta_p[f*g] = \Delta_p[f]* g$.
	Consequently, $\Delta_p[f]*\Delta_p[g]= \Delta_p[\Delta_p[f]*g]=\Delta_p[\Delta_p[f* g]]$.
\end{proof}

\lemsimpler*
\begin{proof}
First, we assume that blocks of $d+p$ symbols $T[i,\ldots,i+d+p-1]$ can be processed simultaneously. We maintain the periodic representation of both $P$ and $T$ with respect to $p$.  Moreover, we store the values $(T\otimes P)(j)$ for $i-2p \le j < i$ (initialised as zeroes for $i=0$; this is valid due to \cref{fct:otimesham}).
The space consumption is $\Oh(d+p)$.

When the block arrives, we update the periodic representation of $T$ and apply \cref{cor:mpt} to compute $\Delta^2_p[T\otimes P](j)$ for $i\le j < i+d+p$.
Based on the stored values of $T\otimes P$, this lets us retrieve $(T\otimes P)(j)$ for $i\le j < i+d+p$.
Next, for each position $j>|P|-1$,  we report $\HS_{T,P}[j]=|P|-(T\otimes P)(j)$.
Finally, we discard the values $(T\otimes P)(j)$ for $j < i+d-p$.
Such an iteration takes $\Oh((d+p)\sqrt{(d+p)\log (d+p)})$ time and $\Oh(d+p)$ working space.

Below, we apply this procedure in a streaming algorithm which processes $T$ symbol by symbol.
The first step is to observe that if the pattern length is $\Oh(d+p)$, we can compute the Hamming distance online using $\Oh(d+p)$ words of space and $\Oh(\sqrt{(d+p) \log (d+p)})$ worst-case time per arriving symbol~\cite{CEPP:2008}. We now proceed under the assumption that $|P|>2(d+p)$.

We partition the pattern into two parts: the \emph{tail}, $P_T$~--- the suffix of $P$ of length $2(d+p)$, and the \emph{head}, $P_H$~--- the prefix of $P$ length $|P|-2(d+p)$. 
One can observe that $\Hs{P}{T}{j}=\Hs{P_T}{T}{j}+\Hs{P_H}{T}{j-2(d+p)}$. Moreover, we can compute $\Hs{P_T}{T}{j}$ using the aforementioned online algorithm of \cite{CEPP:2008}; this takes $\Oh(\sqrt{(d+p)\log (d+p)})$ time per symbol and $\Oh(d+p)$ words of space.

For the second summand, we need to ensure that we will have computed $\Hs{P_H}{T}{j-2(d+p)}$ before $T[j]$ arrives.
Hence, we partition the text into blocks of length $d+p$ and use our algorithm to process a block $T[i,\ldots,i+d+p-1]$ as soon as it is ready.
This procedure takes $\Oh((d+p)\sqrt{(d+p) \log (d+p)})$ time, so it is can be performed in the background while we read the next block $T[i+d+p,\ldots,i+2(d+p)-1]$.
Thus, $\Hs{P_H}{T}{j}$ is indeed ready on time for $j\in \{i,\ldots,i+d+p-1\}$.

The overall space usage is $\Oh(d+p)$ words and the worst-case time per arriving symbol is $\Oh(\sqrt{(d+p)\log (d+p)})$,
dominated by the online procedure of \cite{CEPP:2008} and by processing blocks in the background.
\end{proof}

\thmsmallperiod*
\begin{proof}
Our strategy is to partition $T$ into overlapping blocks for which $p$ is an $\Oh(k)$-period,
making sure that any $k$-mismatch occurrence of $P$ is fully contained within a block.
Then, we shall run \cref{lem:simpler} for each block to find these occurrences.

Let us first build the partition into blocks. We shall make sure that every position of $T$ belongs to exactly two blocks
and that $p$ is an $(4k+2d+p)$-period of each block. 
While processing $T$, we maintain two current blocks $B,B'$ (assume $|B|\ge |B'|$) along with their periodic representations.
Let us denote the number of mismatches (with respect to the approximate period $p$) in $B$ and $B'$ by $m$ and $m'$, respectively.
These values shall always satisfy $m' \le d+2k$ and $m  \le m'+\min(|B'|,p)+d+2k$,
which clearly guarantees the claimed bound $m \le 4k+2d+p$.
Moreover, we shall make sure that $d+2k < m$ unless $B$ is a prefix of $T$.
This way, if a $k$-mismatch occurrence of $P$ ends at the currently processed position, then it must be fully contained in $B$,
because $\HD(P,Q)\le k$ implies that $p$ is a $(d+2k)$-period of $Q$.

We start with $B=B'=\varepsilon$ before reading $T[0]$.
Next, suppose that we read a symbol $T[i]$.
If $m'<d+2k$ or $T[i]=T[i-p]$, we simply extend $B$ and $B'$ with $T[i]$.
In this case, $m'$ might increase but it will not exceed $d+2k$,
whereas $m$ increases only if $m'+\min(|B'|,p)$ increases, so the inequality $m \le m'+\min(|B'|,p)+d+2k$ remains satisfied.
On the other hand, if $m'=d+2k$ and $T[i]\ne T[i-p]$,
then we set $B := B'T[i]$ and $B' := T[i]$. 
In this case, $m = d+2k+1$ and $m' = 0$, which satisfies the invariants as one can easily verify.

The procedure described above outputs the blocks as streams, which we pass to \cref{lem:simpler} with a delay $\Delta$. 
In order to save some space, when the construction of a block terminates and the block turns out to be shorter than $|P|$,
we immediately launch a garbage collector to get rid of this block.
The number of remaining blocks contained in $T[i-\Delta+1,\ldots,i]$ is therefore bounded by $2\lfloor{\frac{\Delta}{|P|}}\rfloor$,
because each such block is of length at least $|P|$ and each position is located within at most two such blocks.
Accounting for the two blocks currently in construction and the two blocks currently processed by \cref{lem:simpler},
this implies that at any time we store $\Oh(1+\frac{\Delta}{|P|})=\Oh(1)$ blocks in total,
which means that the overall space consumption is bounded by $\Oh(k)$ words.

The instances of \cref{lem:simpler} report $k$-mismatch occurrences of $P$ with no delay, so the overall delay 
of the algorithm is precisely $\Delta$. Requests for mismatch information are handled in $\Oh(k)$ time using the periodic representations of 
the pattern $P$ and the currently processed block.

To allow for computing sketches, we also maintain an instance of \cref{cor:streamsketch} and for every block we $T[b,\ldots,e]$,
we store the sketch $\sk_k(T[0,\ldots,b-1])$. As we stream the block to \cref{lem:simpler},
we process it using another instance of \cref{cor:streamsketch}, with an extra delay $|P|$ (compared to \cref{lem:simpler}).
This way, whenever \cref{lem:simpler} reports a $k$-mismatch occurrence $T[j,\ldots,j+|P|-1]$,
the sketch $T[0,\ldots,j-1]$ can be retrieved (on demand) in $\Oh(k\log^2 n)$ time (by combining $\sk_k(T[0,\ldots,b-1])$ and $\sk_k(T[b,\ldots,j-1])$
with \cref{lem:manip}). The use of \cref{cor:streamsketch} increases the processing time of each position by an additive $\Oh(\log^2 n)$ term.
\end{proof}

\section{Proof of \cref{lem:key}}\label{sec:key_proof}

In this section, we prove \cref{lem:key}; its statement is repeated below for completeness. 
\keylem*

We define $\|\C_{p}(i)\|$ as the number of distinct elements in $\C_{p}(i)$;
note that a class is \emph{uniform} if $\|\C_{p}(i)\|=1$. 
The \emph{majority} element of a multiset $S$ is an element with multiplicity strictly greater than $\frac12 |S|$.
We define uniform strings and majority symbols of a string in an analogous way.

The remaining part of this section constitutes a proof of \cref{lem:key}.
We start with \cref{sec:overview}, where we introduce the main ideas, which rely on the structure of classes and their majorities.
The subsequent \cref{sec:comb} provides further combinatorial insight necessary to bound the size of our encoding.
\cref{sec:alg} presents two abstract building blocks based on well-known compact data structures.
Next, in \cref{sec:proof} we give a complete description of our encoding,
in \cref{sec:queries} we address answering queries, and in \cref{sec:updates} we discuss updates.

\subsection{Overview}\label{sec:overview}
Observe that if $d=\gcd(\PP)$ does not change as we insert an approximate period $p$ to $\PP$,
then we do not need to update the data structure. 
Hence, let us introduce a sequence $d_0,\ldots,d_s$ of \emph{distinct} values $\gcd(\PP)$ arising as we inserted subsequent approximate periods to $\PP$.
 Moreover, for $1\le i \le s$, let $p_i\in \PP$
be the period which caused the transition from $d_{i-1}$ to $d_i$.

\begin{fact}\label{fct:gcd}
The sequences $d_0,\ldots,d_s$ and $p_1,\ldots,p_s$ satisfy  $d_0 = 0$, 
$d_\ell = \gcd(d_{\ell-1},p_\ell)$ for $1\le \ell \le s$, and $d_s = \gcd(\PP)$.
Moreover, $d_{\ell} \mid  d_{\ell-1}$ and $d_{\ell} \le \frac n{2^{\ell+1}}$ for $1\le \ell \le s$,
and therefore $s = \Oh(\log n)$.
\end{fact}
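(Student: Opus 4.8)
The plan is to verify each claimed property of the sequences $(d_\ell)$ and $(p_\ell)$ directly from the definition, and then derive the logarithmic bound on $s$ as a consequence of the geometric decay $d_\ell \le \frac{n}{2^{\ell+1}}$.

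First I would handle the easy algebraic identities. By construction $d_0 = \gcd(\emptyset) = 0$. For $1\le \ell \le s$, the value $d_\ell$ is $\gcd(\PP_\ell)$ where $\PP_\ell$ is the set of approximate periods inserted up to (and including) $p_\ell$; since $d_{\ell-1} = \gcd(\PP_{\ell-1})$ and $\PP_\ell = \PP_{\ell-1}\cup\{p_\ell\}$, we get $d_\ell = \gcd(d_{\ell-1}, p_\ell)$ using the standard fact $\gcd(\gcd(A),b) = \gcd(A\cup\{b\})$, where we adopt the convention $\gcd(0,b)=b$ so that the first insertion behaves correctly. Finally $d_s = \gcd(\PP)$ because after the last recorded transition no further change to $\gcd(\PP)$ occurs, so the $\gcd$ of all of $\PP$ equals $d_s$. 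The divisibility $d_\ell \mid d_{\ell-1}$ is immediate from $d_\ell = \gcd(d_{\ell-1},p_\ell) \mid d_{\ell-1}$.

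The main obstacle is the quantitative bound $d_\ell \le \frac{n}{2^{\ell+1}}$, which is where the hypothesis $\PP \subseteq \Per_{\le n/4}(X,k)$ and the $d_\ell$ being \emph{distinct} both matter. I would argue by induction on $\ell$. For $\ell=1$: $d_1 = \gcd(0,p_1) = p_1 \le \frac n4 = \frac{n}{2^2}$ since $p_1 \in \PP \subseteq \Per_{\le n/4}(X,k)$. For the inductive step, assume $d_{\ell-1}\le \frac{n}{2^{\ell}}$. We have $d_\ell \mid d_{\ell-1}$ and $d_\ell \ne d_{\ell-1}$ (the values are distinct and the sequence is strictly decreasing under divisibility, so $d_\ell$ is a \emph{proper} divisor of $d_{\ell-1}$); a proper divisor of a positive integer is at most half of it, hence $d_\ell \le \tfrac12 d_{\ell-1} \le \frac{n}{2^{\ell+1}}$. (One should note $d_{\ell-1}>0$ for $\ell \ge 2$, which holds since $d_1 = p_1 \ge 1$ and divisibility preserves positivity; this is why the $\ell=1$ base case is treated separately, as $d_0=0$.)

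Having established $d_s \le \frac{n}{2^{s+1}}$ together with $d_s = \gcd(\PP) \ge 1$ (note $\PP\ne\emptyset$ whenever $s\ge 1$, and every element of $\PP$ is a positive integer), we conclude $2^{s+1} \le n$, i.e.\ $s \le \log_2 n - 1 = \Oh(\log n)$. I would close by remarking that this geometric-decay argument is exactly the standard way in which repeatedly refining a $\gcd$ can only happen $\Oh(\log n)$ times, and it is the only place where the restriction $\PP\subseteq\Per_{\le n/4}(X,k)$ (rather than arbitrary periods) is used in this lemma.
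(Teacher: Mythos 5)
Your proof is correct and follows essentially the same route as the paper: establish the algebraic identities directly from the construction, observe that each $d_\ell$ is a proper divisor of $d_{\ell-1}$ so the sequence at least halves at each step, and anchor the induction at $d_1 = p_1 \le n/4$. Your explicit separation of the base case $\ell=1$ (where $d_0=0$ makes the halving argument inapplicable) is a slightly more careful write-up of the same argument.
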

\begin{proof}
We start with $\PP=\emptyset$, so $d_0 = \gcd\emptyset = 0$.
If $\gcd(\PP) \mid p$ for a newly inserted element $p$, we do not update the sequence.
Otherwise, we append $p_{\ell} := p$ and $d_{\ell} := \gcd(d_{\ell-1},p_{\ell})$.
Note that $d_{\ell}$ is a proper divisor of $d_{\ell-1}$, so $d_{\ell} \le \frac12 d_{\ell-1}$
which yields $d_{\ell}\le \frac{d_1}{2^{\ell-1}}\le \frac{n}{2^{\ell+1}}$ by induction.
\end{proof}

Let $\CCC_\ell$ be the partition of the symbols of $X$ into classes $\C_{d_{\ell}}(i)$ modulo $d_\ell$.
\Cref{fct:gcd} lets us characterise the sequence $\CCC_0,\ldots,\CCC_s$:
the first partition, $\CCC_0$, consists of singletons, i.e., it is the finest possible partition.
Then, each partition is coarser than the previous one, and finally $\CCC_s$ is the partition into classes modulo $d_s$. 

Consequently, the classes modulo $\C_{d_\ell}(i)$ for $0\le \ell \le s$ form a laminar family,
which can be represented as a forest of depth $s+1=\Oh(\log n)$; its leaves are single symbols (classes modulo $d_0=0$),
while the roots are classes modulo $d_s$.
Let us imagine that each class stores its majority element (or a sentinel $\#$ if there is no majority).
Observe that if all the classes $\C_{d_{\ell-1}}(i')$ contained in a given class $\C_{d_{\ell}}(i)$
share a common majority element, then this value is also the majority of $\C_{d_{\ell}}(i)$.
Consequently, storing the majority elements of all the contained classes $\C_{d_{\ell-1}}(i')$ is redundant.
Now, in order to retrieve $X[i]$, it suffices to start at the leaf $\C_{d_0}(i)$, walk up the tree until we reach a class
storing its majority, and return the majority, which is guaranteed to be equal to $X[i]$. 
This is basically the strategy of our query algorithm.
A minor difference is that we do not store the majority element of uniform classes $\C_{d_s}(i)$,
because our procedure shall return a sentinel $\#$ when $\C_{d_s}(i)$ is uniform.
On the other hand, we explicitly store the non-uniform classes $\C_{d_s}(i)$
so that updates can be implemented efficiently.

In order to encode the majority symbols of classes $\C_{d_{\ell-1}}(i')$ contained in a given class $\C_{d_{\ell}}(i)$,
let us study the structure of these classes in more detail.

\begin{observation}\label{obs:str}
Each class modulo $d_\ell$ can be decomposed as follows into non-empty classes modulo $d_{\ell-1}$;
\begin{align*}
\C_{d_\ell}(i) &= \bigcup_{j=0}^{\frac{d_{\ell-1}}{d_\ell}} \C_{d_{\ell-1}}(i+jp_\ell) &\text{if $\ell > 1$}\\
\C_{d_\ell}(i) &= \bigcup_{j=0}^{\lceil{\frac{n-i}{d_\ell}\rceil}} \C_{d_{\ell-1}}(i+jp_\ell) &\text{if $\ell = 1$}
\end{align*}
\end{observation}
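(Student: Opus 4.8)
The plan is to verify the stated set identity by double inclusion, treating $\ell=1$ as a special case, and then to observe that each class modulo $d_{\ell-1}$ appearing in the union is non-empty. Throughout I would use the arithmetic supplied by \cref{fct:gcd}: since $d_\ell=\gcd(d_{\ell-1},p_\ell)$ we have $d_\ell\mid d_{\ell-1}$ and $d_\ell\mid p_\ell$, and, writing $t=d_{\ell-1}/d_\ell$ and $q=p_\ell/d_\ell$, also $\gcd(t,q)=1$; moreover $\cref{fct:gcd}$ gives $d_{\ell-1}\le n/2^{\ell}<n$ whenever $\ell>1$.

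For $\ell=1$ the claim is a definition chase. Here $d_0=0$, so the classes modulo $d_0$ are the singletons $\C_{d_0}(i')=\{X[i']\}$ (empty when $i'\notin[0,n)$), and $d_1=\gcd(0,p_1)=p_1$. The indices $i'\equiv i\pmod{d_1}$ lying in $[0,n)$ are exactly the $i+jp_1$ with $0\le j<(n-i)/p_1$, i.e.\ $0\le j\le\ceil{(n-i)/d_1}-1$; extending the union to $j=\ceil{(n-i)/d_1}$ only adds a vacuous term, so $\C_{d_1}(i)=\bigcup_{j=0}^{\ceil{(n-i)/d_1}}\C_{d_0}(i+jp_1)$.

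For $\ell>1$, the inclusion ``$\supseteq$'' is immediate: $d_\ell\mid p_\ell$ gives $i+jp_\ell\equiv i\pmod{d_\ell}$, so each $\C_{d_{\ell-1}}(i+jp_\ell)$ is contained in $\C_{d_\ell}(i)$. For ``$\subseteq$'', take $X[i']\in\C_{d_\ell}(i)$, so $0\le i'<n$ and $d_\ell\mid i'-i$; put $m=(i'-i)/d_\ell$. We must find $j\in\{0,\dots,t\}$ with $jp_\ell\equiv i'-i\pmod{d_{\ell-1}}$, and dividing by $d_\ell$ this reduces to $jq\equiv m\pmod{t}$, which has a unique solution $j\in\{0,\dots,t-1\}$ because $q$ is invertible modulo $t$. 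For this $j$ we get $i'\equiv i+jp_\ell\pmod{d_{\ell-1}}$ with $0\le i'<n$, hence $X[i']\in\C_{d_{\ell-1}}(i+jp_\ell)$. Since $j\mapsto jq\bmod t$ is a bijection on $\{0,\dots,t-1\}$, the corresponding $t$ classes are pairwise disjoint, while the extra index $j=t$ merely repeats the class of $j=0$ (as $tp_\ell=qd_{\ell-1}\equiv0\pmod{d_{\ell-1}}$); this is exactly the promised decomposition. Non-emptiness is then clear, since $d_{\ell-1}<n$ means every residue modulo $d_{\ell-1}$ is represented in $[0,n)$.

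I expect the only delicate point to be the endpoint bookkeeping --- checking that running the union up to $j=d_{\ell-1}/d_\ell$ (respectively $j=\ceil{(n-i)/d_\ell}$) is harmless because the final term is either a repeat of an earlier class or empty --- rather than any genuine difficulty; the substance of the argument is just the invertibility of $p_\ell/d_\ell$ modulo $d_{\ell-1}/d_\ell$ guaranteed by $\gcd(d_{\ell-1},p_\ell)=d_\ell$.
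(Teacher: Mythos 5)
Your proof is correct, and since the paper states this Observation without proof, your Chinese-remainder-style argument (invertibility of $p_\ell/d_\ell$ modulo $d_{\ell-1}/d_\ell$, guaranteed by $d_\ell=\gcd(d_{\ell-1},p_\ell)$) is exactly the intended justification. Your bookkeeping at the endpoints is also right: the final index $j=d_{\ell-1}/d_\ell$ (resp.\ $j=\lceil (n-i)/d_1\rceil$) contributes only a repeated (resp.\ empty) class, which is the one place where the paper's phrasing ``non-empty classes'' is slightly loose.
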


Motivated by this decomposition, for each class $\C_{d_\ell}(i)$ with $\ell\ge1$ and $0\le i < d_\ell$,
we define the majority string $M_{\ell,i}$ of length $|M_{\ell,i}| = \frac{d_{\ell-1}}{d_\ell}$ for $\ell > 1$ and $|M_{\ell,i}|=\lceil{\frac{n-i}{d_\ell}\rceil}$ for $\ell=1$.
Its $j$-th symbol $M_{\ell,i}[j]$ is defined as the majority of $\C_{d_{\ell-1}}(i+jp_\ell)$,  or $\#$ if the class has no majority.
We think of $M_{\ell,i}$ as a cyclic string for $\ell>1$ and a linear string for $\ell=1$.

Since $p_\ell\in \Per(X,k)$, we expect that the adjacent symbols of the majority strings $M_{\ell,i}$ are almost always equal.
In the next section, we shall prove that the total number of mismatches between adjacent symbols is $\Oh(k)$ across all the majority strings.
\subsection{Combinatorial Bounds}\label{sec:comb}

For $1\le \ell \le s$, let $\NNN_\ell \sub \CCC_\ell$ consist of non-uniform classes.
Moreover, for $0\le \ell < s$, let $\KKK_\ell \sub \CCC_\ell$ consist of classes $\C_{d_\ell}(i)$ such that the majority elements of  $\C_{d_\ell}(i)$ and $\C_{d_\ell}(i+p_{\ell+1})$ differ.

\begin{fact}\label{fct:nnk}
Consider the decomposition of a class $\C_{d_\ell}(i)\in \NNN_\ell$ into classes $C\in \CCC_{\ell-1}$.
At least one of these classes satisfies $C\in \NNN_{\ell-1}\cup \KKK_{\ell-1}$.
Moreover, if there is just one such class, then $\ell=1$ or this class $C$ satisfies $C \in \NNN_{\ell-1}\sm \KKK_{\ell-1}$.
\end{fact}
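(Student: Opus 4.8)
The plan is to analyze the decomposition $\C_{d_\ell}(i) = \bigcup_j \C_{d_{\ell-1}}(i+jp_\ell)$ from \cref{obs:str} and argue about majorities. First I would observe that if every class $C$ in this decomposition is uniform (i.e., $C \notin \NNN_{\ell-1}$) \emph{and} all these classes share a common majority element (i.e., none of the relevant adjacent pairs lie in $\KKK_{\ell-1}$), then all the symbols appearing in $\C_{d_\ell}(i)$ are equal, so $\C_{d_\ell}(i)$ would be uniform, contradicting $\C_{d_\ell}(i)\in\NNN_\ell$. The subtlety is matching up ``common majority across all classes'' with membership in $\KKK_{\ell-1}$: the family of classes $\{\C_{d_{\ell-1}}(i+jp_\ell)\}_j$ is connected under the shift-by-$p_\ell$ relation. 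For $\ell>1$ the index set is cyclic of size $\frac{d_{\ell-1}}{d_\ell}$ and $\gcd$ considerations (using $d_\ell=\gcd(d_{\ell-1},p_\ell)$, so $p_\ell/d_\ell$ is a unit mod $d_{\ell-1}/d_\ell$) ensure the shift acts as a single cycle through all the classes; for $\ell=1$ the index set is the linear interval $0\le j\le\lceil\frac{n-i}{d_1}\rceil$ and shifting by $p_1$ traverses it as a path. Either way, if no two adjacent (under the shift) classes form a $\KKK_{\ell-1}$-pair and none is in $\NNN_{\ell-1}$, the common majority propagates along the whole connected structure and $\C_{d_\ell}(i)$ is uniform.

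Hence at least one class $C$ in the decomposition satisfies $C\in\NNN_{\ell-1}\cup\KKK_{\ell-1}$, which is the first assertion. For the second assertion, suppose there is exactly one such class $C$ and assume $\ell>1$; I would derive that $C\in\NNN_{\ell-1}\sm\KKK_{\ell-1}$. The argument: if $C\notin\NNN_{\ell-1}$ then $C\in\KKK_{\ell-1}$, meaning the majorities of $C$ and of its shift-successor differ; but since $\ell>1$ the cyclic structure means $C$ also has a shift-predecessor, and because all \emph{other} classes are in $\NNN_{\ell-1}\cup\KKK_{\ell-1}$'s complement — in particular not in $\KKK_{\ell-1}$ — following the cycle around from $C$'s successor back to $C$ via classes that are all uniform with no adjacent majority-change forces $C$'s predecessor's majority to equal $C$'s successor's majority; this would then force $C$ itself to carry that common value, so actually the ``change'' at $C$ happens on both sides of a single class whose symbol is the majority of the surrounding uniform classes on one side, contradicting that $C$ is the \emph{unique} class in $\NNN_{\ell-1}\cup\KKK_{\ell-1}$ unless $C\in\NNN_{\ell-1}$. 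Similarly $C\in\KKK_{\ell-1}$ would, via the same cyclic traversal, create a \emph{second} mismatch in the majority string (the wrap-around pair), contradicting uniqueness. So $C\in\NNN_{\ell-1}$, and since $C$ is the only element of $\NNN_{\ell-1}\cup\KKK_{\ell-1}$ in the decomposition, $C\notin\KKK_{\ell-1}$, giving $C\in\NNN_{\ell-1}\sm\KKK_{\ell-1}$.

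I expect the main obstacle to be the second assertion and specifically the careful bookkeeping of the cyclic (for $\ell>1$) versus linear (for $\ell=1$) index structure: one must be precise that $\KKK_{\ell-1}$ records majority-changes between $\C_{d_{\ell-1}}(i')$ and $\C_{d_{\ell-1}}(i'+p_\ell)$ exactly along the edges of this cycle/path, and that a cycle with all vertices uniform-or-agreeing except at edges has an \emph{even} number of majority-change edges counted with the constraint that a single ``bad'' class contributes changes on both incident edges — so a unique bad class in $\NNN_{\ell-1}\cup\KKK_{\ell-1}$ cannot be a $\KKK_{\ell-1}$-only class sitting between two agreeing uniform classes, as that would be one change edge, not closing up the cycle consistently. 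The $\ell=1$ boundary case is where this parity/closure argument breaks (the path has two endpoints rather than closing into a cycle), which is exactly why the statement exempts $\ell=1$. The rest is routine verification using \cref{obs:str} and the definitions of $\NNN_\ell$ and $\KKK_\ell$.
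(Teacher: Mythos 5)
Your proposal is correct and is essentially the paper's own argument: the paper packages it as a case analysis on whether the majority string $M_{\ell,i}$ is uniform (a uniform $M_{\ell,i}$ forces some class in $\NNN_{\ell-1}\sm\KKK_{\ell-1}$, since otherwise $\C_{d_\ell}(i)$ would be uniform), and for non-uniform $M_{\ell,i}$ it observes that each adjacent mismatch corresponds to a class in $\KKK_{\ell-1}$, with the circular structure for $\ell>1$ forcing at least two such mismatches while the linear string for $\ell=1$ may have only one. Your cycle/path connectivity and parity reasoning is the same argument in different notation, so no substantive difference.
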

\begin{proof}
If the majority string $M_{\ell,i}$ is uniform, then one of the classes $C$
must contain a symbol other than its majority; otherwise, $\C_{d_\ell}(i)$ would be uniform.
Such a class $C$ clearly belongs to $\NNN_{\ell-1}\sm \KKK_{\ell-1}$.

Next, suppose that the majority string $M_{\ell,i}$ is non-uniform.
Each mismatch between consecutive symbols of $M_{\ell,i}$ corresponds to a class $C \in \KKK_{\ell-1}$.
If $\ell=1$, then $M_{1,i}$ is a linear string and it may have one mismatch.
Otherwise, $M_{\ell,i}$ is circular, so there are at least two mismatches between consecutive symbols.
\end{proof}

\begin{fact}\label{fct:ksmn}
If $\C_{d_\ell}(i)\in \KKK_{\ell}\sm \NNN_{\ell}$ for some $0\le \ell< s$, then there are at least $2^{\ell-1}$ positions $i'\equiv i \pmod{d_{\ell}}$ such that $0 \le i' < n-p_{\ell+1}$ and $X[i']\ne X[i'+p_{\ell+1}]$.
\end{fact}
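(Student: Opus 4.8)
The plan is to use a simple shift argument. Let $a$ be the unique symbol of the uniform class $\C_{d_\ell}(i)$ (it exists because $\C_{d_\ell}(i)\notin\NNN_\ell$). Since $\C_{d_\ell}(i)\in\KKK_\ell$, the majority of $\C_{d_\ell}(i)$, namely $a$, differs from the majority of $\C_{d_\ell}(i+p_{\ell+1})$; in particular $a$ is \emph{not} the majority element of $\C_{d_\ell}(i+p_{\ell+1})$. Now every position $i'$ with $i'\equiv i\pmod{d_\ell}$ and $0\le i'<n$ has $X[i']=a$, so for such an $i'$ with additionally $i'<n-p_{\ell+1}$ we have $X[i']\ne X[i'+p_{\ell+1}]$ exactly when $X[i'+p_{\ell+1}]\ne a$. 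The map $i'\mapsto i'+p_{\ell+1}$ is a bijection from $\{i'\equiv i\pmod{d_\ell}: 0\le i'<n-p_{\ell+1}\}$ onto the set of positions of $\C_{d_\ell}(i+p_{\ell+1})$ lying in $[p_{\ell+1},n)$. Hence the quantity to be bounded below equals the number of positions $i''$ of $\C_{d_\ell}(i+p_{\ell+1})$ with $i''\ge p_{\ell+1}$ and $X[i'']\ne a$.

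Next I would count. Let $m=|\C_{d_\ell}(i+p_{\ell+1})|$. Because $a$ is not the majority of this class, at least $\lceil m/2\rceil$ of its positions carry a symbol $\ne a$; at most $\lceil p_{\ell+1}/d_\ell\rceil$ positions of the class lie in $[0,p_{\ell+1})$, so discarding this ``boundary'' part leaves at least $\lceil m/2\rceil-\lceil p_{\ell+1}/d_\ell\rceil$ positions $i''\ge p_{\ell+1}$ with $X[i'']\ne a$. Now plug in size estimates: a residue class modulo $d_\ell$ inside $[0,n)$ has at least $\lfloor n/d_\ell\rfloor$ elements, and \cref{fct:gcd} gives $d_\ell\le n/2^{\ell+1}$, so $m\ge\lfloor n/d_\ell\rfloor\ge 2^{\ell+1}$; on the other hand $p_{\ell+1}\in\PP\subseteq\Per_{\le n/4}(X,k)$ (as in \cref{lem:key}), so $p_{\ell+1}\le n/4$ and thus $p_{\ell+1}/d_\ell\le\tfrac14\,(n/d_\ell)$. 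Combining these yields a count of roughly $\tfrac12\cdot 2^{\ell+1}-\tfrac14\cdot 2^{\ell+1}=2^{\ell-1}$.

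The only real obstacle I anticipate is making this last step exact, i.e. combining the two estimates without losing a constant to rounding: one must track the ceilings $\lceil m/2\rceil$ and $\lceil p_{\ell+1}/d_\ell\rceil$ against the true value of $m$ (which need not equal $\lfloor n/d_\ell\rfloor$), and it helps to also use that $d_\ell\nmid p_{\ell+1}$ — this is exactly why $p_{\ell+1}$ is the period that forces the transition from $d_\ell$ to $d_{\ell+1}$ — which trims the boundary term by one in the tight case; a short case split on the parity of $m$ then gives the bound $2^{\ell-1}$ on the nose. Finally, the case $\ell=0$ needs no counting at all: there $d_0=0$, every class is a singleton $\{X[i]\}$, so $\NNN_0=\emptyset$, and $\C_0(i)\in\KKK_0$ means precisely that $X[i]\ne X[i+p_1]$ with $i+p_1<n$, so $i$ itself is a position of the required form and $1\ge 2^{-1}$.
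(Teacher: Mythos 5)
Your overall strategy is exactly the paper's: reduce to counting, within the uniform class' shifted partner $\C_{d_\ell}(i+p_{\ell+1})$, the positions carrying a symbol different from the unique symbol $a$ of $\C_{d_\ell}(i)$, subtract the ``unaligned'' positions below $p_{\ell+1}$, and then plug in $d_\ell\le n/2^{\ell+1}$ and $p_{\ell+1}\le n/4$. The $\ell=0$ case and the bijection argument are fine. The problem is the final counting step, and you have correctly sensed that it is delicate — but the patch you sketch does not close the gap. Writing $m=|\C_{d_\ell}(i+p_{\ell+1})|$ and $B$ for the number of its positions below $p_{\ell+1}$ (equivalently, below $i+p_{\ell+1}$, since the class of $i$ has no elements in $[0,i)$), your decoupled bounds $m\ge\floor{n/d_\ell}$ and $B\le\ceil{p_{\ell+1}/d_\ell}$ only yield $\ceil{m/2}-B\ge 2^{\ell-1}-3/2$, hence $\ge 2^{\ell-1}-1$: both $m$ and $B$ depend on the residue $(i+p_{\ell+1})\bmod d_\ell$, and taking the worst case of each separately is not achievable simultaneously. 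For instance with $n/d_\ell=8.5$, $p_{\ell+1}=2.1\,d_\ell$ and $\ell=2$ your chain gives $\ceil{8/2}-\ceil{2.125}=1$ while the fact demands $2$. Neither of your proposed remedies repairs this: $d_\ell\nmid p_{\ell+1}$ does not trim $B$ (the residue of $i+p_{\ell+1}$ can still be $0$, making $B$ maximal), and the parity of $m$ is not the source of the loss.

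The paper avoids this by keeping $m$ and $B$ coupled: it writes $m=A+B$ with $A=\ceil{\tfrac{n-i-p_{\ell+1}}{d_\ell}}$ the number of \emph{aligned} positions and $B=\floor{\tfrac{i+p_{\ell+1}}{d_\ell}}$ the unaligned ones, so that the count is at least $\tfrac{m}{2}-B=\tfrac{A-B}{2}\ge\tfrac{n-2i-2p_{\ell+1}}{2d_\ell}$, and then uses the \emph{strict} inequality $i<d_\ell$ (for the canonical representative) together with $p_{\ell+1}\le n/4$ to get a strict bound $>\tfrac{n}{4d_\ell}-1\ge 2^{\ell-1}-1$, whence $\ge 2^{\ell-1}$ by integrality. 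So the missing ingredient in your write-up is precisely this exact decomposition of $m$ plus the strictness coming from $i<d_\ell$, rather than any divisibility or parity consideration.
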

\begin{proof}
If $\ell=0$, then we just have $\C_{0}(i)\in \KKK_0$ if and only if $0\le i < n-p_1$ and $X[i]\ne X[i+p_1]$.

Consider an alignment between $X[0,\ldots,n-p_{\ell+1}-1]$ and $X[p_{\ell+1},\ldots,n-1]$
and let $k_{\ell,i}$ be the number of positions $i'$ specified above.
Observe that exactly $\lfloor\frac{i+p_{\ell+1}}{d_\ell}\rfloor$ symbols in $\C_{d_\ell}(i+p_{\ell+1})$ are at indices smaller than $p_{\ell+1}$ 
(and they are not aligned with any symbol of $\C_{d_\ell}(i)$), while exactly $k_{\ell,i}$ symbols are aligned with mismatching symbols.
The remaining symbols of $\C_{d_\ell}(i+p_{\ell+1})$ are aligned with matching symbols of $\C_{d_\ell}(i)$.
The class $\C_{d_\ell}(i)$ is uniform, so at most $k_{\ell,i} + \lfloor\frac{i+p_{\ell+1}}{d_\ell}\rfloor$ symbols of $\C_{d_\ell}(i+p_{\ell+1})$ are not equal to the majority of $\C_{d_\ell}(i)$. 
Since $\C_{d_\ell}(i+p_{\ell+1})$ does not share the majority with $\C_{d_\ell}(i)$,
we must have $k_{\ell,i} + \lfloor\frac{i+p_{\ell+1}}{d_\ell}\rfloor \ge \tfrac{1}{2}|\C_{d_\ell}(i+p_{\ell+1})|=\frac12\lfloor{\frac{i+p_{\ell+1}}{d_\ell}}\rfloor +\frac12\lceil\frac{n-i-p_{\ell+1}}{d_\ell}\rceil.$
Due to $p_{\ell+1}\le \frac{n}{4}$ and $d_\ell \le \frac{n}{2^{\ell+1}}$ (by \cref{fct:gcd}), this yields 
\[k_{\ell,i} \ge \tfrac{1}{2}\ceil{\tfrac{n-i-p_{\ell+1}}{d_\ell}} -\tfrac12\floor{\tfrac{i+p_{\ell+1}}{d_\ell}}\ge \tfrac{n-2(i+p_{\ell+1})}{2d_\ell}=
\tfrac{2n-4i-4p_{\ell+1}}{4d_\ell} > \tfrac{2n-4d_{\ell}-n}{4d_\ell}=\tfrac{n}{4d_\ell}-1\ge 2^{\ell-1}-1.\]
In short, $k_{\ell,i} > 2^{\ell-1}-1$, and thus $k_{\ell,i}\ge 2^{\ell-1}$. 
\end{proof}

\begin{lemma}\label{lem:maj}
We have
 $2|\NNN_{s}| + \sum_{\ell=0}^{s-1}|\KKK_\ell| \le 8k$.
Consequently, the majority strings $M_{\ell,i}$ contain in total at most $8k$ mismatches between adjacent symbols
and $\sum_{C\in \NNN_{d_s}} \|C\| \le 16k$.
\end{lemma}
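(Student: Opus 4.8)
The plan is to set up a bijective/injective charging argument that assigns to every class in $\NNN_s$ and every class in $\bigcup_{\ell=0}^{s-1}\KKK_\ell$ a set of ``physical'' mismatches coming from the approximate periods $p_1,\ldots,p_s$, in such a way that the total multiplicity of charges does not exceed the mismatch budget $\sum_{\ell=1}^s \HD(X[0,\ldots,n-p_\ell-1],X[p_\ell,\ldots,n-1]) \le sk$... except that would be too weak. Instead I would charge more carefully so that everything lands in a single $2k$ budget. The right way: charge each relevant class to mismatches of a \emph{single} period $p_{\ell+1}$ (the one that matters at its level), and then observe that the classes at different levels that use the same period $p_{j}$ have \emph{disjoint} sets of indices, so their charges do not collide beyond a small constant factor. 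Concretely, first I would handle the top level: every class in $\NNN_s$ is explicitly a non-uniform residue class modulo $d_s$, and since $d_s = \gcd(\PP)$ divides $p_s$, a non-uniform class modulo $d_s$ witnesses at least one mismatch of the period $p_s$; counting multiplicities (a class with $t$ distinct symbols forces $\ge t-1$ mismatches among its consecutive-in-$p_s$ pairs) gives $\sum_{C\in\NNN_s}\|C\| \le \HD(\cdot)+|\NNN_s| \le 2k$-ish, after absorbing constants.

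Next, I would run the recursion downwards using \cref{fct:nnk} and \cref{fct:ksmn}. The key structural input is \cref{fct:nnk}: each non-uniform class at level $\ell$ (a member of $\NNN_\ell$) decomposes into level-$(\ell-1)$ classes at least one of which lies in $\NNN_{\ell-1}\cup\KKK_{\ell-1}$, and if there is exactly one such child then (for $\ell>1$) it actually lies in $\NNN_{\ell-1}\setminus\KKK_{\ell-1}$ — so non-uniformity either ``splits'' into two witnesses or ``passes down'' to a strictly smaller non-uniform class. Dually, \cref{fct:ksmn} says a class in $\KKK_\ell\setminus\NNN_\ell$ is paid for by at least $2^{\ell-1}$ genuine $p_{\ell+1}$-mismatches. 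I would define a potential $\Phi_\ell := 2|\NNN_\ell| + \sum_{j<\ell}|\KKK_j|$ (appropriately truncated) and show $\Phi_{\ell} \le \Phi_{\ell+1} + (\text{charges provably covered by distinct mismatches})$, i.e.\ prove the quantity $2|\NNN_\ell| + |\KKK_{\ell-1}| - 2|\NNN_{\ell-1}|$ is bounded by the number of $p_\ell$-mismatches not already charged at higher levels. The ``two witnesses vs.\ pass-down'' dichotomy is exactly what makes the factor $2$ in front of $|\NNN_\ell|$ telescope correctly: a pass-down keeps the factor-$2$ term unchanged, while a split converts one unit of $2|\NNN_\ell|$ into contributions to $|\KKK_{\ell-1}|$ plus $2|\NNN_{\ell-1}|$, each piece covered by a distinct mismatch.

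Summing the level inequalities telescopes to $2|\NNN_s| + \sum_{\ell=0}^{s-1}|\KKK_\ell| \le 2\cdot(\text{total mismatches across }p_1,\ldots,p_s)$; since each $p_j\in\Per(X,k)$ contributes at most $k$, naively this gives $2sk$, which is $\Oh(k\log n)$, not $\Oh(k)$ — so the crucial refinement is that the charges are \emph{not} spread across all $s$ periods independently but are localised: a $\KKK_\ell\setminus\NNN_\ell$ class costs $2^{\ell-1}$ mismatches of $p_{\ell+1}$ by \cref{fct:ksmn}, and the number of residue classes modulo $d_\ell$ is only $d_\ell \le n/2^{\ell+1}$, so $|\KKK_\ell\setminus\NNN_\ell| \cdot 2^{\ell-1} \le k$ forces $|\KKK_\ell\setminus\NNN_\ell|$ to decay geometrically, and likewise the ``split'' branches of the $\NNN$-recursion are geometrically limited. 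The main obstacle — and the part I expect to be most delicate — is exactly this bookkeeping: showing that when one combines the geometric decay from \cref{fct:ksmn} with the mismatch accounting from \cref{fct:nnk} across all $\Oh(\log n)$ levels, the geometric series sums to an absolute constant times $k$ (yielding the stated $8k$) rather than a $\log n$ factor times $k$. Once the inequality $2|\NNN_s| + \sum_\ell |\KKK_\ell| \le 8k$ is established, the two ``Consequently'' clauses are immediate: each adjacent-symbol mismatch in a majority string $M_{\ell,i}$ is by definition a class in $\KKK_{\ell-1}$, so their total is $\le \sum_\ell |\KKK_\ell| \le 8k$; and $\sum_{C\in\NNN_{d_s}}\|C\| \le 2|\NNN_s| \cdot (\text{avg }\|C\|) $ — more carefully, $\|C\| \le 1 + (\text{mismatches within }C)$ so summing gives $\le |\NNN_s| + (\text{mismatches}) \le 2|\NNN_s| + \sum_\ell|\KKK_\ell| \cdot (\ldots)$, bounded by $16k$ after the constant chase.
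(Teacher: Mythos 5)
Your proposal correctly assembles all the ingredients the paper uses---\cref{fct:nnk} for the ``split into two witnesses or pass down'' dichotomy, \cref{fct:ksmn} for the $2^{\ell-1}$ mismatch cost of a class in $\KKK_\ell\sm\NNN_\ell$, and the observation that only a geometric decay across levels can beat the naive $\Oh(k\log n)$ bound---but it stops exactly at the step that constitutes the entire content of the lemma. You write that the bookkeeping combining these two facts ``is the part I expect to be most delicate'' and leave it unexecuted; the telescoping inequality you propose ($2|\NNN_\ell|+|\KKK_{\ell-1}|-2|\NNN_{\ell-1}|$ bounded by ``$p_\ell$-mismatches not already charged at higher levels'') does not obviously avoid spending a fresh budget of $k$ at each of the $s$ levels, since the mismatches of distinct periods $p_\ell$ are distinct objects and ``not already charged'' buys nothing across levels. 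Also, your warm-up claim that every non-uniform class modulo $d_s$ directly witnesses a $p_s$-mismatch is false: its non-uniformity may be entirely inherited from lower-level classes and lower-index periods, which is precisely why the recursion of \cref{fct:nnk} is needed.

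The paper closes the gap with a concrete discharging scheme whose arithmetic you should note: each mismatch $X[i]\ne X[i+p_\ell]$ is given $2^{3-\ell}$ units of charge, so the total charge is $\sum_{\ell=1}^{s}k\cdot 2^{3-\ell}<8k$; by \cref{fct:ksmn} a class in $\KKK_\ell\sm\NNN_\ell$ collects at least $2^{\ell-1}\cdot 2^{3-(\ell+1)}=2$ units (the two geometric factors cancel exactly---this is the cancellation your sketch is missing), and a class in $\KKK_0\sm\NNN_0$ collects exactly $4$. Charge then flows up the laminar forest: a class in $\KKK_\ell$ keeps one unit and passes the rest, all others pass everything, and \cref{fct:nnk} gives the induction that every class in $\NNN_\ell$ holds at least $2$ units on arrival at level $\ell$ (either one child in $\NNN_{\ell-1}\sm\KKK_{\ell-1}$ passes up its full $\ge 2$ units, or two children in $\NNN_{\ell-1}\cup\KKK_{\ell-1}$ each pass up $\ge 1$, or for $\ell=1$ a single $\KKK_0$ child passes up $3$ of its $4$ units). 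Hence $2|\NNN_s|+\sum_{\ell=0}^{s-1}|\KKK_\ell|\le 8k$. Your derivation of the two consequences is essentially right in spirit, though for $\sum_{C\in\NNN_s}\|C\|\le 16k$ the clean argument is that every distinct symbol of $C\in\NNN_s$ appears in some non-uniform majority string inside $C$, so $\|C\|$ is at most twice the number of adjacent mismatches in those strings, and the classes modulo $d_s$ are disjoint so no mismatch is counted twice.
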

\begin{proof}
We apply a discharging argument. In the charging phase,
each mismatch $X[i]\ne X[i+p_\ell]$ (for $1\le \ell \le s$ and $0 \le i < n-p_\ell$) receives a charge of $2^{3-\ell}$ units.
The total charge is therefore at most $\sum_{\ell=1}^{s} k\cdot 2^{3-\ell} < 8k$.

Next, each such mismatch passes its charge to the class $\C_{d_\ell}(i)$.
By \cref{fct:ksmn}, each class $\C_{d_\ell}(i)\in \KKK_\ell \sm \NNN_\ell$ receives at least 2 units of charge.
Moreover, each class $\C_{d_{0}}(i) \in \KKK_0 \sm \NNN_0$ receives exactly 4 units.

Finally, in subsequent iterations for $\ell=0$ to $s-1$, the classes modulo $d_{\ell}$ pass some charge to the enclosing classes modulo $d_{\ell+1}$:
each $\C_{d_\ell}(i)\notin \KKK_\ell$ passes all its charge to $\C_{d_{\ell+1}}(i)$,
whereas each $\C_{d_\ell}(i)\in \KKK_\ell$ leaves one unit for itself and passes the remaining charge.

We shall inductively prove that prior to the iteration $\ell$, each class $\C_{d_\ell}(i)\in \NNN_\ell$
had at least two units of charge. Let us fix such a class.
If $\ell=1$, then \cref{fct:nnk} implies that it contains a class $C\in \KKK_0$ (as $\NNN_0 = \emptyset$).
As we have observed, it obtained 4 units of charge and passed 3 of them to $\C_{d_\ell}(i)$.
Similarly, if $\C_{d_\ell}(i)$ contains a class $C\in \NNN_{\ell-1}\sm \KKK_{\ell-1}$,
then this class obtained at least 2 units of charge (by the inductive assumption), and passed them all to $\C_{d_\ell}(i)$.
Otherwise, \cref{fct:nnk} tells us that $\C_{d_\ell}(i)$ contains at least two classes $C\in \NNN_{\ell-1}\cup \KKK_{\ell-1}$.
Each of them received at least 2 units of charge (directly from the mismatches or due to the inductive assumption)
and passed at least 1 unit to $\C_{d_\ell}(i)$. 

In the end, each class $C\in \KKK_{\ell}$ has therefore at least one unit of charge
and each class $C\in \NNN_{s}$ has at least 2 units.
This completes the proof of the inequality.

For the remaining two claims, observe that mismatches in the majority strings correspond to classes $C\in \KKK_\ell$,
and that if $a\in C$ for $C\in \NNN_s$, then there exists $\C_{d_\ell}(i)\sub \C$
such that the majority string $M_{\ell,i}$ is non-uniform and contains $a$. Consequently, $\|C\|$
is bounded by twice the number of mismatches in the corresponding majority strings.
The classes modulo $d_s$ are disjoint, so no mismatch is counted twice.
\end{proof}

\subsection{Algorithmic Tools}\label{sec:alg}
A maximal uniform fragment of a string $S$ is called a \emph{run}; we denote the number of runs  by $\rle(S)$.
\begin{fact}[Run-length encoding]\label{fct:rle}
A string $S$ of length $n$ with $r = \rle(S)$ can be encoded using $\Oh(r(\log \frac{n+r}{r} + \log |\Sigma|))$ bits
so that any given symbol $S[i]$ can be retrieved in $\Oh(\log r)$ time.
This representation can be constructed in $\Oh(r\log n)$ time from the run-length encoding of $S$.
\end{fact}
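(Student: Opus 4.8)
The plan is to peel off the alphabet from the run structure and reduce the remaining task to storing a monotone integer sequence with fast random access, which is exactly what a classical succinct data structure provides. Concretely, describe $S$ by its runs: let $a_1,\ldots,a_r\in\Sigma$ be the run symbols and $\ell_1,\ldots,\ell_r\ge 1$ the run lengths, so that $\sum_{j=1}^r\ell_j=n$. Put $b_0=0$ and $b_j=\ell_1+\cdots+\ell_j$ for $1\le j\le r$, so $0=b_0<b_1<\cdots<b_r=n$ and $S[i]=a_j$ for the unique index $j$ with $b_{j-1}\le i<b_j$. I would store the two ingredients separately: the array $(a_j)_{j=1}^r$, using $r\lceil\log|\Sigma|\rceil=\Oh(r\log|\Sigma|)$ bits with $\Oh(1)$-time access to any $a_j$; and the strictly increasing boundary sequence $(b_j)_{j=0}^r$.

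For the boundaries I would invoke the Elias--Fano encoding of a monotone sequence: $r{+}1$ values drawn from a universe of size $n{+}1$ can be stored in $\Oh(r\log\frac{n+r}{r})$ bits so that any single $b_j$ is retrievable in $\Oh(1)$ time — the bottom $\lceil\log\frac{n+r}{r}\rceil$ bits of the $b_j$ sit in a packed array, and the top bits are recovered by a constant-time $\mathrm{select}_1$ query on an $\Oh(r)$-bit bitvector. (This is the one place the ``well-known compact data structures'' are used; an implicit weight-balanced tree over the $r$ runs, storing per-level partial sums, would give the same bounds.) A query for $S[i]$ is then answered by binary searching for the $j$ with $b_{j-1}\le i<b_j$, evaluating a boundary value in $\Oh(1)$ time at each of the $\Oh(\log r)$ steps, and returning $a_j$ from the symbol array. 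Adding the two parts yields the claimed $\Oh(r(\log\frac{n+r}{r}+\log|\Sigma|))$-bit space and $\Oh(\log r)$ query time. For construction, given the run-length encoding one computes the prefix sums $b_j$ in $\Oh(r)$ time, builds the Elias--Fano structure (with its select index) in $\Oh(r)$ time after a single pass, and fills the symbol array in $\Oh(r)$ time; this comfortably fits the stated $\Oh(r\log n)$ budget, which also absorbs any word-packing overhead.

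The only genuine subtlety — and the point worth dwelling on in the writeup — is hitting the $\Oh(r\log\frac{n+r}{r})$ space bound rather than the naive $\Oh(r\log n)$. Storing the $b_j$ in a plain array, or in a balanced search tree with word-sized subtree sums and child pointers, costs $\Theta(r\log n)$ bits. The resolution is that the $r$ gaps $\ell_j$ sum to only $n$, so by concavity of $\log$ they are cheap on average, and a succinct monotone-sequence encoding realises precisely this average cost; the ``$+r$'' inside $\log\frac{n+r}{r}$ accounts for the unavoidable $\Omega(1)$ bits per element. A secondary design point to flag is that we deliberately pick an encoding supporting $\Oh(1)$-time access to an \emph{arbitrary} $b_j$, so that binary search delivers the $\Oh(\log r)$ query bound, instead of an encoding whose predecessor cost would scale with $n/r$.
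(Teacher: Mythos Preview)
Your proof is correct and matches the paper's approach essentially line for line: the paper also stores the run-start positions via the Elias--Fano representation (with a constant-time select structure) to get $\Oh(r\log\frac{n+r}{r})$ bits and $\Oh(1)$-time access to any boundary, binary searches among the $r$ boundaries to locate the run in $\Oh(\log r)$ time, and keeps the $r$ run symbols in a separate $\Oh(r\log|\Sigma|)$-bit array. The only cosmetic difference is that you store cumulative end positions $b_j$ whereas the paper stores start positions $x_j$, which is immaterial.
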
 
\begin{proof}
Let $0= x_1 < \ldots <x_r < n$ be starting position of each run.  
We store the sequence $x_1,\ldots,x_r$ using the Elias--Fano representation~\cite{DBLP:journals/jacm/Elias74,Fano} (with $\Oh(1)$-time data structure for selection queries in a bitmask; see e.g.~\cite{CM:1996}). 
It takes $\Oh(r\log \frac{n+r}{r} + r)$ bits and allows $\Oh(1)$-time access.
In particular, in $\Oh(\log r)$ time we can binary search for the run containing a given position $i$.
The values $X[x_1],\ldots,X[x_r]$ are stored using $\Oh(r \log |\Sigma|)$ bits with $\Oh(1)$-time access.
\end{proof}

\begin{fact}[Membership queries, \cite{DBLP:journals/siamcomp/BrodnikM99}]\label{fct:memb}
A set $A \sub \{0,\ldots,n-1\}$ of size at most $m$ can be encoded in $\Oh(m\log\frac{n+m}{m})$ bits so
that one can check in $\Oh(1)$ time whether $i\in A$ for a given $i \in \{0,\ldots,n-1\}$.
The construction time is $\Oh(m\log n)$.
\end{fact}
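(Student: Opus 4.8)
The plan is to reduce to a convenient parameter range and then build a compact dictionary by combining \emph{quotienting} with two-level perfect hashing. First observe that if $n\le 2m$ then $\log\frac{n+m}{m}=\Theta(1)$, so the characteristic bit-vector of $A$ uses $\mathcal{O}(n)=\mathcal{O}(m)=\mathcal{O}(m\log\frac{n+m}{m})$ bits, answers membership in $\mathcal{O}(1)$ time, and is built in $\mathcal{O}(n)$ time; hence we may assume $n>2m$, where $\log\frac{n+m}{m}=\Theta(\log\frac nm)\ge 1$. Round $n$ up to a power of two $2^{w}$ (at most doubling it), set $m'=2^{\lceil\log m\rceil}=\Theta(m)$ and $s=w-\lceil\log m\rceil$. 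Fix an odd $a$, let $\pi(x)=ax\bmod 2^{w}$ (a bijection of $\{0,\dots,2^{w}-1\}$), and write $\pi(x)=h(x)\cdot 2^{s}+q(x)$, where $h(x)=\lfloor\pi(x)/2^{s}\rfloor\in\{0,\dots,m'-1\}$ is a standard multiply-shift hash value and the \emph{quotient} $q(x)=\pi(x)\bmod 2^{s}$ lies in a universe of size $2^{s}=\mathcal{O}(n/m)$, i.e.\ fits in $\mathcal{O}(\log\frac nm)$ bits. Since $\pi$ is a bijection, $x\mapsto(h(x),q(x))$ is injective, so the elements of $A$ landing in one first-level bucket $A_j:=\{x\in A:h(x)=j\}$ have pairwise distinct quotients.

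The encoding is then a Fredman--Koml\'{o}s--Szemer\'{e}di-style structure on the quotients. Choose $a$ (trying a few random values, expected $\mathcal{O}(1)$ of them) so that $\sum_j c_j^{2}=\mathcal{O}(m)$ with $c_j=|A_j|$; for each bucket $j$ pick a multiplier $b_j$ making the multiply-shift map $q\mapsto h_{2,j}(q)$ injective on the (distinct) quotients of $A_j$ into a range of size $c_j^{2}$. Lay out the $\sum_j c_j^{2}=\mathcal{O}(m)$ slots of all buckets consecutively; in the slot addressed by $x$ we store only an ``occupied'' flag and the value $q(x)$, costing $\mathcal{O}(\log\frac nm)$ bits. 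The $\Theta(m)$ bucket boundaries in this global array form an increasing sequence in a range of size $\mathcal{O}(m)$, recorded with the Elias--Fano representation~\cite{DBLP:journals/jacm/Elias74,Fano} (as in Fact~\ref{fct:rle}) in $\mathcal{O}(m)$ bits with $\mathcal{O}(1)$-time access, and the $\Theta(m)$ multipliers $b_j$ take $\mathcal{O}(\log\frac nm)$ bits each. The total is $\mathcal{O}(m\log\frac nm)+\mathcal{O}(m)=\mathcal{O}(m\log\frac{n+m}{m})$ bits. A query for $i$ computes $\pi(i)$, extracts its bucket index $j=h(i)$ and quotient $q=q(i)$, reads off bucket $j$'s start, size and multiplier, forms the slot index $\mathrm{start}_j+h_{2,j}(q)$, and answers ``yes'' iff that slot is occupied and stores exactly $q$; every step is $\mathcal{O}(1)$. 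For construction we read the at most $m$ elements of $A$, find $a$ and the $b_j$'s in expected $\mathcal{O}(m)$ time, distribute the quotients into the table and build the Elias--Fano index in $\mathcal{O}(m\log m)$ time, which is within the claimed $\mathcal{O}(m\log n)$.

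The only delicate point, and the reason the bound $\mathcal{O}(m\log\frac{n+m}{m})$ is attainable rather than the naive $\mathcal{O}(m\log n)$, is that a key must be recoverable from its $\mathcal{O}(1)$-time-addressable slot together with just $\mathcal{O}(\log\frac nm)$ stored bits: the quotienting guarantees exactly this (slot $\Rightarrow$ bucket $\Rightarrow h(x)$; stored value $\Rightarrow q(x)$; and $(h(x),q(x))\Rightarrow x$), while the FKS second level keeps both the number of slots and the addressing cost linear in $m$; one merely has to check that the auxiliary navigation data stays within budget, which the Elias--Fano encoding of the bucket offsets ensures. Alternatively, one may simply invoke the succinct dictionary of Brodnik and Munro~\cite{DBLP:journals/siamcomp/BrodnikM99} as a black box: it uses $\log\binom nm+o(\log\binom nm)$ bits with worst-case $\mathcal{O}(1)$ membership, and $\log\binom nm\le m\log\frac{en}{m}=\mathcal{O}(m\log\frac{n+m}{m})$, so the stated bound follows a fortiori.
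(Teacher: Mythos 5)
The paper offers no proof of this statement at all: it is imported verbatim as a known result with a citation to Brodnik and Munro, which is exactly your fallback ``black box'' paragraph (and your estimate $\log\binom{n}{m}\le m\log\frac{en}{m}=\Oh(m\log\frac{n+m}{m})$ correctly shows their stronger $B+o(B)$ bound implies the one stated here). Your main construction is therefore a genuinely different, self-contained route: quotienting a bijective multiply--shift permutation so that each key is recoverable from its bucket index plus an $\Oh(\log\frac nm)$-bit quotient, and then laying an FKS two-level perfect hash over the quotients. The argument is sound -- the injectivity of $x\mapsto(h(x),q(x))$, the $\Oh(\log\frac nm)$-bit second-level multipliers (because they act on the quotient universe of size $2^s$, not on $\{0,\ldots,n-1\}$), and the Elias--Fano index for bucket offsets are precisely the ingredients needed to beat the naive $\Oh(m\log n)$ bound while keeping $\Oh(1)$ query time. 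The one caveat is construction: choosing $a$ and the $b_j$ by random trials yields \emph{expected} $\Oh(m)$ time, whereas the Fact (and its use inside the $\Oh(k\log n)$-time update of Lemma~\ref{lem:key}) is phrased as a worst-case bound in an otherwise deterministic encoding. This does not affect the deterministic protocol of Theorem~\ref{thm:prob-one}, where only the existence and size of the encoding matter, and it can be repaired either by substituting a deterministic perfect-hashing construction (which fits in $\Oh(m\log n)$ time) or by simply deferring to the cited data structure as you do in your last paragraph; but as written your primary construction proves a Las Vegas variant of the stated Fact rather than the Fact itself.
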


\newcommand{\M}{\mathbf{M}}

\subsection{Data Structure}\label{sec:proof}

Following the intuitive description in \cref{sec:overview}, we shall store all the non-uniform majority strings $M_{\ell,i}$ (for $1\le \ell \le s$ and $0 \le i < d_\ell$)
and all non-uniform classes $\C_{d_s}(i)$. 
We represent them as non-overlapping factors of a single string $\M$ of length $2n$, constructed as follows:
Initially, $\M$ consists of blank symbols $\diamond$.
Each non-uniform majority string $M_{\ell,i}$ is placed in $\M$ at position 
$2d_{\ell} + i\frac{d_{\ell-1}}{d_\ell}$ for $\ell > 1$ and $2d_{\ell} + i\lceil{\frac{n}{d_\ell}}\rceil$ for $\ell = 1$.
Note that the positions occupied by strings $M_{\ell,i}$ for a fixed level $\ell$
belong to the range  $[2d_\ell, 2d_\ell + d_{\ell-1}-1]\sub [2d_{\ell}, 2d_{\ell-1}-1]$
for $\ell > 1$ and $[2d_1,\ldots, 2d_1 + d_1 \lceil{\frac{n}{d_\ell}}\rceil-1]\sub [2d_1,3d_1+n-1]\sub [2d_1,2n-1]$ for $\ell=1$.
These ranges are clearly disjoint for distinct values $\ell$, so the majority strings $M_{\ell,i}$ indeed do not overlap.
Additionally, we exploit the fact that positions within $[0,\ldots,d_{s}-1]$ are free,
and if $\C_{d_s}(i)\in \NNN_s$, we store its majority symbol at $\M[i]$. 
\Cref{lem:maj} yields that the total number of mismatches between subsequent symbols 
and the number of non-uniform classes modulo $d_s$ are both $\Oh(k)$.
Hence, $\rle(\M)= \Oh(k)$ and the space required to store $\M$ using \cref{fct:rle} is $\Oh(k(\log \frac{n+k}{k} + \log |\Sigma|))$ bits.
On top of that, we also store a data structure of \cref{fct:memb} marking the 
positions in $\M$ where non-uniform majority strings start; this component takes $\Oh(k\log \frac{n+k}{k})$ bits.

Additionally, we keep the contents of each non-uniform class modulo $d_s$. We do not need to access this data efficiently,
so for each such class, we simply store the symbols and their multiplicities using variable-length encoding.
This takes $\Oh(\|\C\|(\log |\Sigma|+\log \frac{|C|+\|C\|}{\|C\|}))$ bits for each $C\in \NNN_s$, which is $\Oh(k(\log \frac{n}{k}+\log |\Sigma|))$ 
in total because $\sum_{C\in \NNN_{s}}\|C\|=\Oh(k)$ (by \cref{lem:maj}) and $\sum_{C\in \NNN_{s}}|C|\le n$.

Finally, we store integers $n$, $d_1$, $d_s$, as well as $\frac{d_{\ell}}{d_{\ell+1}}$ and $r_\ell :=(\frac{p_{\ell+1}}{d_{\ell+1}})^{-1} \mod \frac{d_{\ell}}{d_{\ell+1}}$ for $1\le \ell < s$.
A naive estimation of the required space is $\Oh(s \log  n) = \Oh(\log^2  n)$ bits, but variable-length encoding
lets us store the values $\frac{d_{\ell}}{d_{\ell+1}}$ using $\Oh(\sum_{\ell=1}^s \log \frac{d_{\ell}}{d_{\ell+1}}) = \Oh(\log n)=\Oh(k\log \frac{n+k}{k})$ bits in total.
Similarly, the integers $r_{\ell}$ can be stored in $\Oh(\log n)$ bits because $1 \le r_\ell < \frac{d_{\ell}}{d_{\ell+1}}$.

This completes the description of our data structure;
its takes $\Oh(k(\log\frac{n+k}{k} +\log |\Sigma|))$ bits.

\subsection{Queries}\label{sec:queries}
In this section, we describe the query algorithm for a given index $i$.
We are going to iterate for $\ell=0$ to $s$, and for each $\ell$ we will either learn $X[i]$
or find out that $X[i]$ is the majority symbol of $\C_{d_{\ell}}(i)$. Consequently, entering iteration $\ell$, we already know that $X[i]$ is the majority of $\C_{d_{\ell-1}}(i)$.
We also assume that $d_\ell$ is available at that time.

We compute the starting position of $M_{\ell,i \bmod d_\ell}$ in $\M$ according to the formulae given in \cref{sec:proof}.
Next, we query the data structure of \cref{fct:memb} to find out if the majority string is uniform.
If so, we conclude that $X[i]$ is the majority of $\C_{d_{\ell}}(i)$ and we may proceed to the next level.
Before this, we need to compute $d_{\ell+1} = d_\ell \cdot (\frac{d_\ell}{d_{\ell+1}})^{-1}$.
An iteration takes $\Oh(1)$ time in this case.

Otherwise, we need to learn the majority of $\C_{d_{\ell-1}}(i)$, which is guaranteed to be equal to $X[i]$.
This value is $M_{\ell,i \bmod d_\ell}[j]$ where $j = \lfloor\frac{i}{d_1}\rfloor$ for $\ell=1$,
and $j=r_\ell \lfloor{\frac{i}{d_{\ell}}\rfloor} \bmod \frac{d_{\ell-1}}{d_{\ell}}$ for $\ell > 1$.
We know the starting position of $M_{\ell,i \bmod d_\ell}$ in $\M$, so we just use \cref{fct:rle}
to retrieve $X[i]$ in $\Oh(\log k)$ time.

If the query algorithm completes all the $s$ iterations without exiting,
then $X[i]$ is guaranteed to be the majority symbol of $\C_{d_s}(i)$.
Thus, we retrieve $\M[i \bmod d_s]$, which takes $\Oh(\log k)$ time due to \cref{fct:rle}.
This symbol is either the majority of $\C_{d_s}(i)$ (guaranteed to be equal to $X[i]$)
or a blank symbol. 
In the latter case, we know that the class $\C_{d_s}(i)$ is uniform,
so we return a sentinel $\#$.
The overall running time is $\Oh(s + \log k) = \Oh(\log n)$.

\subsection{Updates}\label{sec:updates}
If $d_s \mid p$, then the we do not need to update.
Otherwise, we extend our data structure with $p_{s+1}:=p$ and $d_{s+1} := \gcd(p_{s+1},d_s)$.

First, we shall detect non-uniform classes modulo $d_{s+1}$.
\cref{fct:nnk,fct:ksmn} imply that a class $\C_{d_{s+1}}(i)$ is non-uniform if and only if 
for some $i'\equiv i \pmod{d_{s+1}}$ there is a class $\C_{d_s}(i')\in \NNN_{s}$ or a position $0\le i' < n-p_{s+1}$
with $X[i']\ne X[i'+p_{s+1}]$. Hence, we scan the non-uniform classes modulo $d_s$ and the mismatch information for $p_{s+1}$
grouping the entries by $i' \bmod d_{s+1}$.

For each $\C_{d_{s+1}}(i)\in \NNN_{s+1}$ our goal is to construct the underlying multiset
and the corresponding majority string $M_{s+1,i}$.
For every enclosed $C\in \NNN_s$, we use the underlying multiset to deduce the majority symbol
and store it at an appropriate position of $M_{s+1,i}$.
Next, for every mismatch $X[i']\ne X[i'+p_{s+1}]$
we store $X[i']$ in $M_{s+1,i}$ as the majority symbol of $\C_{d_s}(i')$ if the class is uniform.
Symmetrically, if $\C_{d_s}(i'+p_{s+1})$ is uniform, its majority symbol $X[i'+p_{s+1}]$ is placed in $M_{s+1,i}$.
Now, the remaining symbols of $M_{s+1,i}$ are guaranteed to be equal to their both neighbours.
This lets us retrieve the run-length encoding of $M_{s+1,i}$.
To compute the multiset of $\C_{d_{s+1}}(i)$, we aggregate the data from the enclosed non-uniform classes,
and use the majority string to retrieve for each symbol $a$ the total size of enclosed classes uniform in $a$.

Finally, we note that the data structures of \cref{fct:rle,fct:memb}
can be reconstructed in $\Oh(k\log n)$ time (which is sufficient to build them from scratch).

\section{Omitted Proofs from \cref{sec:det}}
\fctretr*
	\begin{proof}
	First, note that $\C_d(P,i)\cup \C_d(T',i)$ is uniform if and only if $\C_d(P,i)$
	is uniform and the mismatch information for neither stored $k$-mismatch occurrence contains $(i',P[i'], T[i''])$
	with $i'\equiv i \pmod{d}$. Hence, Bob can easily decide which positions of $P_\#$ and $T'_\#$ contains sentinel symbols.
	
	Next, we shall prove that Bob can retrieve $P_\#[i]=P[i]$ if $\C_d(P,i)\cup \C_d(T',i)$ is non-uniform.
	If $\C_d(P,i)$ is non-uniform, we can simply use the data structure of \cref{lem:key}.
	Otherwise, mismatch information for the $k$-mismatch occurrence of $P$ as a prefix or as a suffix of $T'$ contains $(i',P[i'],T[i''])$ for some $i'\equiv i \pmod{d}$. 
	The class $\C_d(P,i)$ is uniform, so Bob learns $P[i]=P[i']$.
	
	Finally, consider retrieving $T'_\#[i]=T'[i]$.
	If $i < n$, then Bob can use $P[i]$ and the mismatch information of the $k$-mismatch occurrence of $P$ as a prefix of $T'$,
	which might contain $(i, P[i], T'[i])$.
	On the other hand, for $i'\ge n$ he can use $P[i-\ell'+\ell]$ and the mismatch information of the $k$-mismatch occurrence of $P$ as a suffix of $T'$.
	\end{proof}

\fctequiv*
	\begin{proof}
	Suppose that $P$ has a $k$-mismatch occurrence at position $j$ of $T$.
	By definition of $\PP$, we conclude that $\ell \le j \le \ell'$ and $j \equiv \ell \pmod{d}$.
	Consequently, $P$ has a $k$-mismatch occurrence at position $j':=j-\ell$ of $T'$ satisfying $d \mid j'$.
	If $P[i]\ne T'[j'+i]$, then the class $\C_d(P,i)\cup \C_d(T',i)$ contains at least two distinct elements,
	so $P_\#[i] = P[i]\ne T'[j'+i]=T'_\#[j'+i]$.
	Otherwise, $P_\#[i]=T'_\#[j+i]$, with both symbols equal to $\#_{i\bmod d}$ or $P[i]=T'[i+j]$.
	Thus, $\MM(P,T[j,\ldots,j+n-1])=\MM(P_\#,T'_\#[j-\ell,\ldots,j-\ell+n-1])$ and, in particular,
	$P_\#$ has a $k$-mismatch occurrence at position $j'$ of $T'_\#$.
	
	For the converse proof, suppose $P_\#$ has a $k$-mismatch occurrence at position $j'$ of $T'_\#$,
	with $j'=j-\ell$. If $P_\#[i]=T'_\#[j'+i]$, then either $P[i]=P_{\#}[i]=T'_\#[j'+i]=T'[j'+i]$,
	or $P_{\#}[i]=T'_\#[j'+i]=\#_{i\bmod p}$. In the latter case, we conclude that $d \mid j'$
	and  $\C_d(P,i)\cup \C_d(T',i)$ is uniform, so $P[i]=T'[j'+i]$.
	Hence, $P$ indeed has a $k$-mismatch occurrence at position $j'$ of $T'$,
	i.e., at position $j$ of $T$.
	\end{proof}

\section{Omitted Proofs from \cref{sec:rand}}
In order to prove \cref{thm:buffer}, we first give a new efficient algorithm for the streaming $k$-mismatch problem,
assuming we can maintain a read-only copy of the latest $n$ symbols in the text.  The algorithm we give matches the running time of the classic offline algorithm of~\cite{ALP:2004} despite having guaranteed worst-case performance per arriving symbol and using small additional space on top of that, needed to store the last $n$ symbols of the text. 

\thmram*
\begin{proof}
First, we briefly sketch our strategy.
If the pattern has an $\Oh(k)$-period $\Oh(k)$, then it suffices to apply \cref{lem:perpref,thm:small-period}.
Otherwise, we can still use these results to filter the set of positions where a $P$ has $k$-mismatch occurrence in $T$,
leaving at most one candidate for each $k$ subsequent positions. We use sketches to verify candidates,
with the \emph{tail trick} (see~\cref{lem:simpler}) employed to avoid reporting occurrences with a delay.

More formally, while processing the pattern, we also construct a decomposition $P=P_HP_T$ into the \emph{head} $P_H$
and the \emph{tail} $P_T$ with $|P_T|=2k$, and we compute the sketch $\sk_k(P_H)$ (using \cref{cor:streamsketch}).
We also use \cref{lem:perpref} with $p=k$ and $d=2k+1$, which results in a prefix $Q$ of $P$.
This way, $P$ is processed in $\Oh(\sqrt{k\log k}+\log^2 n)$ time per symbol using $\Oh(k\log n)$ bits of working space.

If $|Q|> |P_H|$, then $P$ has a $4k$-period $p'\le k$, and we may just use \cref{thm:small-period} to report the $k$-mismatch occurrences of $P$ in $T$.
Otherwise, $Q$ has a $(2k+1)$-period $p'\le k$, but no $2k$-period $p''\le k$.
In particular, due to \cref{fct:per}, the $k$-mismatch occurrences of $Q$ are located more than $k$ positions apart.

Processing the text $T$, we apply \cref{cor:streamsketch} so that $\sk_k(T[0,\ldots,i])$ and $\sk_k(T[0,\ldots,i-|P_H|])$
can be efficiently retrieved while $T[i]$ is revealed. Additionally, we run the streaming algorithm of \cref{thm:small-period},
delayed so that a $k$-mismatch occurrence of $Q$ starting at position $i-|P_H|+1$ is reported while $T[i]$ is revealed.
These components take $\Oh(k\log n)$ bits of space and use $\Oh(\sqrt{k\log k}+\log n)$ time per symbol of $T$.

If \cref{thm:small-period} reports a $k$-mismatch occurrence of $Q$ at positions $i-|P_H|+1$,
we shall check if $P_H$ also has a $k$-mismatch occurrence there.
For this, we retrieve $\sk_k(T[0,\ldots,i])$ and $\sk_k(T[0,\ldots,i-|P_H|])$ (using \cref{cor:streamsketch}),
compute $\sk_k(T[i-|P_H|+1,\ldots,i])$ (using \cref{lem:manip}), and compare it to $\sk_k(P_H)$ (using \cref{cor:sketches}).
This process $\Oh(k \log^3 n)$ time in total, and it can be run while the subsequent $k$ symbols of $T$ are revealed.
If $P_H$ has a $k$-mismatch occurrence at position $i-|P_H|+1$, it results in $\MM(P_H, T[i-|P_H|+1,\ldots,i])$.
Then, we naively compute $\MM(P_T, T[i+1,\ldots,i+2k])$.
Thus, as soon as $T[i+2k]$ is revealed, we know if $\HD(P, T[i-|P_H|+1,\ldots,i+2k])\le k$,
and we can retrieve the mismatch information in $\Oh(k)$ time upon request.

Since $Q$ does not have any $2k$-period $p''\le k$, we are guaranteed that at most two $k$-mismatch occurrences of $Q$
are processed in parallel.
\end{proof}

We are now able to prove \cref{thm:buffer}.

\thmbuffer*
\begin{proof}
We partition $T$ into consecutive blocks of length $b=\frac14\min(\Delta,|P|)$.
The buffer shall be implemented as an \emph{assembly line} of components, each responsible 
for $k$-mismatch occurrences of $P$ starting within a single block, called the \emph{relevant occurrences} in what follows.

The choice of $b$ guarantees that storing $\Oh(1)$ components suffices at any time.
Moreover, the component needs to output the $k$-mismatch occurrences 
$\Theta(|P|)$ ticks after it is fed with the last relevant $k$-mismatch occurrence,
which leaves plenty of time for reorganisation.
Consequently, its lifetime shall consist of three phases:
\begin{itemize}[itemsep=0pt,topsep=5pt,parsep=2pt]
\item \emph{compression}, when it is fed with relevant $k$-mismatch occurrences of $P$ in $T$,
\item \emph{reorganisation}, when it performs some computations to change its structure,
\item \emph{decompression}, when it retrieves the stream of $k$-mismatch information of $P$ in $T$.
\end{itemize}

In the compression phase, we essentially construct the message as described in the proof of \cref{thm:prob-one},
encoding the relevant occurrences of $P$, i.e., the $k$-mismatch occurrences of $P$ in the appropriate fragment of $T$.
The only difference is that we also store the sketches $\sk_k(T[0,\ldots,\ell-1])$
and $\sk_k(T[0,\ldots,\ell'-1])$ corresponding to the leftmost and the rightmost relevant occurrence.

Processing a single relevant occurrence is easily implementable in $\Oh(k \log n)$ time,
dominated by updating the data structure of \cref{lem:key}, which may need to account for a new element of $\PP$. 
Apart from that, we only need to replace the rightmost relevant occurrence and the associated information.

In the decompression scheme, we apply \cref{thm:ram} to find $k$-mismatch occurrences of $P_\#$
in $T'_\#$, as defined in the proof of \cref{thm:prob-one}.
To provide random access to these strings,
we just need the data structure of \cref{lem:key} and the mismatch information for the two extremal relevant occurrences of $P$.
To allow for $\Oh(\log n)$-time access, we organise the mismatch information in two dictionaries:
for each mismatch $(i,P[i],T[i'])$, we store $T[i']$ in a dictionary indexed by $i'$,
and $P[i]$ in a dictionary indexed by $i \bmod d$. 
These dictionaries can be constructed in $\Oh(k)$ time (during the reorganisation phase).
As a result, we can use \cref{thm:ram} to report the occurrences of $P$ in $T$ in the claimed running time, along with the mismatch information.
The reorganisation phase is also used to process the first $n$ symbols of $T'_\#$.

Retrieving  the corresponding sketches $T[0,\ldots,i-1]$ is more involved, and here is where the reorganisation phase becomes useful again.
Based on the sketches $\sk_k(T[0,\ldots,\ell-1])$ and $\sk_k(T[0,\ldots,\ell'-1])$,
we can compute $\sk_k(T[\ell,\ldots,\ell'-1])=\sk_k(T'[0,\ldots,\ell'-\ell-1])$ applying \cref{lem:manip}.
Consider the point-wise difference $D$ of strings $T'$ and $T'_\#$.
Observe that \cref{cor:streamsketch} lets us transform $\sk_k(T'[0,\ldots,\ell'-\ell-1])$ into $\sk_k(D[0,\ldots,\ell'-\ell-1])$
using random access to $T'_\#$ for listing mismatches.
Next, we observe that $D$ is an integer power of a string of length $d$, so \cref{lem:manip} can be used to retrieve the sketch $\sk_k(D[0,\ldots,d-1])$
of its root.

During the decompression phase, we maintain the data structure of \cref{cor:streamsketch} transforming $\sk_k(D[0,\ldots,\ell'-\ell-1])$
back to $\sk_k(T'[0,\ldots,\ell'-\ell-1])$. 
When a $k$-mismatch occurrence of $P_\#$ is reported at position $i$ of $T'_\#$, we report a $k$-mismatch occurrence of $P$ at position $\ell+i$.
At the same time, we compute $\sk_k(D[i,\ldots,\ell'-\ell-1])$ (using \cref{lem:manip}; we are guaranteed that $d \mid i$) and retrieve $\sk_k(T'[0,\ldots,i-1]D[i,\ldots,\ell'-\ell-1])$
(from \cref{cor:streamsketch}). Finally, we construct $\sk_k(T[\ell,\ldots,\ell+i-1])$ and $\sk_k(T[0,\ldots,\ell+i-1])$ using \cref{lem:manip}.

In the compression phase, we need to update anything only when the component is fed with a relevant $k$-mismatch occurrence,
and processing such an occurrence takes $\Oh(k \log n)$ time.
The reorganisation time is $\Oh(n (\sqrt{k \log k}+ \log^3 n))$, which is $\Oh(\sqrt{k \log k}+ \log^3 n)$ time per tick.
In the decompression phase, the running time is $\Oh(\sqrt{k \log k}+ \log^3 n)$ per tick (due to \cref{thm:ram,cor:streamsketch})
plus $\Oh(k\log^2 n)$ time whenever a $k$-mismatch occurrence of $P$ is reported.
\end{proof}

\bibliographystyle{plainurl}

\end{document}